\let\MyCaption\@makecaption
\let\@makecaption\MyCaption
\newtheorem{theorem}{Theorem}[section]
\newtheorem{lemma}[theorem]{Lemma}
\newtheorem{proposition}[theorem]{Proposition}
\newtheorem{corollary}[theorem]{Corollary}
\newtheorem*{claim}{Claim}
\theoremstyle{definition}
\newtheorem{remark}[theorem]{Remark}
\title{A Cost-Scaling Algorithm for\\ Minimum-Cost Node-Capacitated Multiflow Problem}
\author{Hiroshi HIRAI and Motoki IKEDA\\
Department of Mathematical Informatics,\\
Graduate School of Information Science and Technology,\\
The University of Tokyo, Tokyo, 113-8656, Japan\\
\texttt{\normalsize \{hirai,motoki\_ikeda\}@mist.i.u-tokyo.ac.jp}}
\begin{document}

\maketitle

\begin{abstract}
In this paper, we address the minimum-cost node-capacitated multiflow problem in an undirected network.
For this problem, Babenko and Karzanov (2012) showed strongly polynomial-time solvability via the ellipsoid method.
Our result is the first combinatorial weakly polynomial-time algorithm for this problem.
Our algorithm finds a half-integral minimum-cost maximum multiflow in $O(m \log(nCD)\mathrm{SF}(kn,m,k))$ time,
where $n$ is the number of nodes, $m$ is the number of edges, $k$ is the number of terminals,
$C$ is the maximum node capacity, $D$ is the maximum edge cost,
and $\mathrm{SF}(n',m',\eta)$ is the time complexity of solving the submodular flow problem in a network of $n'$ nodes, $m'$ edges,
and a submodular function with $\eta$-time-computable exchange capacity.
Our algorithm is built on discrete convex analysis on graph structures and the concept of reducible bisubmodular flows.
\end{abstract}

Keywords: minimum-cost node-capacitated multiflow, discrete convex analysis, cost-scaling method, submodular flow, reducible bisubmodular flow.

\section{Introduction}
The {\em maximum free multiflow problem} and its variations/extensions
are one of well-studied subjects in combinatorial optimization,
specifically, theory of multiflows and disjoint paths; see
\cite[Part VII]{Schrijver2003Combinatorial}.
By ``free" we mean that flows can connect any pairs of terminals freely.
The problem we addressed in this paper is the
{\em minimum-cost node-capacitated free multiflow problem (MNMF)} in
an undirected network;
the formal definition will be given later.
The \emph{edge-capacitated} version of this problem,
a special case of MNMF, is classical in the literature.
Particularly, for the zero-cost case (EMF)---the problem of finding
a free multiflow of the maximum total flow-value in an
edge-capacitated undirected network,
the classical result by Lov\'{a}sz~\cite{Lovasz1976some} and
Cherkassky~\cite{Cherkasski1977solution} says that
the maximum is attained by a half-integral multiflow and can be obtained
combinatorially in strongly polynomial time.
Karzanov~\cite{Karzanov1979minimum} extended the half-integrality property
to a cost setting (MEMF).
Later, he showed in \cite{Karzanov1994Minimum} that, with help of the
ellipsoid method,
a half-integral minimum-cost maximum multiflow can be obtained in
strongly polynomial time.
Goldberg and Karzanov~\cite{Goldberg1997Scaling} gave two
`ellipsoid-free' combinatorial weakly polynomial-time algorithms for
MEMF based on capacity and cost scaling.

The {\em node-capacitated} maximum free multiflow problem (NMF),
i.e., the problem of finding a maximum free multiflow in a
node-capacitated undirected network,
was first considered by Garg, Vazirani, and
Yannakakis~\cite{Garg2004Multiway}
for approximating
the {\em node-multiway cut problem}; see \cite[Section
19.3]{Vazirani2003Approximation}.
%
Pap~\cite{Pap2007Some,Pap2008Strongly} showed the half-integrality property
and the strongly polynomial-time solvability of NMF using the ellipsoid method.
For an  `ellipsoid-free' approach,
Babenko and Karzanov~\cite{Babenko2008Scaling} developed a
combinatorial weakly polynomial-time algorithm.
For a cost setting (MNMF), Babenko and
Karzanov~\cite{Babenko2012Min}
established the half-integrality
and the strongly polynomial-time solvability via the ellipsoid method.

Recently the first author~\cite{Hirai2015L,Hirai2018Dual,Hirai2018L}
of this paper
initiated a unified approach to design efficient algorithms
for classes of multiflow and network design problems.
His approach regards the dual objective functions of multiflow problems
as discrete convex functions, {\em L-convex functions},
on certain graph structures~\cite{Hirai2018L}, and applies
techniques of {\em Discrete Convex Analysis
(DCA)}~\cite{Murota2003Discrete} previously
developed for discrete convex functions on integer lattice~$\mathbb{Z}^n$.
By this approach, called {\em DCA beyond~$\mathbb{Z}^n$},
he developed a combinatorial weakly polynomial-time algorithm for
MEMF~\cite{Hirai2015L},
and a combinatorial strongly polynomial-time algorithm for
NMF~\cite{Hirai2018Dual}.
The latter algorithm uses as a subroutine an algorithm of solving
the \emph{submodular flow problem}
with a base polyhedron defined by a small number of inequalities.
See also the survey~\cite{Hirai2017Discrete} for more details.

In this paper, we continue this line of research for designing
efficient multiflow algorithms via DCA beyond $\mathbb{Z}^n$.
Our main result is the first combinatorial
weakly polynomial-time algorithm for MNMF.

\begin{theorem}
	\label{thm:main}
	A half-integral minimum-cost maximum multiflow
	can be obtained in $O(m\cdot\log(nCD)\mathrm{SF}(kn,m, k))$ time,
	where $n$ is the number of nodes, $m$ is the number of edges, $k$ is the number of terminals,
	$C$ is the maximum node-capacity, and $D$ is the maximum edge-cost.
\end{theorem}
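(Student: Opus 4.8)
The plan is to recast MNMF via LP-duality and to recognize the dual objective as an \emph{L-convex function} on a graph structure built from the $k$ terminals, following the framework of DCA beyond $\mathbb{Z}^n$~\cite{Hirai2018L,Hirai2018Dual}; the theorem then reduces to constructing and analyzing a cost-scaling algorithm that minimizes this L-convex function. First I would write down the LP relaxation of MNMF and its dual: the dual assigns to each node a ``potential'' lying in a certain discrete structure $\Gamma$ associated with the terminal set, and the dual objective is a sum of edge-length terms and node-capacity terms. I would verify that this objective is L-convex on $\Gamma^{V}$, so that it has an integral minimizer and is amenable to the steepest-descent paradigm, and that an integral optimal dual yields, by complementary slackness together with the known half-integrality of the primal polyhedron, a half-integral minimum-cost maximum multiflow.

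Next I would set up the cost-scaling loop: replace the edge costs by scaled costs $\lceil D/2^{s}\rceil$ and run $s$ from the coarsest scale down to $s=0$, maintaining at each scale an optimal integral dual $p^{(s)}$ of the scaled problem. Passing from scale $s$ to scale $s+1$, the rescaled $p^{(s)}$ is near-optimal for the finer problem; a \emph{proximity theorem} for L-convex minimization bounds the displacement to a finer optimum and thereby bounds the number of steepest-descent updates needed to refine $p^{(s)}$ to $p^{(s+1)}$ by $O(m)$. Each update moves $p$ along a best $\{0,1\}$-direction.

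The crux is the \emph{direction-finding} subproblem solved in each steepest-descent iteration. As is typical for L-convex minimization, the minimal steepest direction is characterized by a submodular-type optimization, and here it takes the form of a \emph{bisubmodular flow} problem on a network derived from the current potential. The key device is that this bisubmodular flow is \emph{reducible}: it can be solved as an ordinary submodular flow on an auxiliary network with $O(kn)$ nodes and $O(m)$ edges whose base polyhedron is described so that exchange capacities are computable in $O(k)$ time; this accounts for the factor $\mathrm{SF}(kn,m,k)$. Combining $O(m)$ direction-finding solves per phase with $O(\log(nCD))$ phases gives the claimed $O(m\log(nCD)\,\mathrm{SF}(kn,m,k))$ running time, and a final primal-recovery step outputs the half-integral optimal multiflow.

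I expect the principal obstacles to be: (i) establishing the proximity theorem in this beyond-$\mathbb{Z}^n$ setting with a bound sharp enough to control the per-phase iteration count; and (ii) proving reducibility of the bisubmodular-flow direction-finding problem to an ordinary submodular flow on a network of the stated size $O(kn)$, $O(m)$ with an $O(k)$-time exchange-capacity oracle --- this reduction is what makes the algorithm genuinely combinatorial with the stated complexity, and pinning down the network parameters (rather than a weaker bound such as $O(km)$ edges) is the delicate part.
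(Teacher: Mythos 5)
Your outline follows essentially the same route as the paper: LP-duality to an L-convex dual on the terminal-star grid, steepest descent with each direction-finding step solved as a reducible bisubmodular flow (reduced to submodular flow on $O(kn)$ nodes with an $O(k)$-time exchange-capacity oracle), cost scaling with a per-phase $O(m)$ iteration bound from a sensitivity/proximity argument, and a final bisubmodular-flow-based recovery of a half-integral optimal multiflow. The two obstacles you single out are precisely the paper's technical contributions (the sensitivity theorem for the scaled duals and the lifting of the node-flowing polytope to a t-monotone submodular base polytope), so the plan is sound as stated.
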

Here $\mathrm{SF}(n',m',\eta)$ stands for the time complexity of solving the submodular flow problem
on a network of $n'$ nodes and $m'$ edges, 
and the time complexity $\eta$
of computing the exchange capacity of the base polyhedron of the submodular function describing the problem.
By using the push-relabel algorithm by Fujishige and Zhang~\cite{Fujishige1992New},
which solves the submodular flow problem combinatorially in $O(n'^3\eta)$ time,
we can solve MNMF combinatorially in $O(mn^3k^4\log(nCD))$ time.

The proof of Theorem~\ref{thm:main} is outlined as follows.
Our algorithm is designed largely on the basis of the approach in \cite{Hirai2018Dual}.
By sharpening the LP-duality,
the dual of MNMF is formulated as an optimization problem
over a grid structure $\mathbb{G}^n$,
where $\mathbb{G}$ is an amalgamation of $k$ planar grid-graphs as in Figure~\ref{fig:twisted_grid}.
The objective function is shown to be an L-convex function on
$\mathbb{G}^n$.
An L-convex function can be minimized by a generic descent algorithm, called
the {\em steepest descent algorithm (SDA)}.
An abstract description of SDA is quite simple:
Find a ``local" minimizer (called a {\em steepest descent
direction}) around the current point,
update the current point if the objective value decreases,
and repeat until the process gives no improvement.
For NMF, \cite{Hirai2018Dual} showed that SDA can be efficiently implemented by
using a submodular flow algorithm
and brings a (strongly) polynomial-time algorithm.
In extending this approach to MNMF, there are two technical issues
that should be resolved:
\begin{itemize}
\item How do we find a steepest descent direction efficiently?
Additionally, even if an optimal dual solution is obtained by SDA,
how do we recover an optimal multiflow?
We here need to deal with a more general L-convex function than that
in \cite{Hirai2018Dual}.
\item Even if a steepest descent direction is available in each step,
the number of iterations of SDA depends on the edge costs;
therefore SDA gives only a pseudopolynomial-time algorithm.
\end{itemize}
The technical contribution in showing Theorem~\ref{thm:main}
is to resolve the two issues.

\begin{figure}[t]
\centering
\includegraphics[scale=0.6]{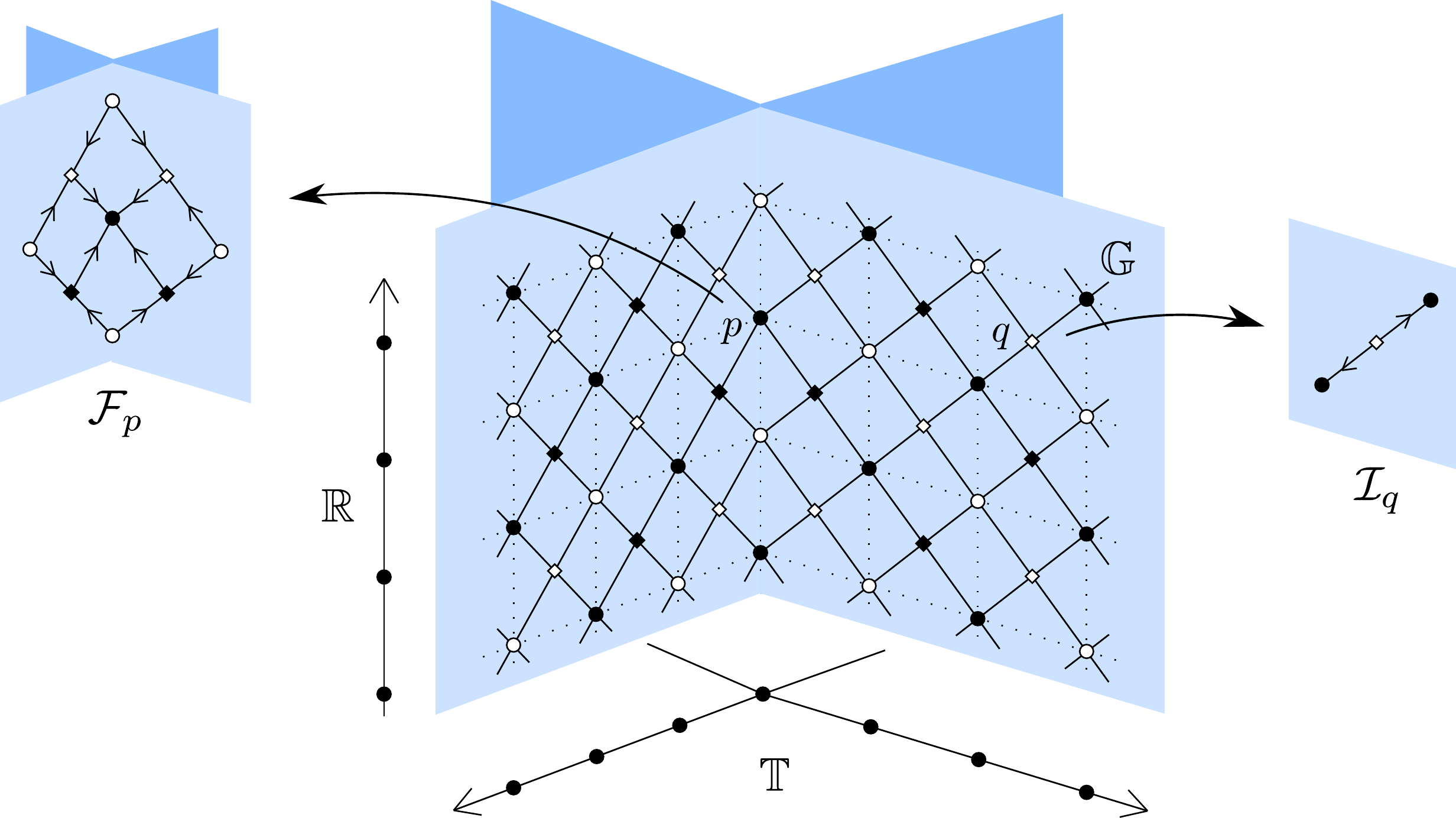}
\caption{Grid $\mathbb{G}\subseteq \mathbb{T}\times\mathbb{R}$ given in Section~\ref{subsec:l_conv} ($k=4$).}
\label{fig:twisted_grid}
\end{figure}
 
For the first issue, as was suggested in~\cite{Hirai2018Dual},
we show that (a) checking the optimality of a dual solution,
(b) finding a steepest descent direction, and
(c) recovery of an optimal multiflow
are reduced to the {\em bisubmodular flow (feasibility) problem}---the problem of
finding a (fractional) bidirected flow whose boundary belongs
to a given bisubmodular polyhedron.
Although the bisubmodular flow problem itself is a natural problem,
it has not been well-studied so far.
Particularly, we need to deal with the {\em node-flowing polytope},
which is a bisubmodular polyhedron that represents
the node-capacity constraint and the flow-conservation law on a node.
\cite{Hirai2018Dual} showed that the node-flowing polytope for a node
of degree 3
is represented as the projection of the base polytope of a submodular function
on a 6-element set.
By this property and a perturbation technique specific to NMF,
\cite{Hirai2018Dual} reduced the bisubmodular flow problem for NMF
to a submodular flow (feasibility) problem on a larger set.
In this paper, we prove in Lemma~\ref{lem:node_flowing} that such a lifting property holds
for the general node-flowing polytope, and
reduce the bisubmodular flow problem for MNMF to
a submodular flow problem (with a small number of inequalities).
For dealing with such a bisubmodular flow problem systematically,
we develop a mini-theory for
the {\em reducible bisubmodular flow problem},
which may be of independent interest.

To overcome the second issue, we combine the cost-scaling with SDA for
improving the pseudopolynomial-time algorithm.
Though the cost-scaling is a standard idea in network flow theory,
for complexity analysis it requires the estimate
of the number of iterations in each scaling phase.
It is known \cite{Hirai2018Dual} (see also \cite{Hirai2017Discrete,Hirai2018L})
that the number of iterations of SDA is bounded by
the distance between the initial point and minimizers.
We prove a sensitivity theorem (Theorem~\ref{thm:sensitivity})
that minimizers of our L-convex function do not vary so much for a small change of an edge cost.
Combining them, we obtain a polynomial bound of iterations of the
cost-scaling SDA, and
complete the proof of Theorem~\ref{thm:main}.

The rest of this paper is organized as follows.
In Section~\ref{sec:preliminaries},
we present preliminary results on
discrete convexity and (reducible) bisubmodular flow.
In Section~\ref{sec:mincost},
we study our multiflow problem MNMF and give the proof of
Theorem~\ref{thm:main}.
Some of proofs are technical and proved in Section~\ref{sec:proofs}.


\section{Preliminaries}
\label{sec:preliminaries}

Let $\mathbb{R}$, $\mathbb{R}_+$, $\mathbb{Z}$, and $\mathbb{Z}_+$
denote the sets of reals, nonnegative reals, integers, and nonnegative integers.
Let $\overline{\mathbb{R}}$ and $\underline{\mathbb{R}}$ denote $\mathbb{R}\cup \{+\infty\}$ and $\mathbb{R}\cup \{-\infty\}$, respectively.
For $x\in \mathbb{R}$, let $\lceil x\rceil$ denote the minimum integer which is not smaller than $x$.
For a finite set $V$, a subset $X\subseteq V$, and a function $g:V\rightarrow \overline{\mathbb{R}}$,
let $g(X)$ denote the sum $\sum_{i\in X} g(i)$.
We sometimes denote the function value $g(i)$ by $g_i$ if no confusion occurs.
For $i\in V$, let $\chi_i:V\rightarrow \mathbb{R}$ be a function defined by
\[
    \chi_i(j):=\begin{dcases*} 1 & if $i=j$,\\ 0 & otherwise\end{dcases*}
    \quad (j\in V).
\]
For $X,Y\in V$ with $X\cap Y=\emptyset$, let $\chi_{X,Y}:=\sum_{i\in X}\chi_i-\sum_{j\in X}\chi_j$.


\subsection{L-convex functions on $\mathbb{G}^n$}
\label{subsec:l_conv}
In this subsection, we briefly introduce a class of discrete convex functions (L-convex functions)
on a certain grid-like graph structure $\mathbb{G}^n$.
Then we describe the steepest descent algorithm for minimization of these functions.
See \cite{Hirai2017Discrete,Hirai2018L} for further details.

To begin with, let us recall the notion of L$^\natural$-convex functions on $\mathbb{Z}^n$, 
where our L-convex functions will be analogously introduced.
A function $h:\mathbb{Z}^n \to \overline{\mathbb{R}}$ is an \emph{L$^\natural$-convex function} 
if it satisfies the discrete midpoint convexity inequality
\begin{equation}\label{eq:dci}
h(p) + h(q) \geq h(\lceil (p + q)/2 \rceil) + h(\lfloor (p + q)/2 \rfloor) \quad (p,q \in \mathbb{Z}^n),
\end{equation}
where  
$\lceil \cdot \rceil$ and $\lfloor \cdot \rfloor$ are 
the rounding-up and -down operators 
that rounds up and down the fractional part of each component, respectively.
See \cite{Murota2003Discrete} for details of L$^\natural$-convex functions.

We are going to introduce a continuous space 
$\mathbb{T} \times \mathbb{R}$ and a discrete subspace $\mathbb{G} \subseteq \mathbb{T} \times \mathbb{R}$,
in which $\mathbb{T} \times \mathbb{R}$ 
has a midpoint operation $(p,q) \mapsto (p+q)/2$ and $\mathbb{G}$ 
has discrete midpoint operations $(p,q) \mapsto \lceil (p + q)/2 \rceil$ and $(p,q) \mapsto \lfloor (p + q)/2 \rfloor$.
Then an L-convex function on $\mathbb{G}^n$ is defined by the discrete midpoint convexity inequality \eqref{eq:dci}
with replacing $\mathbb{Z}^n$ by $\mathbb{G}^n$. 

Let us first define the notion of the infinite $k$-star $\mathbb{T}$, where $k$ is a positive integer.
Informally speaking, the infinite $k$-star 
is a ``continuous star" obtained by gluing $k$ semi-infinite paths at the common endpoint;
see the bottom of Figure~\ref{fig:twisted_grid}.
Consider the set $\mathbb{R}_+\times\{1,2,\dotsc,k\}$ of $k$ copies of $\mathbb{R}_+$,
and define equivalence relation $\sim$ on it
by $(x,s) \sim (x',s')$ if $(x,s)=(x',s')$ or $x = x' = 0$.
Then the quotient space $\mathbb{T} := \mathbb{R}_+ \times \{1,2,\ldots,k\} /\sim$
is called the {\em infinite $k$-star}.
For notational simplicity, the point $(x,s) \in \mathbb{T}$ is also denoted by $x$.
For two points $x = (x,s), x' = (x',s') \in \mathbb{T}$, 
we denote the shortest distance on $\mathbb{T}$ by $\mathrm{dist}(x,x')$, i.e., 
$\mathrm{dist}(x,x')=|x - x'|$ if $s=s'$ and $\mathrm{dist}(x,x')=x+x'$ if $s \neq s'$.
Then $\mathbb{T}$ becomes a metric space with the distance function $\mathrm{dist}(\cdot,\cdot)$.
The {\em midpoint} of two points $x,x' \in \mathbb{T}$ 
is defined as the unique point $u \in \mathbb{T}$ such that $\mathrm{dist}(x,u) = \mathrm{dist}(u,x')$
and $\mathrm{dist}(x,x') = \mathrm{dist}(x,u) + \mathrm{dist}(u,x')$.
The midpoint $u$ is denoted symbolically by $(x+x')/2$.
A point $x =(x,s)$ is called {\em integral} if $\mathrm{dist}(x,0) \in \mathbb{Z}$ and {\em half-integral} 
if $2\mathrm{dist}(x,0) \in \mathbb{Z}$.
Also $x$ is called \emph{proper half-integral} if $x$ is half-integral but not integral.


Second, consider the product $\mathbb{T} \times \mathbb{R}$ of the infinite $k$-star $\mathbb{T}$ 
and the set $\mathbb{R}$ of reals.
Let $\mathbb{G} \subseteq \mathbb{T} \times \mathbb{R}$ be the discrete subset of $\mathbb{T}$ consisting of 
points $(x,y)$ such that both $x$ and $y$ are integral or both $x$ and $y$ are proper half-integral; see Figure~\ref{fig:twisted_grid}.
A point $p=(x,y) \in \mathbb{G}$ is called {\em integral} 
if both $x$ and $y$ are integral.
An integral point $p =(x,y)$ is said to be {\em even} (resp. \emph{odd}) 
if $\lvert x-y\rvert$ is even (resp. \emph{odd}).
In Figure~\ref{fig:twisted_grid}, even and odd points are drawn in black and white circles, respectively,
and black and white diamonds are non-integral points (a meaning of color is described in Section~\ref{subsec:sdd}).
For two points $p=(x,y),q=(x',y')\in\mathbb{T}\times\mathbb{R}$, 
let $\lVert p-q\rVert$ denote $\mathrm{dist}(x,x') + |y-y'|$.
Note that $\lVert p-q\rVert$ is integral for any $p,q\in\mathbb{G}$.
Two points $p,q \in \mathbb{G}$ are said to be {\em adjacent} 
if $\|p - q\| = 1$ and exactly one of $p$ and $q$ is integral.
By joining each of adjacent points by an edge 
the set $\mathbb{G}$ is viewed as a grid on $\mathbb{T}\times\mathbb{R}$.

\begin{figure}[t]
\centering
\includegraphics[scale=0.8]{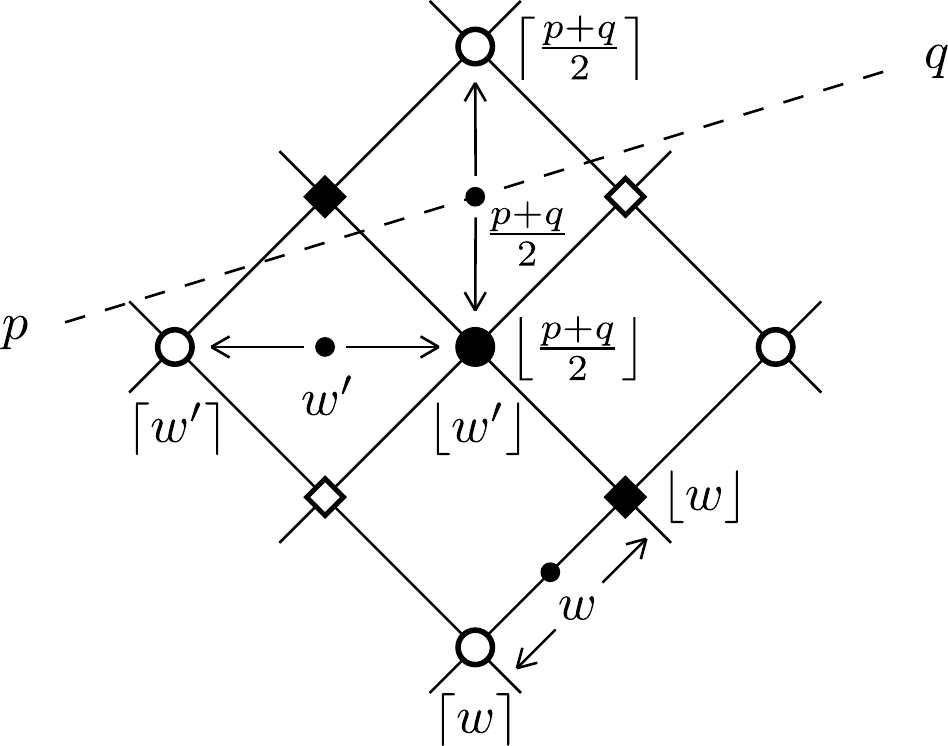}
\caption{Midpoint operation: Here $w=(p+q)/2$ is not necessarily a point in $\mathbb{G}$.
If $w\notin \mathbb{G}$, then $w$ is the midpoint of an edge $uv$ of $\mathbb{G}$
or the center of a square (4-circuit) of $\mathbb{G}$.
For the former case, $(\lfloor w\rfloor,\lceil w\rceil)=(u,v)$ if $u\prec v$.
For the latter case, $\lfloor w\rfloor$ and $\lceil w\rceil$ are the even and odd points, respectively, of the square.}
\label{fig:mid}
\end{figure}

Third, we define discrete midpoint operations on $\mathbb{G}$.
For two points $p = (x,y), q = (x',y')\in \mathbb{G}$, the midpoint 
$(p+q)/2 := ((x+x')/2, (y+y')/2) \in \mathbb{T} \times \mathbb{R}$ 
is not necessarily a point in $\mathbb{G}$.   
As in the case of $\mathbb{Z}^n$, we consider rounding-up and -down 
of this point, which is defined via a partial order $\preceq$ on $\mathbb{G}$.
For each pair $p,q$ of adjacent points in $\mathbb{G}$,
$p \prec q$ is defined to hold if $p$ is even or $q$ is odd.
Then $\preceq$ is defined as the reflexive transitive closure of $\prec$.
Now odd and even points are maximal and minimal elements in the poset $\mathbb{G}$, respectively.
For two point $p,q \in \mathbb{G}$, there is a unique pair of points $u,v \in \mathbb{G}$ such that 
$u \preceq v$ and $(p+q)/2 = (u+v)/2$.
Then $u$ and $v$ are denoted by $\lfloor (p + q)/2 \rfloor$ and $\lceil (p + q)/2 \rceil$; see Figure~\ref{fig:mid}.

Finally, we can define an L-convex function on $\mathbb{G}^n$.
A function $h:\mathbb{G}^n\rightarrow \overline{\mathbb{R}}$ is an \emph{L-convex function}
if it satisfies
\[
h(p)+h(q)\geq h(\lfloor (p+q)/2\rfloor)+h(\lceil (p+q)/2\rceil) \quad (p,q\in \mathbb{G}^n).
\]

It is known that an L-convex function $h$ can be minimized
by the following algorithm, which is called the \emph{steepest descent algorithm (SDA)}:
\begin{enumerate}
\setcounter{enumi}{-1}
\item Find an initial point $p\in \mathbb{G}^n$ with $h(p)<\infty$.
\item Let
\begin{equation}
\label{eq:neighbor}
\begin{aligned}
\mathcal{F}_p&:=\{q\in \mathbb{G}^n\mid q_i\succeq p_i\ (i=1,\dotsc,n)\},\\
\mathcal{I}_p&:=\{q\in \mathbb{G}^n\mid q_i\preceq p_i\ (i=1,\dotsc,n)\}.
\end{aligned}
\end{equation}
\item Find a minimizer $q$ of $h$ over $\mathcal{F}_p\cup\mathcal{I}_p$.
\item If $h(q)<h(p)$ then $q$ is called a \emph{steepest descent direction} at $p$; update $p$ by $q$ and back to step 1.
    Otherwise stop; return $p$.
\end{enumerate}
The left part of Figure~\ref{fig:twisted_grid} shows $\mathcal{F}_p$ of an even integral point $p$,
and the right part shows $\mathcal{I}_q$ of a non-integral point $q$.
If SDA terminates, the output $p^*$ is a minimizer of $h$.

\begin{lemma}[{\cite[Lemma 2.3]{Hirai2017Discrete}}]
\label{lem:SDAopt}
If $p\in \mathbb{G}^n$ with $h(p)<\infty$ is not a minimizer of an L-convex function $h$,
then there exists $q\in \mathcal{F}_p\cup\mathcal{I}_p$ such that $h(q)<h(p)$.
\end{lemma}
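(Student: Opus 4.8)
I would prove the contrapositive by the \emph{nearest-improving-point} argument familiar from the theory of L$^\natural$-convex functions on $\mathbb{Z}^n$, applying the midpoint convexity inequality that defines L-convexity twice: once to confine an improving point to a small neighborhood of $p$, and once to push it into $\mathcal{F}_p\cup\mathcal{I}_p$. Concretely, assume $h(p)<\infty$ and $h(q)\ge h(p)$ for every $q\in\mathcal{F}_p\cup\mathcal{I}_p$; I must show that $p$ minimizes $h$. If it does not, the set of $r\in\mathbb{G}^n$ with $h(r)<h(p)$ is nonempty, so I pick such an $r$ minimizing $\lVert p-r\rVert=\sum_{i=1}^n\lVert p_i-r_i\rVert$, which is a positive integer since $r\neq p$. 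Setting $s:=\lfloor(p+r)/2\rfloor$ and $t:=\lceil(p+r)/2\rceil$, L-convexity yields $h(s)+h(t)\le h(p)+h(r)<2h(p)$, so $s,t$ lie in the effective domain of $h$ and $\min\{h(s),h(t)\}<h(p)$.

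The first step is to show $\lVert p_i-r_i\rVert\le1$ for every $i$. The only geometric fact needed is that $\lfloor w\rfloor$ and $\lceil w\rceil$ lie within distance $\tfrac12$ of $w$ for every $w\in\mathbb{T}\times\mathbb{R}$, which is immediate from the edge-midpoint/square-center description of the rounding operation in Figure~\ref{fig:mid}. Since the midpoint $(p_i+r_i)/2$ is at distance $\tfrac12\lVert p_i-r_i\rVert$ from $p_i$ and $\lVert\cdot\rVert$ is integer-valued on $\mathbb{G}$, this gives $\lVert p_i-s_i\rVert,\lVert p_i-t_i\rVert\le\lceil\tfrac12\lVert p_i-r_i\rVert\rceil$, which is strictly smaller than $\lVert p_i-r_i\rVert$ whenever $\lVert p_i-r_i\rVert\ge2$. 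Hence if $\lVert p_j-r_j\rVert\ge2$ for some $j$, then $\lVert p-s\rVert,\lVert p-t\rVert\le\lVert p-r\rVert-1$, and whichever of $s,t$ has the smaller $h$-value would be an improving point strictly closer to $p$ than $r$, contradicting the choice of $r$.

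The second and crucial step is to show $s\in\mathcal{I}_p$ and $t\in\mathcal{F}_p$. By the first step this reduces, coordinate by coordinate, to the combinatorial claim that $\lVert a-b\rVert\le1$ for $a,b\in\mathbb{G}$ implies $\lfloor(a+b)/2\rfloor\preceq a$ and $\lceil(a+b)/2\rceil\succeq a$ in the poset $\mathbb{G}$. I would check this by enumerating the positions of $b$ relative to $a$: $b=a$; $a,b$ adjacent (so $(a+b)/2$ is the midpoint of the edge $ab$, and the two roundings are $a$ and $b$ ordered by $\prec$); $a,b$ both integral at distance $1$ (so they are of opposite parity and are the two integral corners of the square centered at $(a+b)/2$, and again the roundings are $\{a,b\}$ in the right order); and $a,b$ both proper half-integral at distance $1$ (so they are the two non-integral corners of such a square, and the roundings are its even and odd corners, each adjacent to $a$, giving $\lfloor(a+b)/2\rfloor\prec a\prec\lceil(a+b)/2\rceil$). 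This case analysis — tracking the rounding operation against the order $\preceq$, which is neither total nor a lattice order — is the part I expect to require the most care. Granting it, $s\in\mathcal{I}_p$ and $t\in\mathcal{F}_p$ both lie in $\mathcal{F}_p\cup\mathcal{I}_p$, so $h(s),h(t)\ge h(p)$, contradicting $\min\{h(s),h(t)\}<h(p)$. Hence $p$ minimizes $h$, the contrapositive of the lemma.
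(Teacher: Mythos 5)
Your proof is correct. The paper itself does not prove this lemma --- it imports it from \cite[Lemma 2.3]{Hirai2017Discrete} --- and your argument is the natural nearest-improving-point proof one would expect there: choose $r$ with $h(r)<h(p)$ at minimum distance, use the defining midpoint-convexity inequality to force $\lVert p_i-r_i\rVert\le 1$ in every coordinate (your bound $\lVert p_i-\lfloor(p_i+r_i)/2\rfloor\rVert\le \tfrac12\lVert p_i-r_i\rVert+\tfrac12$ together with integrality of $\lVert\cdot\rVert$ on $\mathbb{G}$ is valid, since roundings lie within distance $\tfrac12$ of the midpoint), and then check coordinatewise that $\lVert a-b\rVert\le1$ gives $\lfloor(a+b)/2\rfloor\preceq a\preceq\lceil(a+b)/2\rceil$, so that $\lfloor(p+r)/2\rfloor\in\mathcal{I}_p$ and $\lceil(p+r)/2\rceil\in\mathcal{F}_p$ contradict the assumed local optimality. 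Your case analysis in the second step also survives the degenerate configurations at the origin of $\mathbb{T}$ (where the midpoint of a vertical pair at $x=0$, or of two half-integral points on different branches, is the center of several squares): the unique comparable pair realizing that midpoint is still the one you name, so the orderings you claim hold there as well.
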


%
%
%

Define a kind of $\ell_\infty$-distance on $\mathbb{G}^n$ by
\begin{equation}
\label{eq:dist_on_grid}
    \lVert p - q \rVert:=\max_{1\leq i\leq n} \lVert p_i-q_i \rVert\quad (p,q\in \mathbb{G}^n).
\end{equation}
The number of iterations of SDA is bounded by $\lVert\cdot\rVert$ as below.

\begin{theorem}[{\cite[Theorem 4.3]{Hirai2018L}}]
\label{thm:SDA}
Let $\mathrm{opt}(h)$ be the set of minimizers of $h$, and let $p$ be the initial point of SDA.
If $\mathrm{opt}(h)\neq\emptyset$, then SDA terminates
after at most $\min_{q\in \mathrm{opt}(h)}\lVert p-q\rVert +2$ iterations.
\end{theorem}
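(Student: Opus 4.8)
The plan is to reduce everything to a one-step proximity estimate: if the current point $p$ is not a minimizer and $q$ is a steepest descent direction at $p$, then the distance $\mu(p):=\min_{r\in\mathrm{opt}(h)}\lVert p-r\rVert$ from $p$ to the minimizer set drops by at least one when we pass to $q$. Granting this, the theorem follows at once. By Lemma~\ref{lem:SDAopt}, SDA never stops before it reaches a minimizer, so each iteration is a genuine update; since $\mu$ strictly decreases along the iterates, a minimizer is reached within $\mu(p)$ iterations, and one further iteration is spent verifying that no improvement is possible. The fact that the bound is $\min_{q\in\mathrm{opt}(h)}\lVert p-q\rVert+2$ rather than $+1$ is then attributed to one extra iteration at which the proximity estimate can weaken by a unit, as discussed below.

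Before proving the proximity estimate I would fix the local structure. For the current point $p$, parametrize a point $q\in\mathcal F_p$ by recording, coordinatewise, the single poset-steps of $q_i$ above $p_i$ in $\mathbb G$; with this parametrization the restriction of $h$ to $\mathcal F_p$ becomes an L$^\natural$-convex (in particular submodular) function on a small integer box, which one verifies by specializing the discrete midpoint convexity of $h$, and similarly for $\mathcal I_p$. Hence $h$ attains its minimum over $\mathcal F_p\cup\mathcal I_p$, a steepest descent direction is well defined, and we may take $q$ to be the minimal minimizer in whichever of $\mathcal F_p$, $\mathcal I_p$ realizes the minimum. For the estimate itself, pick $r\in\mathrm{opt}(h)$ with $\lVert p-r\rVert=\mu(p)$, and build an intermediate point $p^{\circ}$ by moving $p$ one unit toward $r$ along a geodesic of $\mathbb G$ in each coordinate; then $p^{\circ}\in\mathcal F_p\cup\mathcal I_p$ and $\lVert p^{\circ}-r\rVert\le\mu(p)-1$. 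Applying discrete midpoint convexity to $p$ and a suitable translate of $r$ gives $h(p^{\circ})\le h(p)$, hence $h(q)\le h(p^{\circ})\le h(p)$, so SDA does take the step $p\to q$; a second application of discrete midpoint convexity, now pushing $r$ one step toward $q$, produces a minimizer $r''$ with $\lVert q-r''\rVert\le\mu(p)-1$, which is exactly the claimed estimate.

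The main obstacle is precisely the construction of $p^{\circ}$ and the compatibility checks around it: on $\mathbb Z^n$ this is coordinatewise clamping toward $r$, but on $\mathbb G^n$ the ``one step toward $r$'' must be taken in the infinite $k$-star metric on $\mathbb T$, and one must verify that it respects the rounding operators $\lfloor\cdot\rfloor,\lceil\cdot\rceil$ appearing in the midpoint convexity of $h$ and that it lands inside $\mathcal F_p$ or $\mathcal I_p$; I expect the delicate case to be a geodesic of $\mathbb T$ passing through the center of the star, where the midpoint behaves differently from the Euclidean picture. Finally, because $\mathbb G$ is bipartite between integral and proper half-integral points and the poset $\mathbb G$ has height two, there can be one iteration at which the step only switches the ``parity sublattice'' of $p$ and the estimate degrades to $\mu(q)\le\mu(p)$; together with the terminating non-improving iteration this accounts for the additive $2$, completing the proof.
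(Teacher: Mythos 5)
Theorem~\ref{thm:SDA} is not proved in this paper at all---it is imported from \cite[Theorem~4.3]{Hirai2018L}---so your argument has to stand on its own, and as written it does not. The entire weight rests on the one-step proximity estimate (that $\mu(p):=\min_{r\in\mathrm{opt}(h)}\lVert p-r\rVert$ drops by one per accepted step, except possibly once), and that estimate is asserted rather than proved. Two concrete failures. First, your intermediate point $p^{\circ}$, obtained by moving $p$ one unit toward $r$ \emph{in every coordinate}, need not lie in $\mathcal{F}_p\cup\mathcal{I}_p$: if $r_i\succ p_i$ in some coordinates and $r_j\prec p_j$ in others, the coordinatewise step is neither $\succeq p$ in all coordinates nor $\preceq p$ in all coordinates, so the comparison $h(q)\le h(p^{\circ})$ with the steepest descent direction $q$ is vacuous. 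Any correct version (as in the $\mathbb{Z}^n$ proximity arguments) must split the move into an $\mathcal{F}$-side step (only the coordinates where $r$ lies above $p$) and an $\mathcal{I}$-side step, and this changes the bookkeeping that is supposed to yield the $+2$. Second, the inequalities $h(p^{\circ})\le h(p)$ and the production of a minimizer $r''$ with $\lVert q-r''\rVert\le\mu(p)-1$ are attributed to ``discrete midpoint convexity applied to $p$ and a suitable translate of $r$,'' but $\mathbb{G}^n$ carries no translations ($\mathbb{T}$ is a star, not a line), and the only operations L-convexity gives you are $\lceil(p+r)/2\rceil$ and $\lfloor(p+r)/2\rfloor$, which produce points roughly halfway between $p$ and $r$, not unit-step points; midpoint convexity applied to $p$ and $r$ only tells you the two midpoints have value at most $h(p)$. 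Extracting a unit-step descent-plus-proximity statement from this requires an induction/geodesic argument specific to $\mathbb{G}$ (including the case you yourself flag, geodesics through the center of the star, and the parity accounting behind the additive $2$), and that is exactly the content you leave unresolved.

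There is also a quantifier gap: the theorem bounds the iterations of SDA for whatever steepest descent direction the algorithm happens to pick, whereas your argument fixes a particular tie-breaking (``the minimal minimizer in whichever of $\mathcal{F}_p$, $\mathcal{I}_p$ realizes the minimum''). Showing that one well-chosen direction decreases $\mu$ does not bound the algorithm unless you additionally show that every minimizer of $h$ over $\mathcal{F}_p\cup\mathcal{I}_p$ enjoys the same proximity decrease (or that the paper's SDA is run with your tie-breaking, which it is not). The use of Lemma~\ref{lem:SDAopt} to guarantee that the algorithm does not stall, and the reduction of the bound to a per-step distance decrease, are fine as a plan; the missing part is precisely the proximity lemma in the geometry of $\mathbb{G}^n$, which is the substance of \cite{Hirai2018L} and cannot be waved in from the $\mathbb{Z}^n$ picture.
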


Currently the polynomial-time solvability of Step 2 of SDA is unknown.
Thus it does not imply the polynomial-time solvability of the L-convex function minimization (under the value oracle model).

\subsection{Submodular and bisubmodular flows}
\label{subsec:subbisub}

\subsubsection{Submodular flow}

For a finite set $U$, a function $\rho : 2^U\rightarrow \overline{\mathbb{R}}$ with $\rho(\emptyset)=0$ is called \emph{submodular}
if it satisfies $\rho(X)+\rho(Y)\geq \rho(X\cap Y) + \rho(X\cup Y)$ for all $X,Y\subseteq U$.
For a submodular function $\rho$,
the \emph{base polyhedron} $\mathcal{B}(\rho)$ is
the set of all vectors $x\in \mathbb{R}^U$ satisfying
\begin{gather}
x(U)=\rho(U),\\
x(X)\leq \rho(X)\quad (X\subseteq U).\label{eq:base}
\end{gather}

Let $(U,A)$ be a directed graph on node set $U$.
Let $\underline{b}:A\rightarrow\underline{\mathbb{R}}$ and $\overline{b}:A\rightarrow\overline{\mathbb{R}}$
be lower and upper capacity functions, respectively.
Assume that $\underline{b}(a)\leq \overline{b}(a)$ for each $a\in A$.
Let $\rho:U\rightarrow \overline{\mathbb{R}}$ be a submodular function with $\rho(U)=0$.
For a function (flow) $\varphi:A\rightarrow\mathbb{R}$, let $\nabla\varphi\in \mathbb{R}^U$ be the \emph{boundary} of $\varphi$ defined by
\[
(\nabla \varphi)(u):=\sum\{\varphi(a)\mid \text{$a$ leaves $u$}\}-\sum\{\varphi(a)\mid \text{$a$ enters $u$}\}\quad (u\in U).
\]
The \emph{submodular flow (feasibility) problem (SF)} is the problem of finding a flow $\varphi:A\rightarrow \mathbb{R}$ satisfying 
\begin{gather}
        \nabla \varphi \in \mathcal{B}(\rho),\\
        \underline{b}(a) \leq \varphi(a)\leq \overline{b}(a)\quad (a\in A).
\end{gather}
We say that a flow $\varphi$ is \emph{feasible} if it satisfies the above conditions.
The feasibility of SF can be characterized via its \emph{cut function} $\kappa_{\underline{b},\overline{b}}$ defined by
\begin{align*}
    \kappa_{\underline{b},\overline{b}}(X)
        &:=\sum\{\underline{b}(a)\mid a=ij\in A, i\in X, j\notin X\}\\
        &\qquad-\sum\{\overline{b}(a)\mid a=ij\in A, i\notin X, j\in X\}\quad (X\subseteq U).
\end{align*}
It is well-known that $-\kappa_{\underline{b},\overline{b}}(X)$ is submodular; see, e.g., \cite[Section 2.3]{Fujishige2005Submodular}.
We denote the cut function simply by $\kappa(X)$ if no confusion occurs.

\begin{theorem}[\cite{Frank1984Finding}]
\label{thm:submod_feasibility}
SF has a feasible flow if and only if
it holds $\kappa(X)\leq \rho(X)$ for all $X\subseteq U$.
In such a case, if $\underline{b},\overline{b}$ and $\rho$ are all integer-valued,
there exists an integral feasible flow.
\end{theorem}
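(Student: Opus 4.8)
The plan is to prove necessity by a direct flow computation and sufficiency by recognizing SF‑feasibility as the non‑emptiness of the intersection of two base polyhedra, for which a classical criterion is available.

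For necessity, suppose $\varphi$ is feasible and fix $X\subseteq U$. Summing the boundary over $X$, the contributions of arcs with both endpoints in $X$ cancel, so $(\nabla\varphi)(X)=\sum\{\varphi(a)\mid a=ij,\ i\in X,\ j\notin X\}-\sum\{\varphi(a)\mid a=ij,\ i\notin X,\ j\in X\}$. Bounding the first sum below by $\underline{b}$ and the second above by $\overline{b}$ gives $(\nabla\varphi)(X)\ge\kappa(X)$, while $\nabla\varphi\in\mathcal{B}(\rho)$ gives $(\nabla\varphi)(X)\le\rho(X)$; hence $\kappa(X)\le\rho(X)$.

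For sufficiency, I would first describe the set $P:=\{\nabla\varphi\mid \underline{b}(a)\le\varphi(a)\le\overline{b}(a)\ (a\in A)\}$ of achievable boundaries. By the Gale--Hoffman feasibility theorem (a vector $x$ with $x(U)=0$ is the boundary of some capacity‑respecting $\varphi$ iff $x(Y)\ge\kappa(Y)$ for all $Y\subseteq U$, and the realizing flow may be taken integral when $x$ and the capacities are), one has $P=\{x\in\mathbb{R}^U\mid x(U)=0,\ x(Y)\ge\kappa(Y)\ \forall Y\}$. Writing $x(Y)\ge\kappa(Y)$ as $x(U\setminus Y)\le-\kappa(Y)$ and setting $Z:=U\setminus Y$, this is exactly $x\in\mathcal{B}(\mu)$ with $\mu(Z):=-\kappa(U\setminus Z)$. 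Since $-\kappa$ is submodular and submodularity is preserved under complementation, $\mu$ is submodular; moreover $\mu(\emptyset)=-\kappa(U)=0$ (no arc leaves or enters $U$) and $\mu(U)=-\kappa(\emptyset)=0=\rho(U)$. Thus $P=\mathcal{B}(\mu)$, and SF has a feasible flow if and only if $\mathcal{B}(\rho)\cap\mathcal{B}(\mu)\neq\emptyset$.

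It then remains to invoke the intersection theorem for base polyhedra: for submodular $f_1,f_2$ with $f_1(U)=f_2(U)$, one has $\mathcal{B}(f_1)\cap\mathcal{B}(f_2)\neq\emptyset$ if and only if $f_1(X)+f_2(U\setminus X)\ge f_1(U)$ for all $X\subseteq U$, with an integral point available when $f_1,f_2$ are integer‑valued. Taking $f_1=\rho$, $f_2=\mu$ and using $\mu(U\setminus X)=-\kappa(X)$, the criterion becomes precisely $\kappa(X)\le\rho(X)$ for all $X$; and in the integral case an integral point of $\mathcal{B}(\rho)\cap\mathcal{B}(\mu)=\mathcal{B}(\rho)\cap P$ is realized by an integral feasible flow. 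The main obstacle is the intersection theorem itself: its standard proof runs an augmenting‑path algorithm in an auxiliary digraph carrying the residual arcs together with the exchange‑capacity arcs of $\mathcal{B}(\rho)$ and $\mathcal{B}(\mu)$, and an uncrossing argument shows that whenever no augmenting path exists the set of reachable nodes yields a violated inequality $\kappa(X)>\rho(X)$; this same procedure produces the integral solution and is what one would implement if a self‑contained argument were desired.
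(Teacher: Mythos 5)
Your proof is correct. Note that the paper offers no proof of Theorem~\ref{thm:submod_feasibility} at all---it is imported from Frank~\cite{Frank1984Finding}---so there is no in-paper argument to deviate from; what you give is the standard derivation, and it is in fact the exact submodular counterpart of the route the paper itself takes for the bisubmodular analogue (Theorem~\ref{thm:bisubmod_feasibility}): describe the set of achievable boundaries $\{\nabla\varphi \mid \underline{b}\le\varphi\le\overline{b}\}$ as a base polyhedron (via Gale--Hoffman, in your case), then reduce feasibility to the nonemptiness of the intersection of two base polyhedra and invoke the intersection theorem, with integrality inherited from the integral versions of both ingredients. Your observation that the single family of inequalities $x(Y)\ge\kappa(Y)$ already captures the upper-bound side is right, precisely because $x(U)=0$ turns the inequality for $U\setminus Y$ into $x(Y)\le-\kappa(U\setminus Y)$; and with the paper's sign conventions ($\underline{b}$ may be $-\infty$, $\overline{b}$ may be $+\infty$) the complemented function $\mu(Z):=-\kappa(U\setminus Z)$ takes values in $\mathbb{R}\cup\{+\infty\}$ only and satisfies $\mu(\emptyset)=\mu(U)=0$, so it is a legitimate $\overline{\mathbb{R}}$-valued submodular function as your argument requires. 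The necessity direction and the transfer of integrality (an integral common base realized by an integral flow) are fine; the one component you leave as a black box, the base-polyhedron intersection theorem with integrality, is a classical result, which is acceptable given that the statement being proved is itself cited rather than proved in the paper.
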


We say that $X\subseteq U$ with $\kappa(X)>\rho(X)$ is a \emph{violating cut}.
A violating cut $X$ is called \emph{maximum}
if $\kappa(X)-\rho(X)$ is maximum among all violating cuts.


There are a number of algorithmic results on SF; see the survey \cite{Fujishige2000Algorithms}.
Those algorithms output a feasible flow when SF is feasible, and output a violating cut when SF is infeasible.
If $\underline{b}$, $\overline{b}$, and $\rho$ are all integer-valued,
the above flow can be chosen as integral.
Moreover, by modifying the algorithms like \cite[Section 16.2.4]{Frank2011Connections},
the above cut can be chosen as a maximum violating cut.
For designing algorithms, one usually assume the \emph{exchange capacity oracle} for $\rho$.
For $x\in \mathcal{B}(\rho)$ and a distinct pair $(i,j)$ of $U$,
the \emph{exchange capacity} is the minimum value of $\rho(X)-x(X)$ among all $X\subseteq U$ with $i\in X \not\ni j$.

\subsubsection{Bisubmodular flow}
\label{subsubsec:bisub}

Let $3^U$ denote the set of all pairs $Y,Z\subseteq U$
satisfying $Y\cap Z=\emptyset$.
A function $\beta : 3^U\rightarrow \overline{\mathbb{R}}$ with $\beta(\emptyset,\emptyset)=0$ is called \emph{bisubmodular}
if it satisfies the bisubmodularity inequality
\begin{align}
    \beta(Y,Z)+\beta(Y',Z')&\geq \beta(Y\cap Y',Z\cap Z')\\
        &\quad +\beta((Y\cup Y')\setminus (Z\cup Z'),(Z\cup Z')\setminus (Y\cup Y'))
\end{align}
for all $(Y,Z),(Y',Z')\in 3^U$.
For a bisubmodular function $\beta$ on $U$, the \emph{bisubmodular polyhedron} $\mathcal{D}(\beta)$ is
the set of all vectors $z\in \mathbb{R}^U$ satisfying
\begin{equation}
\label{eq:bisub_poly}
z(Y)-z(Z)\leq \beta(Y,Z)\quad ((Y,Z)\in 3^U).
\end{equation}

A \emph{bidirected graph} is a graph $(U,E;\partial)$,
where $\partial:E\rightarrow \mathbb{Z}^U$ is a \emph{boundary operator} such that
for each $ij\in E$ it satisfies (i) $\partial e=\chi_i+\chi_j$, (ii) $\partial e=-\chi_i-\chi_j$,
(iii) $\partial e=\chi_i-\chi_j$, or (iv) $\partial e=-\chi_i+\chi_j$.
An undirected graph can be identified with a bidirected graph whose all arcs have boundaries of type (i),
and a directed graph can be identified with a bidirected graph whose all arcs have boundaries of type (iii).
In this paper, we do not admit self-loops of type (iii) or (iv).

Let $(U,E;\partial)$ be a bidirected graph.
Let $\underline{c}:E\rightarrow \underline{\mathbb{R}}$ and $\overline{c}:E\rightarrow \overline{\mathbb{R}}$
be lower and upper capacities, respectively.
Assume that $\underline{c}(e)\leq \overline{c}(e)$ for each $e\in E$.
Let $\beta:3^U\rightarrow \overline{\mathbb{R}}$ be a bisubmodular function.
For a function (bidirected flow) $\psi:E\rightarrow\mathbb{R}$, let $\nabla\psi \in \mathbb{R}^U$ be the \emph{boundary} of $\psi$
defined by $\nabla\psi:=\sum\{\psi(e)\partial e\mid e\in E\}$.
Note that this definition is consistent with that given in a directed case.
The \emph{bisubmodular flow (feasibility) problem (BF)} is the problem of finding a flow $\psi:E\rightarrow\mathbb{R}$ satisfying
\begin{equation}
\label{eq:bisub_prob}
\begin{gathered}
    \nabla \psi \in \mathcal{D}(\beta),\\
    \underline{c}(e) \leq \psi(e)\leq \overline{c}(e)\quad (e\in E).
\end{gathered}
\end{equation}
We say that a flow $\psi$ is \emph{feasible} if it satisfies the above conditions.

There are few papers~\cite{Karzanov1991Maximization,Karzanov20070} on BF despite its natural formulation.
We give a similar characterization of feasibility of BF as SF.
Let $\hat{\kappa}_{\underline{c},\overline{c}}:3^U\rightarrow \overline{\mathbb{R}}$ be the \emph{cut function} of the network defined by
\begin{align*}
    \hat{\kappa}_{\underline{c},\overline{c}}(Y,Z)
        &:=\sum\{\langle \partial e,\chi_{Y,Z}\rangle \underline{c}(e)\mid e\in E,\langle \partial e,\chi_{Y,Z}\rangle >0\}\\
        &\qquad+\sum\{\langle \partial e,\chi_{Y,Z}\rangle \overline{c}(e)\mid e\in E,\langle \partial e,\chi_{Y,Z}\rangle <0\}\quad ((Y,Z)\in 3^U),
\end{align*}
where $\langle\cdot,\cdot\rangle$ is the canonical inner product.
We denote the cut function simply by $\kappa(Y,Z)$ if no confusion occurs.

\begin{theorem}
\label{thm:bisubmod_feasibility}
BF has a feasible flow if and only if
it holds $\kappa(Y,Z)\leq \beta(Y,Z)$ for all $(Y,Z)\in 3^U$.
In such a case, if $\underline{c},\overline{c}$ and $\beta$ are all integer-valued,
there exists a half-integral feasible flow.
\end{theorem}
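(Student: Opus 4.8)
The plan is to prove the two directions separately, with the half-integrality claim falling out of the sufficiency argument.

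\emph{Necessity.} Suppose $\psi$ is feasible. For each $(Y,Z)\in 3^U$ and each $e\in E$, the map $t\mapsto t\langle\partial e,\chi_{Y,Z}\rangle$ is minimized over $t\in[\underline c(e),\overline c(e)]$ at $t=\underline c(e)$ when $\langle\partial e,\chi_{Y,Z}\rangle>0$ and at $t=\overline c(e)$ when $\langle\partial e,\chi_{Y,Z}\rangle<0$, so the sum of these per-edge minima equals $\kappa(Y,Z)$. Hence $\kappa(Y,Z)\le\sum_{e\in E}\psi(e)\langle\partial e,\chi_{Y,Z}\rangle=\langle\nabla\psi,\chi_{Y,Z}\rangle\le\beta(Y,Z)$, using nothing beyond the definitions.

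\emph{Sufficiency and half-integrality.} Here I would reduce to Theorem~\ref{thm:submod_feasibility}. Split each node $u\in U$ into two copies $u^+,u^-$; put $\hat U:=\{u^+,u^-\mid u\in U\}$ and let $\sigma$ be the sign-flip involution $u^\pm\mapsto u^\mp$, extended to vectors and sets coordinatewise and to arcs by $\sigma(x\to y):=\sigma(y)\to\sigma(x)$. Build a directed graph $(\hat U,\hat A)$ by replacing each bidirected edge $e=ij$ by a $\sigma$-conjugate pair of ordinary directed arcs $a_e,\sigma(a_e)$, chosen so that $\partial a_e+\partial\sigma(a_e)$ is the image of $\partial e$ under the embedding $\Phi\colon\mathbb{R}^U\to\mathbb{R}^{\hat U}$, $(\Phi z)_{u^+}:=z_u$, $(\Phi z)_{u^-}:=-z_u$ (e.g. for $\partial e=\chi_i-\chi_j$ take $a_e\colon i^+\to j^+$ and $\sigma(a_e)\colon j^-\to i^-$; the other three edge types are handled identically), and give both copies the interval $[\underline c(e),\overline c(e)]$. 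Encode $\beta$ by the set function $\hat\rho(W):=\beta(P(W),M(W))$ on $2^{\hat U}$, where $P(W):=\{u\mid u^+\in W\not\ni u^-\}$ and $M(W):=\{u\mid u^-\in W\not\ni u^+\}$; note $\hat\rho(\emptyset)=\hat\rho(\hat U)=\beta(\emptyset,\emptyset)=0$, and $\hat\rho$ is submodular (equivalently, $(A,B)\mapsto\beta(A\setminus B,B\setminus A)$ is submodular on $2^U\times 2^U$, a consequence of the bisubmodularity of $\beta$). Comparing cut functions edge by edge gives $\kappa_{\hat A}(W)\le\hat\kappa_{\underline c,\overline c}(P(W),M(W))$ for all $W\subseteq\hat U$, with equality whenever $W$ contains at most one copy of each node; so the hypothesis $\hat\kappa\le\beta$ yields $\kappa_{\hat A}(W)\le\hat\rho(W)$ for all $W$, and Theorem~\ref{thm:submod_feasibility} produces a feasible flow $\varphi$ on $(\hat U,\hat A)$, integral when $\underline c,\overline c,\beta$ are. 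Finally, symmetrize: $\psi(e):=\tfrac12(\varphi(a_e)+\varphi(\sigma a_e))$ is half-integral and respects $[\underline c(e),\overline c(e)]$; because $\hat\rho$ is invariant under the involution $W\mapsto\sigma(\hat U\setminus W)$ (which fixes $P$ and $M$), the polyhedron $\mathcal{B}(\hat\rho)$ is closed under $z\mapsto-\sigma z$, so the averaged boundary $\tfrac12(\nabla\varphi-\sigma\nabla\varphi)$ stays in $\mathcal{B}(\hat\rho)$, and unwinding $\Phi$ gives $\nabla\psi\in\mathcal{D}(\beta)$. Thus $\psi$ is a half-integral feasible flow for BF.

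\emph{Main obstacle.} The delicate points are the two structural facts just invoked: that $\hat\rho$ is submodular — the sole place where bisubmodularity of $\beta$ is actually used, essentially the classical correspondence between bisubmodular functions on $U$ and submodular functions on the doubled ground set, but needing a genuine (short) argument — and the symmetrization step, which rests on the invariance of $\hat\rho$ under $W\mapsto\sigma(\hat U\setminus W)$; without it one cannot pass from an arbitrary feasible flow of the auxiliary submodular-flow instance to one that descends to a bidirected flow, and this is exactly where non-symmetry of $\beta$ could have bitten. As a more self-contained route to the equivalence alone (not half-integrality), one can argue directly by LP duality, mirroring Frank's proof of Theorem~\ref{thm:submod_feasibility}: a Farkas certificate of infeasibility of BF is a nonnegative, finitely supported $\mu\colon 3^U\to\mathbb{R}_+$; writing $(Y,Z)\wedge(Y',Z'):=(Y\cap Y',Z\cap Z')$ and $(Y,Z)\vee(Y',Z'):=((Y\cup Y')\setminus(Z\cup Z'),(Z\cup Z')\setminus(Y\cup Y'))$ for the two pairs in the bisubmodularity inequality, one checks $\chi_{(Y,Z)\wedge(Y',Z')}+\chi_{(Y,Z)\vee(Y',Z')}=\chi_{Y,Z}+\chi_{Y',Z'}$, so uncrossing the support of $\mu$ into a chain leaves $\sum\mu(Y,Z)\chi_{Y,Z}$ unchanged and, by bisubmodularity, does not increase $\sum\mu(Y,Z)\beta(Y,Z)$; along a chain $(Y_1,Z_1)\subseteq\dots$ each coefficient $\langle\partial e,\chi_{Y_i,Z_i}\rangle$ is nonnegative for all $i$ or nonpositive for all $i$, so the certificate value equals $\sum_i\mu_i(\kappa(Y_i,Z_i)-\beta(Y_i,Z_i))$, forcing $\kappa(Y_i,Z_i)>\beta(Y_i,Z_i)$ for some $i$, contrary to the hypothesis.
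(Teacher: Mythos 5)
Your necessity argument is fine, but the sufficiency (and hence the half-integrality) part hinges on a claim that is false in general: that for an arbitrary bisubmodular $\beta$ the doubled set function $\hat\rho(W):=\beta(P(W),M(W))$ on $2^{U^\pm}$ is submodular. Counterexample: take $U=\{1,2\}$ and $\beta(Y,Z):=\min\{\lvert Y\rvert+\lvert Z\rvert,1\}$, which is bisubmodular (if both pairs are nonempty the left side of the bisubmodularity inequality is $2$ and the right side is at most $2$; if one pair is $(\emptyset,\emptyset)$ the reduced join of the other pair is itself, so both sides are equal). For $W=\{1^+,1^-\}$ and $W'=\{1^+,2^+\}$ one gets $\hat\rho(W)+\hat\rho(W')=\beta(\emptyset,\emptyset)+\beta(\{1,2\},\emptyset)=1$, while $\hat\rho(W\cap W')+\hat\rho(W\cup W')=\beta(\{1\},\emptyset)+\beta(\{2\},\emptyset)=2$. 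The problematic configuration is exactly a set $W$ containing a conflicting pair $\{u^+,u^-\}$: there the required inequality is not an instance of bisubmodularity, and on the positive orthant bisubmodularity even gives the reverse inequality $\beta(\{1\},\emptyset)+\beta(\{2\},\emptyset)\geq\beta(\{1,2\},\emptyset)$. Without submodularity of $\hat\rho$ you cannot invoke Theorem~\ref{thm:submod_feasibility}, so the reduction of general BF to SF collapses; your cut comparison and symmetrization steps (which are themselves sound) cannot repair this. This is precisely why the paper isolates the class of \emph{reducible} bisubmodular functions: the existence of a suitable submodular lift on $U^\pm$ is an extra hypothesis (Lemma~\ref{lem:reducible}), the reduction to SF is proved only under it (Proposition~\ref{prop:reduction_bisub}, Theorem~\ref{thm:bisubmod_flow}), and verifying it for the node-flowing polytopes requires a separate argument (Lemma~\ref{lem:node_flowing}).

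For Theorem~\ref{thm:bisubmod_feasibility} itself the paper proceeds quite differently: the boundaries of capacity-respecting bidirected flows form a bisubmodular polyhedron $\mathcal{D}(\beta')$ with $\beta'(Y,Z)=-\kappa(Z,Y)$ (Ando et al.), so feasibility of BF is the nonemptiness of the intersection of two bisubmodular polyhedra, which Nakamura's intersection theorem characterizes by exactly the cut condition; half-integrality of the boundary then follows from the folklore half-integrality of the intersection of two integral bisubmodular polyhedra, and is transferred to the flow values by subdividing each bidirected edge and imposing an auxiliary bisubmodular constraint forcing the two halves to carry equal flow. Your alternative LP-duality sketch also leaves its key step unproved: for bisubmodular pairs the reduced join is not the componentwise union, so uncrossing the support of the Farkas certificate does not automatically yield a chain, and that structure is what your final sign argument relies on. As it stands, the proposal therefore has a genuine gap at the central reduction.
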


Again, we say that $(Y,Z)\in 3^U$ with $\kappa(Y,Z)>\beta(Y,Z)$ is a \emph{violating cut}.

\begin{proof}
These are known~\cite{Ando1997Balanced} that $\beta'(Y,Z):=-\kappa(Z,Y)$ is bisubmodular and
\[
	\mathcal{D}(\beta')=\{\nabla \psi\mid \psi\in \mathbb{R}^E,
	\underline{c}\leq \psi\leq\overline{c}\}.
\]
Therefore, BF is to find a vector which belongs to the intersection of two bisubmodular polyhedra.
It is shown~\cite{Nakamura1990Intersection} that the intersection $\mathcal{D}(\beta)\cap \mathcal{D}(\beta')$
of two bisubmodular polyhedra is nonempty if and only if
$\beta(Y,Z)+\beta'(Z,Y)\geq 0$ for any $(Y,Z)\in 3^U$.
Then the former part of the theorem immediately follows.

It is a folklore that the intersection of two integral bisubmodular polyhedra is half-integral;
see, e.g., \cite{Karzanov20070}.
This implies that if BF has a feasible flow and $\underline{c}$, $\overline{c}$, and $\beta$ are integral-valued,
one can choose such a flow as its boundary is half-integral.
For showing the half-integrality of a flow itself, divide each bidirected edge $e=ij\in E$ with $\partial e=\sigma_i\chi_i+\sigma_j\chi_j$
into two bidirected edges $ik_{e}$ and $k'_{e}j$,
where $\sigma_i,\sigma_j\in\{+1,-1\}$, and $k_{e}$ and $k'_{e}$ are new vertices.
Define their boundaries by $\partial (ik_{e})=\sigma_i\chi_i+\chi_{k_{e}}$ and $\partial (k'_{e}j)=\chi_{k'_{e}}+\sigma_j\chi_j$,
and their capacities by $\underline{c}(ik_e):=\underline{c}(k'_e j):=\underline{c}(e)$
and $\overline{c}(ik_e):=\overline{c}(k'_e j):=\overline{c}(e)$.
Let $\tilde{E}$ be the set of divided edges.
Consider the constraint $(\nabla \psi)(k_e)=(\nabla \psi)(k'_e)$
(or, equivalently, $\psi(ik_e)=\psi(k'_e j)$).
It is easy to see that this is the boundary condition for
the following bisubmodular function $\beta_e$ on $\{k_e,k'_e\}$:
\[
    \beta_e(Y,Z):=\begin{dcases*}
        0 & if $(Y,Z)=(\emptyset,\emptyset),(\{k_e\},\{k'_e\}),(\{k'_e\},\{k_e\})$,\\
        \infty & otherwise
    \end{dcases*}\quad ((Y,Z)\in 3^{\{k_e,k'_e\}}).
\]
Thus, extend the bisubmodular function $\beta$ to $\tilde{\beta}:3^{\tilde{U}}\rightarrow\overline{\mathbb{R}}$
with $\tilde{U}=U\cup \bigcup_{e\in E}\{k_e,k'_e\}$ by
\[
    \tilde{\beta}(\tilde{Y},\tilde{Z}):=\beta(\tilde{Y}\cap U,\tilde{Z}\cap U)+\sum_{e\in E}\beta_e(\tilde{Y}\cap \{k_e,k'_e\},\tilde{Z}\cap \{k_e,k'_e\})
    \quad ((\tilde{Y},\tilde{Z})\in 3^{\tilde{U}}).
\]
The new instance $((\tilde{U},\tilde{E};\partial),\underline{c},\overline{c},\tilde{\beta})$
of BF is obviously equivalent to the original.
Now the half-integrality of $\psi$ follows from the half-integrality of $\nabla \tilde{\psi}$ (and thus, $\tilde{\psi}$).
\end{proof}

\subsubsection{Reducible bisubmodular flow}
\label{subsec:reducible}


The \emph{signed extension} $U^\pm$ of $U$ is defined by $U^\pm:=U\times\{-1,+1\}$,
where elements $(i,+1)$ and $(i,-1)$ in $U^\pm$ are denoted by $i^+$ and $i^-$, respectively.
For $X\subseteq U^\pm$, let $\underline{X}$ be the set obtained from $X$
by removals of all $\{i^+,i^-\}$ with $\{i^+,i^-\}\subseteq X$.
For $Y\subseteq U$, let $Y^+:=\{i^+\mid i\in Y\}\subseteq U^\pm$ and $Y^-:=\{i^- \mid i \in Y\}\subseteq U^\pm$.

A function $\rho:2^{U^\pm}\rightarrow\overline{\mathbb{R}}$ is called \emph{transversally monotone} (\emph{t-monotone}) if
\[
    \rho(\underline{X})\leq \rho(X)\quad (X\subseteq 2^{U^\pm}).
\]
Define a map (projection) $\Phi:\mathbb{R}^{U^\pm}\rightarrow \mathbb{R}^U$ and a map (lift) $\Psi:\mathbb{R}^U\rightarrow\mathbb{R}^{U^\pm}$ by
\[
    (\Phi(x))(i):=\frac{x(i^+)-x(i^-)}{2},\quad (\Psi(z))(i^+):=z(i),\ (\Psi(z))(i^-):=-z(i)
\]
for $x\in\mathbb{R}^{U^\pm}$, $z\in\mathbb{R}^U$, and $i\in U$.
The following is a variant of \cite[Lemma 2.3]{Hirai2018Dual}.

\begin{lemma}
\label{lem:reducible}
Let $\rho:2^{U^\pm}\rightarrow\overline{\mathbb{R}}$ be a t-monotone submodular function.
If $\Psi(\Phi(\mathcal{B}(\rho)))\subseteq \mathcal{B}(\rho)$,
then $\Phi(\mathcal{B}(\rho))$ is the bisubmodular polyhedron $\mathcal{D}(\beta)$
of a bisubmodular function $\beta:3^U\rightarrow \overline{\mathbb{R}}$ defined by
\begin{equation}
\label{eq:reducible_bisub}
    \beta(Y,Z):=\rho(Y^+\cup Z^-)\quad ((Y,Z)\in 3^U).
\end{equation}
\end{lemma}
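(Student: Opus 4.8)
The plan is to verify two things about the polyhedron $P := \Phi(\mathcal{B}(\rho))$: first, that $\beta$ as defined in \eqref{eq:reducible_bisub} is actually bisubmodular, and second, that $P = \mathcal{D}(\beta)$ as sets. For the first point, I would use the submodularity of $\rho$ together with t-monotonicity. Given $(Y,Z),(Y',Z')\in 3^U$, set $X := Y^+\cup Z^-$ and $X' := Y'^+\cup Z'^-$ in $2^{U^\pm}$; then $\beta(Y,Z)+\beta(Y',Z') = \rho(X)+\rho(X') \geq \rho(X\cap X') + \rho(X\cup X')$. Now $X\cap X' = (Y\cap Y')^+ \cup (Z\cap Z')^-$, which gives exactly $\beta(Y\cap Y', Z\cap Z')$. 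For the other term, $X\cup X' = (Y\cup Y')^+ \cup (Z\cup Z')^-$, which may contain pairs $\{i^+,i^-\}$ precisely when $i \in (Y\cup Y')\cap(Z\cup Z')$; applying t-monotonicity, $\rho(X\cup X') \geq \rho(\underline{X\cup X'})$, and $\underline{X\cup X'} = ((Y\cup Y')\setminus(Z\cup Z'))^+ \cup ((Z\cup Z')\setminus(Y\cup Y'))^-$, whose $\rho$-value is the second term in the bisubmodularity inequality. Also $\beta(\emptyset,\emptyset) = \rho(\emptyset) = 0$. This handles bisubmodularity.

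For the set equality, I would prove the two inclusions separately. For $P \subseteq \mathcal{D}(\beta)$: take $x \in \mathcal{B}(\rho)$ and let $z = \Phi(x)$; for any $(Y,Z)\in 3^U$ I need $z(Y) - z(Z) \leq \beta(Y,Z) = \rho(Y^+\cup Z^-)$. By the hypothesis $\Psi(\Phi(\mathcal{B}(\rho)))\subseteq \mathcal{B}(\rho)$, the vector $x' := \Psi(z)$ lies in $\mathcal{B}(\rho)$, so it satisfies the base-polyhedron inequalities \eqref{eq:base}; evaluating \eqref{eq:base} at the set $Y^+\cup Z^-$ gives $x'(Y^+\cup Z^-) \leq \rho(Y^+\cup Z^-)$. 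But $x'(Y^+\cup Z^-) = \sum_{i\in Y} z(i) - \sum_{i\in Z} z(i) = z(Y) - z(Z)$ by the definition of $\Psi$. This gives the desired inequality, so $z \in \mathcal{D}(\beta)$.

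The reverse inclusion $\mathcal{D}(\beta) \subseteq P$ is the part I expect to be the main obstacle, since it requires exhibiting, for a given $z \in \mathcal{D}(\beta)$, a preimage $x \in \mathcal{B}(\rho)$ with $\Phi(x) = z$. The natural candidate is $x := \Psi(z)$, and the question is whether $\Psi(z) \in \mathcal{B}(\rho)$. The equality constraint needs checking: $x(U^\pm) = \sum_i (z(i) + (-z(i))) = 0$, so one needs $\rho(U^\pm) = 0$ — here I would note that for the t-monotone submodular functions actually arising (and consistently with the SF setup where $\rho(U)=0$ is assumed), $\rho(U^\pm) = \rho(\underline{U^\pm}) = \rho(\emptyset) = 0$ by t-monotonicity since $\underline{U^\pm} = \emptyset$. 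The inequality constraints $x(X) \leq \rho(X)$ for arbitrary $X \subseteq U^\pm$ are the delicate point: writing $Y := \{i : i^+ \in X,\ i^- \notin X\}$, $Z := \{i : i^- \in X,\ i^+ \notin X\}$, and $W := \{i : i^+, i^- \in X\}$, one has $x(X) = z(Y) - z(Z)$ (the $W$-part cancels), while $X \supseteq Y^+ \cup Z^-$, so $\rho(X) \geq \rho(Y^+\cup Z^-)$ would follow from monotonicity of $\rho$ — but $\rho$ is only assumed t-monotone, not monotone, so this step needs care. Instead I would argue: $z(Y) - z(Z) \leq \beta(Y,Z) = \rho(Y^+\cup Z^-) \leq \rho(X)$, where the last inequality should come from submodularity plus t-monotonicity rather than raw monotonicity — specifically, adding elements of $W^+$ and $W^-$ to $Y^+\cup Z^-$ one pair at a time, each such addition either increases $\rho$ (if the relevant submodular-type inequality bounds it) or, when a pair $\{i^+,i^-\}$ gets completed, is controlled by t-monotonicity. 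I would make this precise by induction on $|W|$, peeling off one index of $W$ at a time and invoking $\rho(\underline{\cdot}) \le \rho(\cdot)$ at the moment the pair closes, together with submodularity for the intermediate single-element additions. Once $\Psi(z) \in \mathcal{B}(\rho)$ is established, $\Phi(\Psi(z)) = z$ is immediate from the formulas, completing $\mathcal{D}(\beta) \subseteq P$ and hence the proof.
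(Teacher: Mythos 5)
Your overall route is the paper's: bisubmodularity of $\beta$ via submodularity of $\rho$ plus t-monotonicity applied to $X\cup X'$, the inclusion $\Phi(\mathcal{B}(\rho))\subseteq\mathcal{D}(\beta)$ via the hypothesis $\Psi(\Phi(\mathcal{B}(\rho)))\subseteq\mathcal{B}(\rho)$, and the reverse inclusion by showing $\Psi(z)\in\mathcal{B}(\rho)$ for $z\in\mathcal{D}(\beta)$. However, the step you single out as the ``delicate point'' is not delicate, and your proposed workaround is flawed. With $Y$, $Z$, $W$ as you define them, the set $Y^+\cup Z^-$ is precisely $\underline{X}$ (what remains of $X$ after removing \emph{all} complete pairs $\{i^+,i^-\}$), so the inequality $\rho(Y^+\cup Z^-)\leq\rho(X)$ is verbatim the definition of t-monotonicity, $\rho(\underline{X})\leq\rho(X)$, applied once to $X$; this is exactly what the paper does. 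Your substitute argument --- induction on $\lvert W\rvert$, adding pairs one at a time and using ``submodularity for the intermediate single-element additions'' --- does not work as sketched: submodularity gives no bound of the form $\rho(A)\leq\rho(A\cup\{i^+\})$, since a submodular function need not increase under adding a single element. So that sub-argument would fail, even though the conclusion you need is immediate from the hypothesis you already have.

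A second, smaller slip: your justification of $\rho(U^\pm)=0$ ``by t-monotonicity'' only yields $\rho(U^\pm)\geq\rho(\underline{U^\pm})=\rho(\emptyset)=0$, not equality. The equality instead comes from the hypothesis: for any $x\in\mathcal{B}(\rho)$, the vector $\Psi(\Phi(x))$ lies in $\mathcal{B}(\rho)$ and sums to zero over $U^\pm$, which forces $\rho(U^\pm)=0$ (this is exactly the observation recorded in the paper immediately after the lemma, assuming $\mathcal{B}(\rho)\neq\emptyset$). With these two repairs --- quoting t-monotonicity directly for $\rho(\underline{X})\leq\rho(X)$, and deriving $\rho(U^\pm)=0$ from the lifting hypothesis rather than from t-monotonicity --- your argument coincides with the paper's proof.
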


\begin{proof}
First we check the bisubmodularity of $\beta$:
For $(Y,Z),(Y',Z')\in 3^U$, we have
\begin{align}
    &\beta(Y,Z)+\beta(Y',Z') = \rho(Y^+\cup Z^-)+\rho(Y'^+\cup Z'^-)\\
        &\quad\geq \rho((Y\cap Y')^+\cup (Z\cap Z')^-) + \rho((Y\cup Y')^+\cup (Z\cup Z')^-)\\
        &\quad \geq\beta(Y\cap Y',Z\cap Z') +\beta((Y\cup Y')\setminus (Z\cup Z'),(Z\cup Z')\setminus (Y\cup Y')),
\end{align}
where the second inequality follows from the t-monotonicity.
Next we show $\Psi(\mathcal{B}(\rho))=\mathcal{D}(\beta)$.
Suppose that $z\in \Phi(\mathcal{B}(\rho))$.
Since $\Psi(z)\in \mathcal{B}(\rho)$, we have
\[
    z(Y)-z(Z)=(\Psi(z))(Y^+\cup Z^-)\leq \rho(Y^+\cup Z^-)=\beta(Y,Z)\quad ((Y,Z)\in 3^U).
\]
Thus $z\in \mathcal{D}(\beta)$.
Conversely, suppose that $z\in \mathcal{D}(\beta)$.
Let $X\subseteq U^\pm$ and take $Y,Z\subseteq U$ so that $\underline{X}=Y^+\cup Z^-$. Since $Y\cap Z=\emptyset$, we have
\[
    \Psi(z)(X)=\Psi(z)(\underline{X})=z(Y)-z(Z)\leq \beta(Y,Z)=\rho(\underline{X})\leq \rho(X).
\]
Hence $\Psi(z)\in \mathcal{B}(\rho)$, implying $z=\Phi(\Psi(z))\in \Phi(\mathcal{B}(\rho))$.
\end{proof}

By the fact that the bisubmodular function is uniquely determined
for a bisubmodular polyhedron (see, e.g., \cite[Section 3.5(b)]{Fujishige2005Submodular}),
$\beta$ in the above lemma is unique.
A bisubmodular function $\beta$ is called \emph{reducible} (or \emph{reducible to $\rho$})
if it has a submodular function $\rho$ satisfying the assumptions of the above lemma.
Also, BF is called \emph{reducible} if its bisubmodular function is reducible.
Since $x(U^\pm)=0$ for any $x\in \Psi(\Phi(\mathcal{B}(\rho)))\subseteq \mathcal{B}(\rho)$,
we observe that $\rho(U^\pm)=0$.
Thus we can consider an instance of SF on $U^\pm$ using $\rho$. 
We show that the reducible BF can be reduced to the SF.

We construct a directed graph $(U^\pm,A)$ as follows.
For each $e\in E$ with $\partial e=\sigma_i\chi_i+\sigma_j\chi_j$,
let $A_e:=\{i^{\sigma_i}j^{-\sigma_j},j^{\sigma_j}i^{-\sigma_i}\}$.
It is allowed to form parallel arcs from a self-loop.
The arc set $A$ is defined to be the disjoint union of all $A_e$.
Lower and upper capacity functions on $A$ are defined by
$\underline{b}(a)=\underline{c}(e)$ and $\overline{b}(a)=\overline{c}(e)$ for $a\in A_e$, respectively.

A violating cut $(Y,Z)\in 3^U$ is \emph{maximum}
if $\hat{\kappa}_{\underline{c},\overline{c}}(Y,Z)-\beta(Y,Z)$ is maximum among all violating cuts.

\begin{proposition}
\label{prop:reduction_bisub}
Suppose that $\beta$ is reducible to $\rho$.
Let $((U,E;\partial),\underline{c},\overline{c},\beta)$ be an instance of BF,
and $((U^\pm,A),\underline{b},\overline{b},\rho)$ be an instance of SF.

\begin{enumerate}
\renewcommand{\labelenumi}{{\rm (\arabic{enumi}).}}
\item  BF is feasible if and only if SF is feasible.
\item Suppose that BF (or SF) is infeasible.
    Then $(Y,Z)\in 3^U$ is a maximum violating cut for BF
    if and only if $Y^+\cup Z^-$ is a maximum violating cut for SF.
\end{enumerate}
\end{proposition}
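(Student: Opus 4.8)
The plan is to isolate two facts about the cut functions of the two networks and derive everything from them together with the feasibility theorems (Theorems~\ref{thm:submod_feasibility} and~\ref{thm:bisubmod_feasibility}): the \emph{cut identity}
\[
	\kappa_{\underline{b},\overline{b}}(Y^+\cup Z^-)=\hat{\kappa}_{\underline{c},\overline{c}}(Y,Z)\qquad((Y,Z)\in 3^U),
\]
and the \emph{transversal monotonicity} of the directed cut function,
\[
	\kappa_{\underline{b},\overline{b}}(\underline{X})\ \geq\ \kappa_{\underline{b},\overline{b}}(X)\qquad(X\subseteq U^\pm).
\]
For part (1) one can also argue directly with flows: setting $\varphi(a):=\psi(e)$ for $a\in A_e$ turns a feasible $\psi$ of BF into $\varphi$ with $\nabla\varphi=\Psi(\nabla\psi)\in\Psi(\Phi(\mathcal{B}(\rho)))\subseteq\mathcal{B}(\rho)$, and conversely averaging $\varphi$ over each pair $A_e$ yields $\psi$ with $\nabla\psi=\Phi(\nabla\varphi)\in\mathcal{D}(\beta)$; but part (2) forces the cut viewpoint and then (1) comes along for free, so I would prove both from the two displayed facts.

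For the cut identity I would fix $e\in E$ with $\partial e=\sigma_i\chi_i+\sigma_j\chi_j$, put $X:=Y^+\cup Z^-$, and compare the contribution of the two arcs $i^{\sigma_i}j^{-\sigma_j}$, $j^{\sigma_j}i^{-\sigma_i}$ of $A_e$ to $\kappa_{\underline{b},\overline{b}}(X)$ with the contribution of $e$ to $\hat{\kappa}_{\underline{c},\overline{c}}(Y,Z)$. Since $u^{+}\in X\Leftrightarrow u\in Y$ and $u^{-}\in X\Leftrightarrow u\in Z$, the quantity $\langle\partial e,\chi_{Y,Z}\rangle$ lies in $\{-2,-1,0,1,2\}$ and is determined by which of the four vertices $i^{\pm},j^{\pm}$ belong to $X$; running over the at most nine such sign patterns shows the two contributions agree. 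Self-loops are handled separately: they produce parallel arcs on $\{i^+,i^-\}$, for which the same bookkeeping applies. Summing over $e$ gives the identity.

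For the transversal monotonicity the key point is to delete \emph{all} pairs $\{i^+,i^-\}\subseteq X$ at once---deleting one at a time can strictly decrease $\kappa$. Write $P:=\{i:\{i^+,i^-\}\subseteq X\}$. An arc changes its crossing status between $X$ and $\underline{X}$ only if it comes from an edge $e=ij$ with exactly one of $i,j$ in $P$; arcs from edges with both endpoints in $P$, and self-loops at $P$-nodes, are internal on both sides and contribute $0$ to each. For $e=ij$ with $i\in P$ and $j\notin P$, a short computation shows that the change of the contribution of $A_e$ equals $(\overline{c}(e)-\underline{c}(e))(1-r_e-s_e)$, where $r_e,s_e\in\{0,1\}$ record whether the two copies of $j$ lie in $X$; since $j\notin P$ we have $r_e+s_e\leq 1$, so every summand is nonnegative and the inequality follows.

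Both parts are now formal. For (1): if BF is feasible then $\hat\kappa(Y,Z)\leq\beta(Y,Z)=\rho(Y^+\cup Z^-)$ for all $(Y,Z)$, so by the cut identity $\kappa(X)\leq\rho(X)$ whenever $X=Y^+\cup Z^-$; for an arbitrary $X$, $\kappa(X)\leq\kappa(\underline{X})\leq\rho(\underline{X})\leq\rho(X)$ by transversal monotonicity of $\kappa$ and t-monotonicity of $\rho$, hence SF is feasible by Theorem~\ref{thm:submod_feasibility}, and the converse is immediate from the identity and Theorem~\ref{thm:bisubmod_feasibility}. For (2): the identity gives $\hat\kappa(Y,Z)-\beta(Y,Z)=\kappa(Y^+\cup Z^-)-\rho(Y^+\cup Z^-)$, and for any $X$ we have $\kappa(X)-\rho(X)\leq\kappa(\underline{X})-\rho(\underline{X})$, so $\max_{(Y,Z)\in 3^U}(\hat\kappa-\beta)=\max_{X\subseteq U^\pm}(\kappa-\rho)$---a common positive value when the problems are infeasible; therefore $(Y,Z)$ attains the former maximum iff $Y^+\cup Z^-$ attains the latter, which is precisely the claimed correspondence of maximum violating cuts. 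The one real obstacle I foresee is the transversal-monotonicity inequality: the naive single-pair argument genuinely fails, and one must carry out the simultaneous bookkeeping above, in particular exploiting the cancellation on edges and self-loops lying entirely within $P$.
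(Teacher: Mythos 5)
Your proposal is correct and follows essentially the same route as the paper: both arguments rest on the cut identity $\hat{\kappa}_{\underline{c},\overline{c}}(Y,Z)=\kappa_{\underline{b},\overline{b}}(Y^+\cup Z^-)$ and the monotonicity $\kappa_{\underline{b},\overline{b}}(X)\leq\kappa_{\underline{b},\overline{b}}(\underline{X})$ combined with the t-monotonicity of $\rho$ (the paper cites these two cut facts from the literature, whereas you prove them directly, and your edge-by-edge bookkeeping, including the $(\overline{c}(e)-\underline{c}(e))(1-r_e-s_e)$ computation, checks out). The only minor difference is that for the direction ``SF feasible $\Rightarrow$ BF feasible'' the paper constructs a feasible bidirected flow explicitly via $\psi(e)=\tfrac12\varphi(A_e)$ (a construction it reuses later for Theorem~\ref{thm:bisubmod_flow}), while you deduce feasibility from the cut characterizations of Theorems~\ref{thm:submod_feasibility} and~\ref{thm:bisubmod_feasibility}; both are valid.
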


\begin{proof}
(1) Suppose that SF is feasible.
Let $\varphi:A\rightarrow\mathbb{R}$ be a feasible flow for SF.
Define a flow $\psi:E\rightarrow\mathbb{R}$ for BF by
\begin{equation}
\label{eq:give_bisubflow_by_subflow}
    \psi(e)=\frac{1}{2}\,\varphi(A_e)\quad (e\in E).
\end{equation}
Then, it is clear that $\underline{c}(e) \leq \psi(e)\leq \overline{c}(e)$ for each $e\in E$.
Moreover, for $i\in V$, we can check from the definition that
\begin{equation}
\label{eq:relation_nabla}
    (\nabla \psi)(i) = \frac{(\nabla \varphi)(i^+)-(\nabla\varphi)(i^-)}{2}.
\end{equation}
Thus we have $\nabla \psi = \Phi(\nabla \varphi)\in \Phi(\mathcal{B}(\rho))=\mathcal{D}(\beta)$.
Therefore, $\psi$ is a feasible flow for BF.


Suppose that SF is infeasible.
These are shown~\cite{Ando1997Balanced}\footnote{Actually, in \cite{Ando1997Balanced}, these facts are shown only for the case $\underline{b}=0$.
But the same proofs are running in the case $\underline{b}\neq 0$.} that
\begin{align}
\label{eq:cut_reduce} \hat{\kappa}_{\underline{c},\overline{c}}(Y,Z)&=\kappa_{\underline{b},\overline{b}}(Y^+\cup Z^-)\quad ((Y,Z)\in 3^U),\\
\label{eq:cut_monotone} \kappa_{\underline{b},\overline{b}}(X)&\leq \kappa_{\underline{b},\overline{b}}(\underline{X})\quad (X\subseteq U^\pm).
\end{align}
Let $X\subseteq U^\pm$ be a violating cut for SF,
and take $Y,Z\subseteq U$ so that $\underline{X}=Y^+\cup Z^-$.
Then it follows from the t-monotonicity of $\rho$ that
\begin{equation}
\label{eq:violating_bisub}
    \hat{\kappa}_{\underline{c},\overline{c}}(Y,Z)=\kappa_{\underline{b},\overline{b}}(\underline{X})
    \geq \kappa_{\underline{b},\overline{b}}(X)>\rho(X)\geq \rho(\underline{X})=\beta(Y,Z).
\end{equation}
Thus, $(Y,Z)$ is a violating cut for BF.

\vspace{5pt}
\noindent (2) Suppose that $(Y,Z)$ is a maximum violating cut for BF,
and $Y^+\cup Z^-$ is not a maximum violating cut for SF.
Let $X\subseteq U^\pm$ be a maximum violating cut for SF.
Without loss of generality, we can assume that $X=\overline{X}$ by \eqref{eq:cut_monotone} and t-monotonicity of $\rho$.
Thus we can take $Y',Z'\subseteq U$ so that $X=Y'^+\cup Z'^-$.
From \eqref{eq:cut_reduce} we have
\begin{align}
    \hat{\kappa}_{\underline{c},\overline{c}}(Y',Z')-\beta(Y',Z')
        &=\kappa_{\underline{b},\overline{b}}(Y'^+\cup Z'^-)-\rho(Y'^+\cup Z'^-)\\
        &>\kappa_{\underline{b},\overline{b}}(Y^+\cup Z^-)-\rho(Y^+\cup Z^-)\\
        &=\hat{\kappa}_{\underline{c},\overline{c}}(Y,Z)-\beta(Y,Z).
\end{align}
Thus $(Y,Z)$ is not a maximum violating cut; a contradiction.
The converse direction can be shown similarly.
\end{proof}

A bisubmodular function $\beta:3^U\rightarrow\overline{\mathbb{R}}$ is \emph{separable} if
there exists a partition $(U_1,\dotsc,U_\ell)$ of $U$ and bisubmodular functions $\beta_t$ on each $U_t$ ($t=1,\dotsc,\ell$)
such that $\beta(Y,Z)=\sum_{t=1}^\ell \beta_t(Y\cap U_t,Z \cap U_t)$ for any $(Y,Z)\in 3^U$.

\begin{theorem}
\label{thm:bisubmod_flow}
\begin{enumerate}
\renewcommand{\labelenumi}{{\rm (\arabic{enumi}).}}
\item Suppose that $\underline{c}$ and $\overline{c}$ are integral,
and $\beta$ is reducible to an integral submodular function $\rho$.
Then we can obtain a half-integral feasible flow or a maximum violating cut
for the instance $((U,E;\partial),\underline{c},\overline{c},\beta)$ of BF in $O(\mathrm{SF}(n,m,\eta))$ time,
where $n=\lvert U\rvert$, $m=\lvert E\rvert$, and $\eta$ is the time complexity of an exchange capacity oracle for $\rho$.
\item In addition, if $\beta$ is separable to bisubmodular functions $\beta_t$ on a partition $(U_1,\dotsc,U_\ell)$
which are reducible to integral submodular functions $\rho_t$ $(t=1,\dotsc,\ell)$,
the above half-integral feasible flow $\psi$ can be chosen so that $(\nabla \psi)(U_t)\in\mathbb{Z}$ for all $t=1,\dotsc,\ell$.
\end{enumerate}
\end{theorem}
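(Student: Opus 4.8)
The plan is to reduce Theorem~\ref{thm:bisubmod_flow} to Theorem~\ref{thm:submod_feasibility} and Proposition~\ref{prop:reduction_bisub} via the explicit reduction already set up in Section~\ref{subsec:reducible}. For part~(1), I would start from the instance $((U,E;\partial),\underline{c},\overline{c},\beta)$ of BF with $\beta$ reducible to the integral submodular function $\rho$, and form the instance $((U^\pm,A),\underline{b},\overline{b},\rho)$ of SF as constructed before Proposition~\ref{prop:reduction_bisub}: each bidirected edge $e$ spawns two arcs $A_e$ with the inherited integral capacities. Note $\lvert U^\pm\rvert = 2n$ and $\lvert A\rvert = 2m$, so running an SF algorithm costs $O(\mathrm{SF}(2n,2m,\eta)) = O(\mathrm{SF}(n,m,\eta))$, where $\eta$ bounds the exchange-capacity oracle for $\rho$ (which is what "reducible to $\rho$" supplies, together with $\rho(U^\pm)=0$ as observed in the text). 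Then I would invoke the algorithmic results cited after Theorem~\ref{thm:submod_feasibility}: an SF algorithm returns either an integral feasible flow $\varphi$ (since $\underline{b},\overline{b},\rho$ are all integer-valued) or a maximum violating cut $X\subseteq U^\pm$. In the feasible case, define $\psi(e) := \tfrac12\varphi(A_e)$ as in \eqref{eq:give_bisubflow_by_subflow}; by the computation \eqref{eq:relation_nabla} and the fact $\Phi(\mathcal{B}(\rho))=\mathcal{D}(\beta)$, $\psi$ is feasible for BF, and it is half-integral because $\varphi$ is integral. In the infeasible case, by part~(2) of Proposition~\ref{prop:reduction_bisub}, a maximum violating cut $X$ for SF can be assumed (using \eqref{eq:cut_monotone} and t-monotonicity) to satisfy $X=\underline{X}$, hence $X=Y^+\cup Z^-$ for some disjoint $Y,Z$, and then $(Y,Z)$ is a maximum violating cut for BF. All of this runs within the claimed time bound.

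For part~(2), the only new ingredient is to control $(\nabla\psi)(U_t)$ when $\beta=\sum_t \beta_t$ is separable with each $\beta_t$ reducible to an integral $\rho_t$. Here I would observe that separability of $\beta$ corresponds, on the lifted side, to the submodular function $\rho$ on $U^\pm$ decomposing along the partition $(U_1^\pm,\dots,U_\ell^\pm)$: indeed $\rho(X) = \sum_t \rho_t(X\cap U_t^\pm)$ is submodular with $\mathcal{B}(\rho) = \prod_t \mathcal{B}(\rho_t)$, and it is t-monotone and satisfies $\Psi(\Phi(\mathcal{B}(\rho)))\subseteq\mathcal{B}(\rho)$ because each $\rho_t$ does. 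The quantity $(\nabla\psi)(U_t)$ equals $\tfrac12\bigl((\nabla\varphi)(U_t^+) - (\nabla\varphi)(U_t^-)\bigr)$ by summing \eqref{eq:relation_nabla} over $i\in U_t$; since $\varphi$ is integral, this is automatically in $\tfrac12\mathbb{Z}$, so the real content of the claim is getting it into $\mathbb{Z}$. The point is that $(\nabla\varphi)(U_t^\pm)$ is a coordinate-sum of a point of $\mathcal{B}(\rho_t)$, which equals $\rho_t(U_t^\pm)=0$; thus $(\nabla\varphi)(U_t^+)+(\nabla\varphi)(U_t^-)=0$ restricted to each block, and combined with integrality this forces $(\nabla\varphi)(U_t^+)-(\nabla\varphi)(U_t^-) = 2(\nabla\varphi)(U_t^+)\in 2\mathbb{Z}$, hence $(\nabla\psi)(U_t)=(\nabla\varphi)(U_t^+)\in\mathbb{Z}$. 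So in fact no modification of the flow is needed — any integral SF-feasible $\varphi$ already yields this — and the block structure is exactly what makes $(\nabla\varphi)(U_t^\pm)$ a controlled quantity rather than merely $(\nabla\varphi)(U^\pm)=\rho(U^\pm)=0$ globally.

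I expect the main obstacle to be the second part, specifically the bookkeeping to confirm that the lifted $\rho$ inherits separability in the right form — that $\mathcal{B}(\rho)$ is the product $\prod_t\mathcal{B}(\rho_t)$ and that each $\rho_t$ genuinely satisfies the hypotheses of Lemma~\ref{lem:reducible} on $U_t$ (this is where one must check that reducibility of $\beta_t$ to $\rho_t$ already guarantees $\rho_t$ is t-monotone with $\Psi(\Phi(\mathcal{B}(\rho_t)))\subseteq\mathcal{B}(\rho_t)$, so that the concatenated $\rho$ is a legitimate input to the SF reduction). Once the product structure of the base polyhedron is in hand, the argument that $(\nabla\varphi)(U_t^\pm)=0$ and hence $(\nabla\psi)(U_t)\in\mathbb{Z}$ is a two-line computation. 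The first part is essentially a routing of earlier results: the only thing to be careful about is invoking the "modified" SF algorithms (as in \cite[Section 16.2.4]{Frank2011Connections}) so that the returned violating cut is a maximum one, which is needed for Proposition~\ref{prop:reduction_bisub}(2) to transfer maximality back to BF.
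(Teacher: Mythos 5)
Your proposal is correct and follows essentially the same route as the paper: part (1) runs the SF algorithm on the lifted instance $((U^\pm,A),\underline{b},\overline{b},\rho)$ and converts the output via \eqref{eq:give_bisubflow_by_subflow} and Proposition~\ref{prop:reduction_bisub}(2), and part (2) uses $\rho(X)=\sum_t\rho_t(X\cap U_t^\pm)$ with $\rho_t(U_t^\pm)=0$ to force $(\nabla\varphi)(U_t^\pm)=0$, whence $(\nabla\psi)(U_t)=(\nabla\varphi)(U_t^+)\in\mathbb{Z}$. The block-wise bookkeeping you flag as the main obstacle is exactly the (short) content of the paper's part (2), so no gap remains.
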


\begin{proof}
(1) As was noted after Theorem~\ref{thm:submod_feasibility},
we can obtain, in $O(\mathrm{SF}(n,m,\eta))$ time,
an integral feasible flow $\varphi$ or a maximum violating cut for the instance $((U^\pm,A),\underline{b},\overline{b},\rho)$ of SF.
In the former case, we can obtain an integral feasible flow $\psi$ for BF by \eqref{eq:give_bisubflow_by_subflow}.
In the latter case, we can obtain a maximum violating cut for BF by Proposition~\ref{prop:reduction_bisub} (2).

\vspace{5pt}
(2) We can take $\rho$ as $\rho(X)=\sum_{t=1}^\ell \rho_t(X\cap U_t^\pm)$.
Since $\rho(U_t^\pm)=\rho_t(U_t^\pm)=0$,
the above integral feasible flow $\varphi$ satisfies $(\nabla \varphi)(U_t^\pm)=0$ for each $t=1,\dotsc,\ell$.
Thus by \eqref{eq:relation_nabla}, we have
\[
    (\nabla\psi)(U_t)=\sum_{i\in U_t}(\nabla\psi)(i)=\frac{\sum_{i\in U_t} (\nabla\varphi)(i^+)-\sum_{i\in U_t}(\nabla\varphi)(i^-)}{2}
    =(\nabla\varphi)(U_t^+)
\]
for $t=1,\dotsc,\ell$. The integrality of $\varphi$ implies the lemma.
\end{proof}

\section{Minimum-cost multiflow problem}
\label{sec:mincost}

In this section, we study our problem MNMF.
We first introduce basic multiflow terminologies and the formal
description of MNMF (Section~\ref{subsec:form}).
Next we explain a combinatorial duality theory for MNMF
(Section~\ref{subsec:duality}),
and explain how the optimality check of the dual and the recovery of
an optimal multiflow
become a reducible BF (Section~\ref{subsec:ext_bidir}).
Then, in Section~\ref{subsec:prop_h}, we explain that the dual objective function
is viewed as an L-convex function on $\mathbb{G}^n$,
and present a pseudopolynomial-time algorithm for MNMF by implementing SDA.
We mention a sensitivity theorem of the dual.
Finally, in Section~\ref{subsec:scaling}, by incorporating a
cost-scaling technique
we improve the pseudopolynomial-time algorithm to a polynomial-time
one, which proves Theorem~\ref{thm:main}.

\subsection{Problem formulation}
\label{subsec:form}

Let $((V,E),S,c,d)$ be a network of an undirected graph $(V,E)$, terminal set $S \subseteq V$,
a nonnegative integer-valued capacity function $c:V\setminus S\rightarrow\mathbb{Z}_+$ on nonterminal nodes,
and a nonnegative integer-valued cost function $d:E\rightarrow\mathbb{Z}_+$ on edges.
We can assume that the graph is simple, and there are no edges connecting two terminals.
Also we can assume that $\lvert S\rvert\geq 3$.

An \emph{$S$-path} is a path connecting distinct terminals.
For a pair $f=(\mathcal{P},\lambda)$ of a set $\mathcal{P}$ of $S$-paths
and a flow-value function $\lambda:\mathcal{P}\rightarrow\mathbb{R}_+$,
the flow-values on node $i$ and edge $e$ are denoted by
$f(i):=\sum\{\lambda(P) \mid P\in\mathcal{P},i\in V(P)\}$
and $f(e):=\sum\{\lambda(P)\mid P\in\mathcal{P},e\in E(P)\}$, respectively.
A (node-capacitated) \emph{multiflow} is a pair $f=(\mathcal{P},\lambda)$
satisfying the node-capacity constraint $f(i)\leq c(i)$ for each $i\in V\setminus S$.
The \emph{total flow-value} $\mathrm{val}(f)$ of a multiflow $f=(\mathcal{P},\lambda)$
is defined by $\mathrm{val}(f):=\sum_{P\in \mathcal{P}}\lambda(P)$.

A multiflow $f=(\mathcal{P},\lambda)$ is called \emph{maximum}
if it has the maximum total flow-value among all possible multiflows.
The \emph{(edge-)cost} $d(f)$ of a multiflow $f$ is defined by $d(f):=\sum_{e\in E} d(e)f(e)$.
(We deal with the edge-cost setting, since
	node-costs are transformed to edge-costs with no complexity overhead.)
The \emph{minimum-cost node-capacitated multiflow problem (MNMF)} is
the problem of finding a minimum-cost maximum multiflow.
A multiflow $f=(\mathcal{P},\lambda)$ is said to be \emph{integral} if $\lambda$ is integer-valued,
and \emph{half-integral} if $2\lambda$ is integer-valued.

Let $n:=\lvert V\rvert$ and $m:=\lvert E\rvert$.
Suppose that $S=\{1,\dotsc,k\}$ and $V=\{1,\dotsc,n\}$.
For a technical issue, we assume that the cost $d$ are positive
by essentially the same modification as that given in \cite{Goldberg1997Scaling,Karzanov1994Minimum}; see Remark~\ref{rem:poscost}.
As Karzanov~\cite{Karzanov1979minimum} did for the edge-capacitated version (MEMF),
we consider the maximization of
\begin{equation}
\label{obj:M}
    M\cdot\mathrm{val}(f)-d(f)
\end{equation}
for a sufficiently large integer $M>0$; see Remark~\ref{rem:suff_large_M}.
Then an optimal multiflow for \eqref{obj:M} is a minimum-cost maximum multiflow for MNMF.

\begin{remark}
\label{rem:suff_large_M}
Let $C:=\max\{c(i)\mid i\in V\}$ and $D:=\max\{d(e)\mid e\in E\}$.
We show that $M> 2CD$ is sufficient in \eqref{obj:M}.
From the primal half-integrality shown in \cite{Babenko2012Min},
there exists a half-integral optimal multiflow for MNMF.
Let $f^*$ be such a half-integral optimal multiflow,
and $f$ an half-integral multiflow with the maximum value of \eqref{obj:M}.
If $\mathrm{val}(f^*)>\mathrm{val}(f)$, then the difference is at least $1/2$ and
\begin{equation}
    M(\mathrm{val}(f^*)-\mathrm{val}(f))-d(f^*)+d(f)\geq M/2-CD > 0.
\end{equation}
It contradicts the maximality of $f$.
Thus we have $\mathrm{val}(f^*)=\mathrm{val}(f)$, implying that $f$ is also optimal for MNMF.
\end{remark}

\begin{remark}
\label{rem:poscost}
The positiveness of $d$ is technically essential.
We perturb the edge-cost $d$ as follows.
Let $Z:=\{e\in E\mid d(e)=0\}$.
Define an edge-cost $d'$ by
\begin{equation}
	d'(e):=\begin{cases}1 & \text{if $e\in Z$},\\
		(2C\lvert Z\rvert+1)d(e) & \text{otherwise}.
		\end{cases}
\end{equation}
Let $f^*$ be a half-integral optimal multiflow for MNMF under the edge-cost $d'$.
From the primal half-integrality shown in \cite{Babenko2012Min},
there exists a half-integral optimal multiflow under the edge-cost $d$.
Therefore, it is sufficient to show that $f^*$ is also optimal under the edge-cost $d$.
Let $f$ be any half-integral multiflow with $\mathrm{val}(f)=\mathrm{val}(f^*)$.
By the assumption we have $d'(f^*)\leq d'(f)$. Thus,
\[
(2C\lvert Z\rvert+1)(d(f^*)-d(f))=d'(f^*)-d'(f)-f^*(Z)+f(Z)\leq C\lvert Z\rvert.
\]
Since $d(f^*)-d(f)$ is half-integral, $d(f^*)-d(f)\leq C\lvert Z\rvert/(2C\lvert Z\rvert+1)<1/2$ implies $d(f^*)\leq d(f)$.
\end{remark}

\subsection{Duality}
\label{subsec:duality}

First of all, we double the objective function \eqref{obj:M} by a technical reason.
Formally, we consider the following problem:
\begin{align}
\text{(PMF)\quad Maximize}\quad &\sum_{P\in\mathcal{P}} 2M\cdot\lambda(P)-\sum_{e\in E}2d(e)f(e)\\
\text{subject to}\quad & \text{$f=(\mathcal{P},\lambda)$ is a multiflow}.
\end{align}
The first author of this paper established~\cite{Hirai2013Half} a combinatorial duality theory for this type of multiflow problems; see also \cite{Hirai2018Dual}.
He first took a natural LP-dual, and after some reductions, gave the following dual problem:
\begin{subequations}
\begin{align}
\label{eq:dual_mnmf_a}\text{(DMF)\quad Minimize}\quad &\sum_{i\in V\setminus S} 2c_i y_i\\
\label{eq:dual_mnmf_b}\text{subject to}\quad & (x,y)=((x_1,y_1),\dotsc,(x_n,y_n))\in \mathbb{G}^n,\\
\label{eq:dual_mnmf_c}& \mathrm{dist}(x_i,x_j)-y_i-y_j\leq 2d_{ij}\quad (ij\in E),\\
\label{eq:dual_mnmf_d}& (x_s,y_s)=((M,s),0)\quad (s\in S),\\
\label{eq:dual_mnmf_e}& y_i\geq 0\quad (i\in V).
\end{align}
\end{subequations}
\noeqref{eq:dual_mnmf_a,eq:dual_mnmf_b,eq:dual_mnmf_c,eq:dual_mnmf_d,eq:dual_mnmf_e}In intuition,
a pair $(x_i,y_i)\in \mathbb{G}$ represents a ball on $\mathbb{T}$ whose center is $x_i$ and radius is $y_i$.
The condition \eqref{eq:dual_mnmf_c} means that
the distance between two balls is at most $2d_{ij}$
if the corresponding nodes $i,j$ are connected in $(V,E)$,
and \eqref{eq:dual_mnmf_d} is a boundary condition.
We say that $(x,y)\in \mathbb{G}^n$ is a \emph{potential}
if $(x,y)$ satisfies all the conditions \eqref{eq:dual_mnmf_c}--\eqref{eq:dual_mnmf_e}.

\begin{theorem}[\cite{Hirai2013Half}]
\label{thm:maxflow_minpot}
The optimum values of PMF and DMF coincide.
\end{theorem}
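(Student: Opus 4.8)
The plan is to prove weak duality directly, and strong duality by passing to the natural LP-dual of (PMF) in \emph{scalar} node-potentials and then \emph{realizing} an optimum of that LP as a feasible point of (DMF) inside $\mathbb{T}\times\mathbb{R}$. \emph{Weak duality}: let $f=(\mathcal{P},\lambda)$ be a multiflow and $(x,y)$ a potential. Each $S$-path $P\in\mathcal{P}$ joins distinct terminals $s,t$, so $\mathrm{dist}(x_s,x_t)=2M$; chaining the triangle inequality on $\mathbb{T}$ along $P$, applying $\mathrm{dist}(x_i,x_j)-y_i-y_j\le 2d_{ij}$ on each edge, and using $y_s=y_t=0$ gives $2M\le\sum_{e\in E(P)}2d(e)+2\sum_{i\in V(P)\setminus S}y_i$. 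Multiplying by $\lambda(P)\ge 0$, summing over $P\in\mathcal{P}$, and collecting the edge term as $\sum_e 2d(e)f(e)$ yields
\[
\sum_{P\in\mathcal{P}}2M\lambda(P)-\sum_{e\in E}2d(e)f(e)\ \le\ 2\sum_{i\in V\setminus S}y_if(i)\ \le\ 2\sum_{i\in V\setminus S}c_iy_i ,
\]
where the last step uses $0\le f(i)\le c_i$ and $y_i\ge 0$. Hence the optimum of (PMF) is at most that of (DMF).

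For strong duality, note that (PMF) is exactly the linear program of maximizing $\sum_P\lambda(P)\bigl(2M-\sum_{e\in E(P)}2d(e)\bigr)$ over $\lambda\ge 0$ with $\sum_{P\ni i}\lambda(P)\le c_i$ for $i\in V\setminus S$. Its LP-dual $(\mathrm{D}')$ minimizes $\sum_{i\in V\setminus S}c_i\eta_i$ over $\eta\ge 0$ subject to $\sum_{i\in V(P)\setminus S}\eta_i\ge 2M-\sum_{e\in E(P)}2d(e)$ for every $S$-path $P$ — equivalently, that in the graph with edge-lengths $2d(e)$ and node-weights $\eta_i$ every pair of terminals has distance at least $2M$, a finitely checkable condition — so LP strong duality gives that the optima of (PMF) and $(\mathrm{D}')$ coincide. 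I would then invoke the half-integrality of the multiflow dual, the counterpart of the primal half-integrality of~\cite{Babenko2012Min} (derivable, e.g., from the L-convexity of the dual objective established later in this paper, or by Karzanov-type metric arguments), to obtain an \emph{integral} optimal $\eta^{*}$ of $(\mathrm{D}')$, and set $y^{*}:=\eta^{*}/2$; then $y^{*}$ is half-integral and the (DMF)-objective $\sum_i 2c_iy^{*}_i$ equals $\sum_i c_i\eta^{*}_i$, the common optimum.

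It remains to choose centers $x^{*}_i\in\mathbb{T}$ so that $(x^{*},y^{*})\in\mathbb{G}^n$ is a potential. With $y^{*}_s:=0$, set $\ell(ij):=2d_{ij}+y^{*}_i+y^{*}_j$ for $ij\in E$ and let $a_s(i)$ be the $\ell$-shortest-path distance from terminal $s$ to $i$; the constraints of $(\mathrm{D}')$ say $\mathrm{dist}_\ell(s,t)\ge 2M$ for distinct terminals, which forces each $i$ to have at most one terminal $s$ with $a_s(i)<M$. Put $x^{*}_i:=(M-a_s(i),s)$ when such an $s$ exists, and $x^{*}_i:=0$ (the center) otherwise, so $x^{*}_s=(M,s)$. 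For $ij\in E$: if $x^{*}_i,x^{*}_j$ lie on the same branch then $\mathrm{dist}(x^{*}_i,x^{*}_j)\le\lvert a_s(i)-a_s(j)\rvert\le\ell(ij)$ by the triangle inequality for $\ell$; if they lie on different branches $s\ne t$ then $2M\le\mathrm{dist}_\ell(s,t)\le a_s(i)+\ell(ij)+a_t(j)$ gives $\mathrm{dist}(x^{*}_i,x^{*}_j)=2M-a_s(i)-a_t(j)\le\ell(ij)$, and the cases involving the center are analogous; in all cases $\mathrm{dist}(x^{*}_i,x^{*}_j)-y^{*}_i-y^{*}_j\le 2d_{ij}$. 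Finally, a short computation (using that $d$ is integral and $\eta^{*}$ integer-valued) shows $a_s(i)$ has the same fractional part as $y^{*}_i$, so $M-a_s(i)$ is integral exactly when $y^{*}_i$ is and proper half-integral exactly when $y^{*}_i$ is; hence $(x^{*}_i,y^{*}_i)\in\mathbb{G}$ (a little extra care, via complementary slackness with a primal optimal multiflow, handles the clamped and degenerate nodes). Together with weak duality, this shows the optima of (PMF) and (DMF) coincide.

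The main obstacle is not weak duality or the LP-duality step (routine once the path-LP is written down), but two things. First, the half-integrality of the dual in the precise form required — that $(\mathrm{D}')$ admits an \emph{integral} optimum, equivalently that (DMF) attains the LP value over the discrete set $\mathbb{G}^n$ — which is where the combinatorial structure of the multiflow problem is genuinely exploited (and is presumably the content of the cited~\cite{Hirai2013Half}). Second, the realization: its cross-branch inequality is exactly where the boundary condition $x_s=(M,s)$ and the bound $\mathrm{dist}_\ell(s,t)\ge 2M$ are used, and the fractional-part bookkeeping is what pins the constructed potential onto the grid $\mathbb{G}\subseteq\mathbb{T}\times\mathbb{R}$ rather than merely into $\mathbb{T}\times\mathbb{R}$; keeping both correct simultaneously is the delicate part.
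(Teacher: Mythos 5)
First, note that the paper does not prove this statement at all: Theorem~\ref{thm:maxflow_minpot} is imported as a black box from \cite{Hirai2013Half}, so there is no in-paper argument to compare against, and your attempt has to stand on its own. Within it, the weak-duality computation is correct, and the reduction of strong duality to two ingredients --- (a) the scalar path-LP dual $(\mathrm{D}')$ admits an \emph{integral} optimal $\eta^*$, and (b) an integral optimum of $(\mathrm{D}')$ can be realized as a point of $\mathbb{G}^n$ of the ``ball in $\mathbb{T}$'' form with the same objective value --- is sensible. Your realization step (b) is essentially right: the shortest-distance construction, the cross-branch estimate via $\mathrm{dist}_\ell(s,t)\geq 2M$, and the fractional-part bookkeeping all work, and the remaining defect you flag (a node sent to the origin of $\mathbb{T}$ while $y^*_i$ is a proper half-integer, so $(0,y^*_i)\notin\mathbb{G}$) is indeed repairable exactly as you suggest: complementary slackness with an optimal path-packing forces such a node with $c_i>0$ onto a tight path and hence off the origin, and nodes with $c_i=0$ can have $y^*_i$ rounded up for free.

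The genuine gap is ingredient (a), which is not a routine step but the entire content of the theorem: given weak duality and your realization, ``$(\mathrm{D}')$ has an integral optimum'' is \emph{equivalent} to the statement being proved (any optimal potential $(x,y)\in\mathbb{G}^n$ yields the integral dual $\eta=2y$, and conversely). Neither of your suggested sources delivers it. Proposition~\ref{prop:L_conv} asserts L-convexity of $h_{d,M}$ as a function on the discrete set $\mathbb{G}^n$; this constrains the behavior of $h$ on the grid but says nothing about whether $\min_{\mathbb{G}^n} h$ attains the continuous LP lower bound $\min(\mathrm{D}')$ --- closing that gap between the grid minimum and the LP value is precisely the half-integrality/duality theorem, so invoking L-convexity here is circular. ``Karzanov-type metric arguments'' is a pointer to the kind of proof that does work (splitting-off/metric uncrossing arguments as in \cite{Karzanov1979minimum,Hirai2013Half}), but no such argument is actually given. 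So the proposal reduces the theorem cleanly to its combinatorial core and then defers that core to the very citation the paper relies on; as a self-contained proof it is incomplete at exactly the decisive point.
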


The following is the complementary slackness conditions.

\begin{lemma}[\cite{Hirai2013Half}]
\label{lem:slack}
Let $f=(\mathcal{P},\lambda)$ a multiflow and $(x,y)\in \mathbb{G}^n$ be a potential.
Then $f$ and $(x,y)$ are both optimal if and only if the following condition holds:
\begin{align}
    \mathrm{dist}(x_i,x_j)-y_i-y_j&=2d_{ij} && (ij\in E : f(ij)>0),\label{cond:slack1}\\
    f(i)&=c(i) && (i\in V\setminus S : y_i>0),\label{cond:slack2}\\
    \sum_{t=1}^\ell \mathrm{dist}(x_{i_{t-1}},x_{i_t})&=\mathrm{dist}(x_{i_0},x_{i_\ell})&& (P=(i_0,\dotsc,i_{\ell})\in\mathcal{P} : f(P)>0).\label{cond:slack3}
\end{align}
\end{lemma}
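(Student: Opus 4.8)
The plan is to derive the complementary slackness conditions directly from the LP-duality between PMF and DMF established in Theorem~\ref{thm:maxflow_minpot}. First I would observe that, for any multiflow $f=(\mathcal{P},\lambda)$ and any potential $(x,y)$, there is an elementary weak-duality chain: one computes $\sum_{P\in\mathcal{P}}2M\lambda(P)-\sum_{e\in E}2d(e)f(e)$ by regrouping the flow along each path $P=(i_0,\dots,i_\ell)\in\mathcal{P}$ as a telescoping sum. Using \eqref{eq:dual_mnmf_d} one has $2M = 2M + y_{i_0} + y_{i_\ell} = \big(x_{i_0}+x_{i_\ell}\big)\text{-type term}$; more precisely, along $P$ the quantity $2M$ equals $\mathrm{dist}(x_{i_0},x_{i_\ell})+y_{i_0}+y_{i_\ell}$ since the endpoints are terminals at distance $2M$ from each other (recall $x_s=(M,s)$ lies on distinct rays, so $\mathrm{dist}(x_{i_0},x_{i_\ell})=2M$) with zero radius. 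Then the triangle inequality gives $\mathrm{dist}(x_{i_0},x_{i_\ell})\le \sum_{t=1}^\ell\mathrm{dist}(x_{i_{t-1}},x_{i_t})$, and \eqref{eq:dual_mnmf_c} bounds each edge term: $\mathrm{dist}(x_{i_{t-1}},x_{i_t})\le 2d_{i_{t-1}i_t}+y_{i_{t-1}}+y_{i_t}$. Summing $\lambda(P)$ times these inequalities over all $P$, and collecting the $y_i$ contributions by how many path-units pass through node $i$ (which is exactly $f(i)$), yields
\[
\sum_{P\in\mathcal{P}}2M\lambda(P)-\sum_{e\in E}2d(e)f(e)\ \le\ \sum_{i\in V}2f(i)y_i\ -\ (\text{interior-}y\text{ cancellations})\ \le\ \sum_{i\in V\setminus S}2c(i)y_i,
\]
where the last step uses $f(i)\le c(i)$ for $i\in V\setminus S$, the boundary condition $y_s=0$ for $s\in S$, and $y_i\ge 0$ from \eqref{eq:dual_mnmf_e}. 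I should be careful about the bookkeeping of $y_i$: a path through $i$ as an interior node contributes $y_i$ twice (once for the incoming edge bound, once for the outgoing), while as an endpoint it contributes once, but endpoints are terminals with $y=0$, so the coefficient of $y_i$ in the bound is exactly $2f(i)$; summing over $i\in V\setminus S$ and bounding by $2c(i)y_i$ closes the chain.

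Having this weak-duality inequality, the proof is then a standard argument: by Theorem~\ref{thm:maxflow_minpot} the two optima coincide, so $f$ and $(x,y)$ are both optimal if and only if equality holds throughout the chain above. Tracing the three places where an inequality was used gives exactly the three stated conditions. Equality in the edge bound \eqref{eq:dual_mnmf_c} is needed only on edges carrying flow, giving \eqref{cond:slack1} for $ij\in E$ with $f(ij)>0$; equality in $f(i)\le c(i)$ is needed only when the coefficient $y_i$ is nonzero, giving \eqref{cond:slack2} for $i\in V\setminus S$ with $y_i>0$; and equality in the triangle inequality along a path is needed only for paths with $\lambda(P)>0$, i.e.\ $f(P)>0$, which is \eqref{cond:slack3}. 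Conversely, if \eqref{cond:slack1}--\eqref{cond:slack3} all hold, every inequality in the chain becomes an equality, so the PMF-value of $f$ equals the DMF-value of $(x,y)$, and by Theorem~\ref{thm:maxflow_minpot} both are optimal.

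The main obstacle I anticipate is the careful handling of the geometry of $\mathbb{T}$ in the telescoping step: one must verify that for a terminal-to-terminal path the endpoint contribution really does collapse to $\mathrm{dist}(x_{i_0},x_{i_\ell})=2M$ (using that distinct terminals sit on distinct rays of the infinite $k$-star at distance $M$ from the center, so their mutual distance is $M+M=2M$), and that the triangle inequality $\mathrm{dist}(x_{i_0},x_{i_\ell})\le\sum_t\mathrm{dist}(x_{i_{t-1}},x_{i_t})$ is applied with the correct orientation so that equality corresponds precisely to the additivity condition \eqref{cond:slack3}. The remaining accounting — that the total $y_i$-coefficient over a fixed multiflow equals $2f(i)$ and that interior terminal radii vanish — is routine once the path-by-path decomposition is set up cleanly. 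Since all of this is already in \cite{Hirai2013Half}, I would present it compactly, citing that reference for the full duality derivation and giving only the slackness-tracing argument in detail.
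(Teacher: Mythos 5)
Your argument is correct, and it is essentially the intended proof: the paper itself does not prove Lemma~\ref{lem:slack} (it is quoted from \cite{Hirai2013Half}), and the standard weak-duality/complementary-slackness derivation you give — telescoping each $S$-path, using $\mathrm{dist}(x_{i_0},x_{i_\ell})=2M$ and $y_{i_0}=y_{i_\ell}=0$ at the terminals, the triangle inequality, the edge constraint \eqref{eq:dual_mnmf_c}, and $f(i)\le c(i)$, $y_i\ge 0$, then invoking Theorem~\ref{thm:maxflow_minpot} to force equality throughout and tracing tightness of each inequality with positive multiplier — is exactly the route behind the cited result. The only blemish is the garbled middle display (``interior-$y$ cancellations''); the clean chain is $2M\,\mathrm{val}(f)-\sum_{e\in E}2d(e)f(e)\le \sum_{i\in V\setminus S}2f(i)y_i\le \sum_{i\in V\setminus S}2c(i)y_i$, with the coefficient $2f(i)$ justified exactly as you describe.
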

The condition \eqref{cond:slack3} says that
by embedding $i\mapsto x_i$, the path $P$ in $(V,E)$ becomes a shortest path in $\mathbb{T}$.

The following lemma gives a range where an optimal potential exists.

\begin{lemma}[\cite{Hirai2018Dual}]
\label{lem:range}
There exists an optimal potential $(x,y)\in \mathbb{G}^n$ with $\mathrm{dist}(0,x_i)\leq M$ and $y_i\leq 2M$ for any $i\in V\setminus S$.
\end{lemma}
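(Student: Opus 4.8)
The plan is to take an arbitrary optimal potential $(x,y)\in\mathbb{G}^n$ — which exists by Theorem~\ref{thm:maxflow_minpot} together with the fact that DMF has a feasible solution (e.g.\ one can check $y\equiv 0$ with $x_i$ placed on the appropriate arm at distance $M$ satisfies all constraints) — and then modify it, one coordinate at a time if necessary, so that all the nonterminal balls lie in the bounded region $\mathrm{dist}(0,x_i)\le M$, $y_i\le 2M$, without increasing the objective $\sum_{i\in V\setminus S}2c_iy_i$ and without violating feasibility \eqref{eq:dual_mnmf_c}--\eqref{eq:dual_mnmf_e}. Since each $c_i\ge 0$, it suffices to show we can decrease any oversized $y_i$ and "pull in" any $x_i$ that is too far out, while preserving feasibility.

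First I would bound the radii. Suppose $y_i>2M$ for some nonterminal $i$. Each terminal $s$ has $y_s=0$ and $x_s$ at distance $M$ from the origin $0\in\mathbb{T}$, so $\mathrm{dist}(x_i,x_s)\le M+\mathrm{dist}(0,x_i)$. If $i$ is adjacent to some terminal, \eqref{eq:dual_mnmf_c} already forces $\mathrm{dist}(x_i,x_s)\le y_i+y_s+2d_{is}=y_i+2d_{is}$, which is no constraint from above on $y_i$; the key point is that \emph{decreasing} $y_i$ only \emph{relaxes} constraint \eqref{eq:dual_mnmf_c} (the left side $\mathrm{dist}(x_i,x_j)-y_i-y_j$ increases when $y_i$ decreases — wait, it increases, so that is the wrong direction). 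Let me restate: in \eqref{eq:dual_mnmf_c} a \emph{larger} $y_i$ makes the inequality easier, so decreasing $y_i$ makes it harder. So the argument must instead \emph{simultaneously} adjust $x_i$: the natural move is to shrink the ball $(x_i,y_i)$ toward a ball still containing "enough." The right framework is that a potential corresponds, via $i\mapsto$ ball $B(x_i,y_i)$, to a collection of balls such that balls of adjacent nodes are within distance $2d_{ij}$; by Lemma~\ref{lem:slack}'s geometry, what matters is the "reach." Concretely, I would replace $(x_i,y_i)$ by the point of $\mathbb{G}$ minimizing $y_i$ subject to: distance to every neighbor's ball is still $\le 2d_{ij}$, and $y_i\ge 0$. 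This is a shortest-ball / projection argument; since the terminals are fixed at radius $0$ and distance $M$, and since in the end every node lies on a shortest $S$-path geometry, one shows the minimal such ball has radius $\le 2M$ and center at distance $\le M$. The cleanest route: project each $x_i$ radially toward $0$ until $\mathrm{dist}(0,x_i)\le M$, noting that this never increases any pairwise distance $\mathrm{dist}(x_i,x_j)$ by more than it decreases... actually radial projection toward the center is $1$-Lipschitz on $\mathbb{T}$ and non-expansive, so it only helps \eqref{eq:dual_mnmf_c}; and it fixes the terminals since they already satisfy $\mathrm{dist}(0,x_s)=M$.

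So the key steps in order are: (1) invoke Theorem~\ref{thm:maxflow_minpot} and feasibility of DMF to get \emph{some} optimal potential $(x,y)$; (2) apply the radial retraction $r\colon\mathbb{T}\to\mathbb{T}$ that truncates every point to distance $\le M$ from $0$ (defined armwise), observe $r$ fixes all terminals, is $1$-Lipschitz, hence $(r(x_i),y_i)$ still satisfies \eqref{eq:dual_mnmf_c}, and maps $\mathbb{G}$-points to $\mathbb{G}$-points after the accompanying parity-correct rounding — one must check the rounding does not break membership in $\mathbb{G}^n$, using the structure in Section~\ref{subsec:l_conv}; the objective and \eqref{eq:dual_mnmf_d}, \eqref{eq:dual_mnmf_e} are untouched; (3) with centers now bounded, cap each radius by replacing $y_i$ with $\min\{y_i,2M\}$ and argue feasibility of \eqref{eq:dual_mnmf_c} is preserved — here use that for the "hard" constraints (those previously relying on $y_i>2M$) the centers are within distance $2M$ of a terminal-anchored center, so $\mathrm{dist}(x_i,x_j)\le 2M\le y_i+y_j+2d_{ij}$ already with the capped radii; (4) re-round to land back in $\mathbb{G}^n$ if the cap created a parity mismatch, and check optimality is retained since the objective only decreased and the value is bounded below by the primal optimum.

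The main obstacle is step (2)--(4): the retraction and capping operations live in the continuous space $\mathbb{T}\times\mathbb{R}$, and one must show the rounded image stays in the discrete grid $\mathbb{G}^n$ while keeping the edge constraints \eqref{eq:dual_mnmf_c} — the parity coupling between $x_i$ and $y_i$ (both integral or both proper half-integral) is exactly what makes naive truncation fail, so the honest argument likely truncates $(x_i,y_i)$ jointly to the nearest feasible $\mathbb{G}$-point in the bounded region and verifies the distance inequalities survive this joint rounding, which is where the technical bookkeeping lies. I expect the authors handle this either by a direct case analysis on $\mathbb{G}$ or by appealing to an $L$-convexity/retraction lemma from \cite{Hirai2018Dual,Hirai2018L} that bounded-box truncation preserves feasibility of $L$-convex constraint systems.
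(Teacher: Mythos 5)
Your plan is essentially the paper's own proof: truncate each out-of-range center radially to distance $M$ from the origin (to $M-1/2$ in the non-integral case, which is exactly the ``direct case analysis on $\mathbb{G}$'' you anticipated for the parity coupling), observe this is non-expansive so \eqref{eq:dual_mnmf_c} survives and the objective is unchanged, and then cap each radius at $2M$, which stays feasible because all pairwise center distances are now at most $2M$; optimality is retained since the objective does not increase. The only blemish is your parenthetical claim that $y\equiv 0$ with centers at distance $M$ on the arms is feasible --- it violates \eqref{eq:dual_mnmf_c} for edges whose endpoints sit on different arms --- but existence of an optimal potential is not the substance of the lemma (a genuine feasible start is \eqref{eq:init}), so this does not affect your argument.
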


\begin{proof}
Let $(x',y')=((x',s'),y')=(((x'_1,s'_1),y'_1),\dotsc,((x'_n,s'_n),y'_n))\in \mathbb{G}^n$ be an optimal potential.
We define $(x,y)\in \mathbb{G}^n$ by
\[
    (x_i,y_i):=\begin{dcases*}
    ((x'_i,s'_i),y'_i) & if $x'_i\leq M$,\\
    ((M,s'_i),y'_i) & if $x'_i>M$ and $(x'_i,y'_i)$ is integral,\\
    ((M-1/2,s'_i),y'_i) & if $x'_i>M$ and $(x'_i,y'_i)$ is non-integral
    \end{dcases*}\quad (i\in V\setminus S),
\]
and $(x_s,y_s)=((M,s),0)$ for $s\in S$.
Since $\mathrm{dist}(x_i,x_j)\leq \mathrm{dist}(x'_i,x'_j)$ for any $ij\in E$,
we see that $(x,y)$ is a new potential with the same dual objective function value as $(x',y')$.
Also, since $\mathrm{dist}(x_i,x_j)$ is at most $2M$ for any $ij\in E$,
we can modify $y_i$ so that $y_i\leq 2M$ holds for any $i\in V$.
Now the lemma follows.
\end{proof}

The above proof shows that any potential $(x,y)$ can be replaced,
without changing the objective function value,
by a potential $(x',y')$ with $\mathrm{dist}(0,x'_i)\leq M$.
Therefore we assume in the following that a potential $(x,y)$ satisfies
\begin{equation}\label{eq:range}
\mathrm{dist}(0,x_i)\leq M \quad (i \in V).
\end{equation}

\subsection{Optimality via bisubmodular flow}
\label{subsec:ext_bidir}

In this subsection we explain that the optimality of a given potential
is established by a reducible BF, which is outlined as follows.
Given a potential $(x,y)\in \mathbb{G}^n$,
we try to find a multiflow $f$ satisfying the slackness
\eqref{cond:slack1}, \eqref{cond:slack2}, and \eqref{cond:slack3};
if it exists, then both $f$ and $(x,y)$ are guaranteed to be optimal.
Observe that \eqref{cond:slack1} and \eqref{cond:slack2}
are conditions for the support $(f(e))_{e \in E}$ of a multiflow $f$.
In fact,
the third condition \eqref{cond:slack3} can also be written as a
condition for the support of $f$.
Then the existence of a multiflow $f$ satisfying the slackness with $(x,y)$
is equivalent to the existence of
a certain edge-weight $\psi$, which we call an {\em $(x,y)$-feasible support},
on a bidirected network associated with $(x,y)$.
Any $(x,y)$-feasible support $\psi$ is actually the support of some
optimal multiflow $f$.
Then $f$ can be recovered from $\psi$ in polynomial time (Lemma~\ref{lem:admi}).
Moreover, an $(x,y)$-feasible support is precisely a feasible flow
of a reducible BF (Corollary~\ref{cor:supp_algo}).

Let $(x,y)\in \mathbb{G}^n$ be a potential.
As mentioned above, we are interested in the support
of a multiflow satisfying the slackness \eqref{cond:slack1},
\eqref{cond:slack2}, and \eqref{cond:slack3} with $(x,y)$.
Motivated by \eqref{cond:slack1}, define an edge subset $E_{=} \subseteq E$ by
\begin{equation}
E_= := \{ij \in E \mid \mathrm{dist}(x_i,x_j)-y_i-y_j=2d_{ij}\},
\end{equation}
where other edges are not used by our target multiflows.
From the subnetwork $((V,E_{=}),S,c)$ and $(x,y)$,
we construct a bidirected network $((U,\tilde E;\partial),\underline{c},\overline{c})$ as follows.
See Figure~\ref{fig:bisub_rep} for the following construction.
In the figure, $x_i=0$, $x_j,x_{j'}\neq 0$, $y_j=0$, and $y_{j'}\neq 0$.

\begin{figure}[t]
\centering
\includegraphics[scale=0.8]{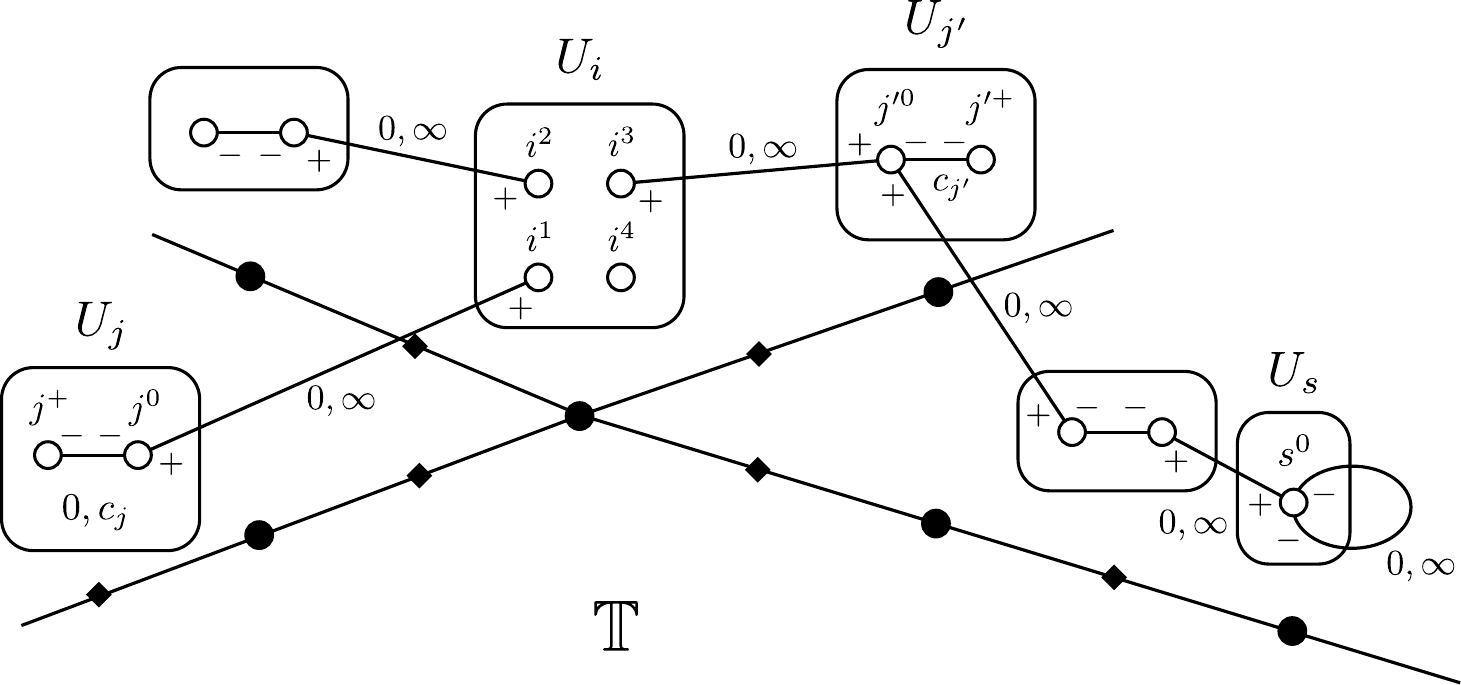}
\caption{Construction of the bidirected network.}
\label{fig:bisub_rep}
\end{figure}

For a nonterminal node $i \in V \setminus S$,
if $x_i = 0$, then consider $k$-node set $U_i := \{i^1,i^2,\ldots,i^k\}$,
where $i^s$ corresponds to the $s$-branch for $s \in S$.
If $x_i \neq 0$, then the node set $U_i := \{i^0,i^+\}$ consists of
$i^0$ (node directed to the origin) and $i^+$ (node away from the origin).
For a terminal $s \in S$, define $s^0:=s$ and $U_s := \{s^0\}$.
The node set $U$ of the bidirected network is defined as the
(disjoint) union of $U_i$ for all $i \in V$.
Next we define the edge set $\tilde E$.
Since $d_{ij}$ is positive,
we see that $x_i$ and $x_j$ are distinct points for $ij \in E_=$.
For each edge $ij\in E_=$, we replace the endpoints $i,j$
according to the position of $x_i$ and $x_j$ in $\mathbb{T}$.
If $x_i=0$ and $x_j$ is on $s$-th branch of $\mathbb{T}$,
replace $ij$ by $i^sj^0$.
If $x_i,x_j\neq 0$ are on distinct branches, replace $ij$ by $i^0j^0$.
If $x_i,x_j\neq 0$ are on the same branch
and $\mathrm{dist}(x_i,0)<\mathrm{dist}(x_j,0)$, replace $ij$ by $i^+j^0$.
Note that $j$ may be a terminal but $i$ is a nonterminal by \eqref{eq:range}.
The replaced edge set is also denoted by $E_{=}$.
Also, we add an edge $i^0i^+$ for each nonterminal node $i\in
V\setminus S$ with $x_i\neq 0$,
and add an edge $s^0 s^0$, which is a self-loop, for each terminal $s\in S$.
Let $E_-:=\{i^0i^+\mid i\in V\setminus S,\ x_i\neq 0\}$ and $E_S:=\{s^0
s^0\mid s\in S\}$.
Then the edge set $\tilde E$ of the bidirected network is defined
as $E_{=} \cup E_- \cup E_S$.
A boundary operator $\partial:\tilde{E}\rightarrow \mathbb{Z}^U$ is defined by
$\partial(ij) :=\chi_i+\chi_j$ if $ij\in E_=$ and $\partial(ij)
:=-\chi_i-\chi_j$ otherwise.
Define a lower capacity
$\underline{c}:\tilde{E}\rightarrow\underline{\mathbb{R}}$
and an upper capacity $\overline{c}:\tilde{E}\rightarrow
\overline{\mathbb{R}}$ by
\begin{equation}
    \underline{c}(e):=\begin{dcases*}
        c(i) & if $e\in E_-$ and $y_i>0$,\\
        0 & otherwise,
    \end{dcases*}\quad
    \overline{c}(e):=\begin{dcases*}
        c(i) & if $e\in E_-$,\\
        \infty & otherwise
    \end{dcases*}
\end{equation}
for each edge $e\in \tilde{E}$.

We say that a half-integral edge-weight function
$\psi:\tilde{E}\rightarrow \mathbb{Z}/2$
is an \emph{$(x,y)$-feasible support}
if it satisfies
\begin{equation}
\label{cond:supp1}\underline{c}(e) \leq \psi(e) \leq \overline{c}(e)
\end{equation}
for each edge $e\in \tilde{E}$ and
\begin{align}
\label{cond:supp2} (\nabla\psi)(i')&=0 && (x_i\neq 0,\ i'\in U_i),\\
\label{cond:supp3} (\nabla\psi)(i^\ell)-
(\nabla\psi)(U_i\setminus\{i^\ell\}) &\leq 0 && (x_i=0,\
\ell=1,\dotsc,k),\\
\label{cond:supp4}(\nabla\psi)(U_i)&\leq 2c(i) && (x_i=0,\ y_i=0),\\
\label{cond:supp5}(\nabla\psi)(U_i)&= 2c(i) && (x_i=0,\ y_i>0),\\
\label{cond:supp6}(\nabla\psi)(U_i)&\in \mathbb{Z}_+ && (x_i=0)
\end{align}
for each node $i\in V$.
See Section~\ref{subsubsec:bisub} for the definition of the boundary operator $\nabla$.

Any half-integral optimal multiflow $f$
gives rise to an $(x,y)$-feasible support from
its support $(f(e))_{e\in E}$.
Indeed, let $\psi_f:\tilde{E}\rightarrow \mathbb{Z}/2$ be a function defined by
\[
    \psi_f(i'j'):=\begin{dcases*}
        f(ij) & if $i'j'\in E_=$, $i'\in U_{i}$, and $j'\in U_{j}$,\\
        f(i) &  if $i'j'\in E_-$ and $i',j'\in U_i$,\\
        f(s)/2 & if $i'j'\in E_S$ and $i'j'=s^0s^0$
    \end{dcases*}\quad (i'j'\in \tilde{E}).
\]
Then $\psi_f$ is an $(x,y)$-feasible support. Indeed,
it is obvious that \eqref{cond:supp4} is the capacity constraint and
\eqref{cond:supp5} corresponds to \eqref{cond:slack2}.
\eqref{cond:supp1} also corresponds to the capacity constraint and \eqref{cond:slack2}.
\eqref{cond:supp3} comes from \eqref{cond:slack3}:
In the embedding $i \mapsto x_i \in \mathbb{T}$, a flow coming
from the $\ell$-th branch must go to the other branches.
Hence the total flow-value $(\nabla \psi) (i_{\ell})$ for such flows
is at most the total flow-value $(\nabla \psi) (U_i \setminus \{i_{\ell}\})$
of flows going to branches other than the $\ell$-th branch.
The condition \eqref{cond:supp2} is also explained by the same way.
The condition \eqref{cond:supp6} follows from the half-integrality of $f$.

Conversely, we can decompose an $(x,y)$-feasible support into a
half-integral multiflow satisfying
\eqref{cond:slack1}--\eqref{cond:slack3}:

\begin{lemma}[{\cite{Hirai2013Half}}]
\label{lem:admi}
Let $(x,y)$ be a potential.
\begin{enumerate}
\renewcommand{\labelenumi}{{\rm (\arabic{enumi}).}}
\item $(x,y)$ is optimal
    if and only if an $(x,y)$-feasible support exists.
\item Given an $(x,y)$-feasible support,
    we can obtain a half-integral optimal multiflow in $O(nm)$ time.
\end{enumerate}
\end{lemma}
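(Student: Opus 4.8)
The two parts are intertwined, so the plan is to establish part (2) first and then deduce part (1) from it together with the two facts already recorded just above the lemma. For the \emph{only if} direction of part (1): if $(x,y)$ is optimal for DMF, then by Theorem~\ref{thm:maxflow_minpot} the optimum of PMF coincides with it, and (PMF being an LP that is bounded above) PMF has a half-integral optimal multiflow $f$ by the primal half-integrality of \cite{Babenko2012Min}; since $f$ and $(x,y)$ are both optimal, Lemma~\ref{lem:slack} gives the slackness \eqref{cond:slack1}--\eqref{cond:slack3}, and then the map $\psi_f$ defined just before the lemma is an $(x,y)$-feasible support (this is exactly the verification in the paragraph preceding the lemma). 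For the \emph{if} direction of part (1): given an $(x,y)$-feasible support $\psi$, part (2) produces a half-integral multiflow $f$ — with edge-flow values read off from $\psi$ — satisfying \eqref{cond:slack1}--\eqref{cond:slack3}, and Lemma~\ref{lem:slack} then certifies that both $f$ and $(x,y)$ are optimal; in particular $(x,y)$ is optimal (and $f$ is a half-integral minimum-cost maximum multiflow, as claimed in part (2)).

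The content therefore lies in part (2): decomposing an $(x,y)$-feasible support $\psi$ into such a multiflow. I would first pass to $g:=2\psi$, which is integer-valued, and write $g$ as a nonnegative-integer combination of $S$-walks in the bidirected network $((U,\tilde{E};\partial),\underline{c},\overline{c})$ that are \emph{monotone} on $\mathbb{T}$ under the embedding $i\mapsto x_i$: each such walk descends along one branch of $\mathbb{T}$, possibly meets an origin node, and ascends along another branch, using a capacitated edge $i^0i^+$ to transit a visited node with $x_i\neq0$, an origin gadget $U_i$ to switch branches when $x_i=0$, and the self-loops $s^0s^0$ to start and terminate at terminals. Halving the resulting multiplicities yields a half-integral multiflow $f=(\mathcal{P},\lambda)$; by construction $f(ij)>0$ only for $ij\in E_{=}$, which gives \eqref{cond:slack1}; $f(i)=c(i)$ whenever $y_i>0$, because then $\psi$ saturates the lower capacity $c(i)$ on $i^0i^+$ (when $x_i\neq0$) or \eqref{cond:supp5} forces $(\nabla\psi)(U_i)=2c(i)=2f(i)$ (when $x_i=0$), which gives \eqref{cond:slack2}; and every decomposed path, being monotone, is $\mathbb{T}$-shortest, which gives \eqref{cond:slack3}.

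The decomposition itself is carried out greedily: while $g$ is not identically $0$ on $E_{=}\cup E_S$, trace one monotone $S$-walk using only edges on which $g$ is still positive, push the bottleneck amount along it, and update $g$; each step zeroes at least one edge value, so there are $O(m)$ walks, each of length $O(n)$, and this yields the $O(nm)$ bound. The crux — and the step I expect to be the main obstacle — is proving that such a positive-residual monotone $S$-walk always exists while $g$ is nonzero, and that the greedy process strands no flow. At an origin node $i$ (with $x_i=0$) a walk that arrives from branch $\ell$ must leave along a different branch, and the conditions \eqref{cond:supp3}, which read $(\nabla\psi)(i^\ell)\le(\nabla\psi)(U_i\setminus\{i^\ell\})$, are precisely the Hall-/transportation-type inequalities guaranteeing that the residual flow incident to $i$ can always be continued to some branch $\neq\ell$, and ultimately that the half-edges at $i$ pair up across distinct branches; the evenness of $(\nabla\psi)(U_i)$ (from \eqref{cond:supp6} after doubling), the conservation \eqref{cond:supp2} at the nodes with $x_i\neq0$, and the capacity bounds \eqref{cond:supp1} and \eqref{cond:supp4} ensure that walks can be closed off at terminals, that no circulation stays trapped on a node with $y_i>0$ (where the lower capacity must remain saturated), and that all these invariants survive each push. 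Making this invariant-maintenance argument precise — that removing one monotone $S$-walk preserves \eqref{cond:supp3}, so the peeling can be iterated to exhaustion — is the combinatorial heart of the duality theorem of \cite{Hirai2013Half}; I would model it on the standard flow-decomposition proof, reading each origin gadget as a bipartite transportation instance whose feasibility is exactly \eqref{cond:supp3}.
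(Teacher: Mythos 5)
Your part (1) and the reduction of everything to the decomposition statement in part (2) are fine, and your overall strategy is the intended one: the paper does not prove this lemma but cites \cite{Hirai2013Half}, remarking only that the decomposition is done by \cite[Algorithm 1]{Hirai2018Dual} as an analogue of ordinary flow decomposition, which is exactly the route you sketch. The problem is that the one step you yourself flag as ``the main obstacle'' is genuinely the whole content of part (2), and the greedy procedure as you literally state it does not work. If you peel monotone $S$-walks with bottleneck amounts, a single push can destroy the dominant-branch inequalities \eqref{cond:supp3} at an origin node and strand residual flow: for instance, if the residual boundary of $2\psi$ at a node $i$ with $x_i=0$ is $3$ on branch $1$ and $2$ on each of branches $2$ and $3$ (which satisfies \eqref{cond:supp3}), then pairing branches $2$ and $3$ and pushing the bottleneck value $2$ leaves residual $(3,0,0)$, which can never again be routed across two distinct branches, even though the original support was feasible. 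So ``existence of a positive-residual monotone $S$-walk while $g\neq 0$'' is false for an arbitrary greedy choice; the walk through each origin gadget and the push amount must be chosen so that \eqref{cond:supp3} (together with \eqref{cond:supp2} and the parity/evenness from \eqref{cond:supp6}) is maintained---e.g.\ always pairing the branch carrying the maximum residual and capping the amount by the slack of the inequality for that branch---and this invariant-preserving selection rule, plus the verification that it still terminates in $O(m)$ peels (a push may now be limited by a node inequality becoming tight rather than by an edge being zeroed, so the counting argument has to be redone), is precisely what \cite[Algorithm 1]{Hirai2018Dual} supplies and what your write-up defers.

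In short: the skeleton (complementary slackness via Lemma~\ref{lem:slack}, the correspondence $f\mapsto\psi_f$ and its converse, monotone-walk decomposition on the bidirected network) is correct and matches the cited proof, but the combinatorial core of part (2)---a walk/amount selection rule under which the peeling provably exhausts $\psi$ and runs in $O(nm)$ time---is asserted rather than proved, and the naive bottleneck version you describe fails, so as it stands the proposal has a real gap rather than a routine omission.
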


The algorithm for Lemma \ref{lem:admi} given in \cite[Algorithm 1]{Hirai2018Dual}
is an analogue of the flow decomposition procedure in the ordinary network flow.

Now our task is to find an $(x,y)$-feasible support.
This can be done by solving a reducible BF.
Recall that $S=\{1,\dotsc,k\}$.
Motivated by \eqref{cond:supp3} and \eqref{cond:supp4}, we consider the following condition for $z\in \mathbb{R}_+^S$ and $c\in \mathbb{R}_+$:
\begin{gather}
\label{cond:nf1}z(S)\leq 2c,\\
\label{cond:nf2}z(t)\leq z(S\setminus\{t\}) \quad (t\in S).
\end{gather}
Let $P_c$ be sets of vectors of $\mathbb{R}_+^S$ satisfying \eqref{cond:nf1} and \eqref{cond:nf2}.
We call $P_c$ as the \emph{node-flowing polytope} with capacity $c$.
Also, motivated by \eqref{cond:supp5}, we consider the condition
\begin{equation}
\label{cond:nf1t}z(S)=2c.
\end{equation}
Let $\overline{P}_c$ be sets of vectors of $\mathbb{R}_+^S$ satisfying \eqref{cond:nf1t} and \eqref{cond:nf2}.
We call $\overline{P}_c$ as the \emph{tight node-flowing polytope} with capacity $c$.


\begin{lemma}
\label{lem:node_flowing}
$P_c$ and $\overline{P}_c$ are bisubmodular polyhedra
of reducible bisubmodular functions $\beta_c$ and $\overline{\beta}_c$, respectively,
where $\beta_c, \overline{\beta}_c:3^S\rightarrow \mathbb{R}$ are reducible to submodular functions whose exchange capacity oracles are calculated in $O(k)$ time.
\end{lemma}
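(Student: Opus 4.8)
The plan is to exhibit explicit submodular functions on $S^\pm$ that witness the reducibility of $\overline{\beta}_c$ and $\beta_c$ through Lemma~\ref{lem:reducible}, and then to bound the exchange-capacity oracle. Throughout put $P(X):=\{i\in S\mid i^+\in X\}$ and $N(X):=\{i\in S\mid i^-\in X\}$ for $X\subseteq S^\pm$; the map $X\mapsto P(X)$ is a lattice homomorphism and $X\mapsto S\setminus N(X)$ a lattice anti-homomorphism from $2^{S^\pm}$ to $2^S$. Let $g_c:2^S\to\mathbb{R}$ be the truncated rank $g_c(T):=\min(c\lvert T\rvert,2c)$, which is submodular because $\min(\lvert T\rvert,2)$ is a nondecreasing concave function of $\lvert T\rvert$.

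For the tight polytope I would set $\overline{\rho}_c(X):=g_c(P(X))+g_c(S\setminus N(X))-g_c(S)$. Submodularity is immediate from the above (a submodular function composed with meet/join-preserving maps, summed, minus a constant), and t-monotonicity reduces, via submodularity of $g_c$, to $g_c(P)+g_c(S\setminus N)\ge g_c(P\setminus N)+g_c((S\setminus N)\cup(P\cap N))$, which holds since $P\setminus N\subseteq S\setminus N$ makes the marginal value of $P\cap N$ larger over the smaller set. For $x\in\mathcal{B}(\overline{\rho}_c)$, evaluating the base-polyhedron inequalities on $S^+$, $S^-$, $\{i^-\}$, $S^\pm\setminus\{i^+\}$ and $\{t^+\}\cup(S\setminus\{t\})^-$ forces $z:=\Phi(x)$ to satisfy $z(S)=2c$, $z\ge 0$ and $2z_t\le z(S)$, i.e.\ $z\in\overline{P}_c$; conversely, for $z\in\overline{P}_c$ the bounds $z(T)\le g_c(T)$ and $z(T)\ge g_c(S)-g_c(S\setminus T)$ (valid as $\lvert S\rvert\ge 3$) give $\Psi(z)(X)=z(P(X))-z(N(X))\le\overline{\rho}_c(X)$, so $\Psi(z)\in\mathcal{B}(\overline{\rho}_c)$ and $z=\Phi(\Psi(z))$. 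Hence $\Phi(\mathcal{B}(\overline{\rho}_c))=\overline{P}_c$ and $\Psi(\Phi(\mathcal{B}(\overline{\rho}_c)))\subseteq\mathcal{B}(\overline{\rho}_c)$, so Lemma~\ref{lem:reducible} applies with $\overline{\beta}_c(Y,Z)=g_c(Y)+g_c(S\setminus Z)-g_c(S)$.

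The non-tight polytope is the main obstacle. Since $P_c=\bigcup_{0\le c'\le c}\overline{P}_{c'}$ its support function is $\beta_c(Y,Z)=\max\{0,\overline{\beta}_c(Y,Z)\}$, but $\max\{0,\overline{\rho}_c\}$ is not submodular, and because $\rho_c(S^\pm)=0$ is forced while $\rho_c(S^+)=2c$ and $\rho_c(S^-)=0$, no additive form $g(P(X))+h(S\setminus N(X))$ can work. I expect to take $\rho_c$ to be an explicit minimum of submodular building blocks incorporating a penalty term that activates precisely on the sets $\{t^+\}\cup(S\setminus\{t\})^-$, so that $\rho_c(Y^+\cup Z^-)=\beta_c(Y,Z)$ on reduced sets while $\rho_c$ stays submodular and t-monotone; proving its submodularity is the delicate step and will require a case analysis on $\lvert P(X)\rvert$, $\lvert P(X')\rvert$ and the intersection pattern of $X,X'$. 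Granting such a $\rho_c$, the identities $\Phi(\mathcal{B}(\rho_c))=P_c$ and $\Psi(\Phi(\mathcal{B}(\rho_c)))\subseteq\mathcal{B}(\rho_c)$ follow by the same base-polyhedron computations as above, and Lemma~\ref{lem:reducible} finishes this case.

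For the exchange-capacity bound, observe that for $x\in\mathcal{B}(\overline{\rho}_c)$ (and likewise for $\rho_c$) the function $X\mapsto\overline{\rho}_c(X)-x(X)$ splits as a sum of a term depending only on $X\cap S^+$ and one depending only on $X\cap S^-$, each of the form $g_c(T)-\sum_{i\in T}w_i$ for a fixed weight $w$; minimizing such a function over $T\subseteq S$ with one element forced in and another excluded is attained by one of $O(1)$ candidate sets (the empty set, the two heaviest coordinates, or all coordinates of positive value), each found in $O(k)$ time. Thus the exchange-capacity oracle runs in $O(k)$ time, which completes the proof; the hard part, as indicated, is pinning down the non-tight $\rho_c$ and its submodularity.
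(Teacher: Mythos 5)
Your treatment of the tight polytope $\overline{P}_c$ is correct and in fact reproduces the paper's lift: your function $g_c(P(X))+g_c(S\setminus N(X))-g_c(S)$ equals $c\,(\min\{\lvert X^+\rvert,2\}+\min\{k-2-\lvert X^-\rvert,0\})$, which is exactly $\overline{\rho}_c$ in \eqref{def:ex_rho_c}, and your submodularity and t-monotonicity arguments, the identification $\Phi(\mathcal{B}(\overline{\rho}_c))=\overline{P}_c$ with $\Psi(\overline{P}_c)\subseteq\mathcal{B}(\overline{\rho}_c)$, and the $O(k)$ exchange-capacity bound (which uses the separation of $\overline{\rho}_c-x$ over $S^+$ and $S^-$) are all sound; the paper does the same via the intermediate polyhedron $\overline{B}_c$ defined by \eqref{eq:nodeflowing_ineq1}--\eqref{eq:nodeflowing_ineq6}.

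For $P_c$, however --- the half actually needed for the nodes with $y_i=0$ in \eqref{def:beta} --- there is a genuine gap: you never construct $\rho_c$. You only conjecture that some ``minimum of submodular building blocks with a penalty term'' will work and explicitly defer its submodularity proof, which you yourself identify as the delicate step; since a minimum of submodular functions is not submodular in general, nothing in the sketch reduces the difficulty, and having (correctly) ruled out the additive form $g(P(X))+h(S\setminus N(X))$, the entire burden of the lemma for $\beta_c$ rests on this missing construction. The paper resolves it with the explicit function \eqref{def:rho_c}: $\rho_c(X)=2c$ on type-22 sets, $c$ on types 11, 12, 21, and $0$ otherwise, where crucially the type-$11^*$ sets $\{i^+\}\cup(S^-\setminus\{i^-\})$ are assigned the value $0$ --- precisely the exceptional ``penalty'' sets you anticipate --- and then verifies submodularity through a case analysis of the marginal values $\rho_c(u|X)$, t-monotonicity through a case analysis on the type of $\underline{X}$, and $B_c=\mathcal{B}(\rho_c)$ with $B_c$ given by \eqref{eq:nodeflowing_ineq1}--\eqref{eq:nodeflowing_ineq5} (Lemma~\ref{lem:nodeflowing_submod}). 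Until you supply such a $\rho_c$ together with these verifications (and its $O(k)$ oracle, which again follows because $\rho_c$ depends on $X$ only through $(\lvert X^+\rvert,\lvert X^-\rvert)$ and the $11^*$ exception), the statement for $\beta_c$ remains unproved.
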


The proof is given in Section~\ref{subsec:node_flowing}.

\begin{corollary}
\label{cor:supp_algo}
Let $(x,y)$ be an optimal potential.
Then we can find an $(x,y)$-feasible support in $O(\mathrm{SF}(kn,m,k))$ time.
\end{corollary}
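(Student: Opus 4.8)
The plan is to reduce the problem of finding an $(x,y)$-feasible support to a single instance of BF on the bidirected network $((U,\tilde E;\partial),\underline c,\overline c)$ constructed in this subsection, and then invoke Theorem~\ref{thm:bisubmod_flow} to solve that BF via a submodular flow algorithm. First I would package the node constraints \eqref{cond:supp2}--\eqref{cond:supp6} as the requirement $\nabla\psi\in\mathcal D(\beta)$ for a suitable bisubmodular function $\beta:3^U\rightarrow\overline{\mathbb R}$, built as a separable function with one block per node $i\in V$. For a nonterminal $i$ with $x_i\neq 0$ the block on $U_i=\{i^0,i^+\}$ enforces $(\nabla\psi)(i^0)=(\nabla\psi)(i^+)=0$ (the block appearing in the proof of Theorem~\ref{thm:bisubmod_feasibility}); for a nonterminal $i$ with $x_i=0$ and $y_i=0$ the block on $U_i$ is $\beta_{c(i)}$ from Lemma~\ref{lem:node_flowing}, which by \eqref{cond:nf1}--\eqref{cond:nf2} captures exactly \eqref{cond:supp3} and \eqref{cond:supp4}; for $x_i=0,\ y_i>0$ it is $\overline\beta_{c(i)}$, which by \eqref{cond:nf1t}--\eqref{cond:nf2} captures \eqref{cond:supp3} and \eqref{cond:supp5}; and for a terminal $s$ the block on the singleton $\{s^0\}$ imposes no constraint (so $(\nabla\psi)(s^0)$, the flow-value absorbed at $s$, is free). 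The capacity bounds \eqref{cond:supp1} are the box constraints $\underline c\le\psi\le\overline c$ of BF. The remaining integrality condition \eqref{cond:supp6}, namely $(\nabla\psi)(U_i)\in\mathbb Z$ for $x_i=0$, will be obtained for free from part (2) of Theorem~\ref{thm:bisubmod_flow}, since $\beta$ is separable along the partition $\{U_i\}_{i\in V}$ into reducible blocks.

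Next I would check the hypotheses of Theorem~\ref{thm:bisubmod_flow}. Each block is reducible: the $\{i^0,i^+\}$-block is reducible (it is literally the $\beta_e$ of Theorem~\ref{thm:bisubmod_feasibility}, or can be verified directly), the $\beta_{c(i)}$ and $\overline\beta_{c(i)}$ blocks are reducible by Lemma~\ref{lem:node_flowing}, and the singleton $\{s^0\}$-block is trivially reducible; hence $\beta=\sum_{i\in V}\beta_i$ is separable and reducible to the direct sum $\rho=\sum_{i\in V}\rho_i$ of the corresponding submodular functions on $U^\pm$. Since the capacities $c(i)$ are integers and $\underline c,\overline c$ are integer-valued (entries are $0$, $c(i)$, or $\infty$), all data are integral, and the submodular function $\rho_i$ on $U_i^\pm$ has an exchange-capacity oracle computable in $O(|U_i|)=O(k)$ time by Lemma~\ref{lem:node_flowing} (the $x_i\neq0$ and terminal blocks being on $O(1)$-element sets). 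Therefore $\rho$ on $U^\pm$ has an $O(k)$-time exchange-capacity oracle. Applying Theorem~\ref{thm:bisubmod_flow}(1) and (2), in $O(\mathrm{SF}(|U|,|\tilde E|,k))$ time we obtain either a half-integral feasible flow $\psi$ with $(\nabla\psi)(U_i)\in\mathbb Z$ for all $i$, i.e.\ an $(x,y)$-feasible support, or a maximum violating cut. Finally I would bound the size of the network: $|U|=\sum_{i\in V}|U_i|\le kn$ since each $|U_i|\le k$, and $|\tilde E|=|E_=|+|E_-|+|E_S|\le m+n+k=O(m)$ (using $E_=\subseteq E$ and $n,k\le$ const$\cdot m$ under our standing assumptions). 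Since $(x,y)$ is assumed optimal, Lemma~\ref{lem:admi}(1) guarantees an $(x,y)$-feasible support exists, so the BF is feasible and the algorithm returns a feasible flow; this gives the claimed $O(\mathrm{SF}(kn,m,k))$ running time.

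The main obstacle I expect is the careful verification that $\beta$, as just described, really is the bisubmodular function whose polyhedron $\mathcal D(\beta)$ equals the set of vectors satisfying \eqref{cond:supp2}--\eqref{cond:supp5} — in particular, that the separable sum of the per-node bisubmodular functions over the partition $\{U_i\}$ is itself bisubmodular and has the desired polyhedron, and that the equality constraints in the $x_i\neq0$ blocks and the $\overline\beta_{c(i)}$ blocks are correctly encoded by $\pm\infty$ values of $\beta$ without destroying reducibility. The content of that verification is essentially contained in Lemma~\ref{lem:node_flowing} (for the interesting blocks) together with the elementary blocks treated in the proof of Theorem~\ref{thm:bisubmod_feasibility}; once those are in hand, assembling the corollary is bookkeeping about the partition, the integrality of the data, and the network sizes.
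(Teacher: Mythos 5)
Your overall route is the paper's route: build a separable reducible bisubmodular function with one block per node (the blocks $\beta_{c_i}$, $\overline{\beta}_{c_i}$ of Lemma~\ref{lem:node_flowing} when $x_i=0$), observe feasibility of the resulting BF from Lemma~\ref{lem:admi}(1) and the optimality of $(x,y)$, and invoke Theorem~\ref{thm:bisubmod_flow}(1)--(2) together with the bounds $\lvert U\rvert\leq kn$, $\lvert\tilde E\rvert=O(m)$ and the $O(k)$ exchange-capacity oracles to get a half-integral feasible flow with $(\nabla\psi)(U_i)\in\mathbb{Z}$ in $O(\mathrm{SF}(kn,m,k))$ time. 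This matches the proof via \eqref{def:beta}.

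There is, however, one concrete error in your block assignment for nodes with $x_i\neq 0$. First, for terminals you declare that the singleton block on $\{s^0\}$ ``imposes no constraint'' and that $(\nabla\psi)(s^0)$ is free. But $x_s=((M,s),0)$ with $M\geq 1$, so $x_s\neq 0$ and condition \eqref{cond:supp2} does apply to terminals: an $(x,y)$-feasible support must satisfy $(\nabla\psi)(s^0)=0$, which is exactly what ties the self-loop value $\psi(s^0s^0)$ to half the inflow along $E_=$ at $s$. The paper's $\beta$ in \eqref{def:beta} enforces this implicitly, since elements outside the $x_i=0$ blocks contribute $0$ to every $\beta(Y,Z)$, forcing $z=0$ there. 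With your relaxation the solver may return a flow violating \eqref{cond:supp2} at terminals, hence not an $(x,y)$-feasible support; and the naive repair $\psi(s^0s^0):=\tfrac12\sum_{e\in E_=\text{ at }s^0}\psi(e)$ can be quarter-integral, breaking the required half-integrality of $\psi$, so the fix must be made inside the BF (use the zero block at terminals as well). Second, a smaller slip: for nonterminals with $x_i\neq 0$ you identify the block with the function $\beta_e$ from the proof of Theorem~\ref{thm:bisubmod_feasibility}; that function only enforces equality of the two boundary values, not that both vanish. The block you actually need (and correctly state in words) is the identically-zero bisubmodular function on $U_i$, whose polyhedron is $\{0\}$ and which is trivially reducible. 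With these two corrections your argument coincides with the paper's proof; note also that \eqref{cond:supp6} requires $(\nabla\psi)(U_i)\in\mathbb{Z}_+$, the nonnegativity coming from $\mathcal{D}(\beta_{c_i})=P_{c_i}\subseteq\mathbb{R}_+^{U_i}$ (resp.\ $\overline{P}_{c_i}$), which is worth a sentence.
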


\begin{proof}
On the network $((U,\tilde{E};\partial),\underline{c},\overline{c})$, define a reducible bisubmodular function $\beta:3^U\rightarrow\mathbb{R}$ by
\begin{equation}
\label{def:beta}
\begin{aligned}
\beta(Y,Z)&:=\sum_{i\in V:x_i=0,y_i=0} \beta_{c_i}(Y\cap U_i,Z\cap U_i)\\
	&\qquad+\sum_{i\in V:x_i=0,y_i>0} \overline{\beta}_{c_i}(Y\cap U_i,Z\cap U_i)\quad ((Y,Z)\in 3^U).
\end{aligned}
\end{equation}
There exists an $(x,y)$-feasible support by the optimality of $(x,y)$ and Lemma~\ref{lem:admi}.
Since an $(x,y)$-feasible support is a feasible flow of
the instance $((U,\tilde{E};\partial),\underline{c},\overline{c},\beta)$ of BF,
this instance is feasible.

By Theorem~\ref{thm:bisubmod_flow} and Lemma~\ref{lem:node_flowing},
we can obtain, in $O(\mathrm{SF}(kn,m,k))$ time, a half-integral feasible flow $\psi:\tilde{E}\rightarrow\mathbb{R}$
such that $(\nabla \psi)(U_i)\in\mathbb{Z}$ for $i\in V$ with $x_i=0$.
Moreover, by the definition of $\underline{c}$ and $\partial$, this can be strengthened so that $(\nabla \psi)(U_i)\in\mathbb{Z}_+$.
Observe that any feasible flow satisfies \eqref{cond:supp1}--\eqref{cond:supp5}.
Also, the obtained feasible flow satisfies the half-integrality and \eqref{cond:supp6}.
Hence, this is an $(x,y)$-feasible support.
\end{proof}

Therefore, given an optimal potential,
we can obtain a half-integral optimal multiflow of the original problem in $O(\mathrm{SF}(kn,m,k)+nm)=O(\mathrm{SF}(kn,m,k))$ time.



\subsection{Discrete convexity and sensitivity}
\label{subsec:prop_h}

Here we study the objective function of the dual DMF.
Define a function $h=h_{d,M}:\mathbb{G}^n\rightarrow \overline{\mathbb{R}}$ by
\begin{equation}
\label{eq:h}
h(x,y):=\begin{cases} \sum_{i\in V\setminus S} c_i y_i & \text{if $(x,y)$ is a potential for $d$ and $M$,} \\
\infty & \text{otherwise}\end{cases}\quad ((x,y)\in \mathbb{G}^n).
\end{equation}
Then DMF is precisely the minimization of $h$.

\begin{proposition}[{\cite[Proposition 3.6]{Hirai2018Dual}}]
\label{prop:L_conv}
$h$ is L-convex on $\mathbb{G}^n$.
\end{proposition}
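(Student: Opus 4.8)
The plan is to verify the discrete midpoint convexity inequality
\[
h(p)+h(q)\ \geq\ h(\lfloor (p+q)/2\rfloor)+h(\lceil (p+q)/2\rceil)\qquad(p,q\in\mathbb{G}^n)
\]
directly. It holds trivially when $h(p)=\infty$ or $h(q)=\infty$, so assume $p=(x,y)$ and $q=(x',y')$ are both potentials, and write $\lfloor(p+q)/2\rfloor=(\hat x,\hat y)$ and $\lceil(p+q)/2\rceil=(\check x,\check y)$, the rounding being coordinatewise. It then suffices to prove two things: (A) $(\hat x,\hat y)$ and $(\check x,\check y)$ are again potentials, so that the right-hand side is finite; and (B) $\sum_{i\in V\setminus S}c_i(\hat y_i+\check y_i)=\sum_{i\in V\setminus S}c_i(y_i+y'_i)$, which turns the desired inequality into an equality given (A), since $h$ restricted to the set of potentials is the linear form $\sum_i c_i y_i$.

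First I would dispose of (B) and the easy half of (A) by inspecting the midpoint operation on $\mathbb{G}$ coordinatewise. For each $i$ one has $\hat p_i\preceq\check p_i$ and $(\hat p_i+\check p_i)/2=(p_i+q_i)/2$ in $\mathbb{T}\times\mathbb{R}$; reading off the $\mathbb{R}$-coordinate (the case description of Figure~\ref{fig:mid} places $\lfloor w\rfloor$ and $\lceil w\rceil$ symmetrically about $w$) gives $\hat y_i+\check y_i=y_i+y'_i$, hence (B). The same case analysis shows $\hat y_i$ and $\check y_i$ each differ from $(y_i+y'_i)/2$ by at most $1/2$, and a short check using $y_i,y'_i\ge 0$ confirms $\hat y_i,\check y_i\ge 0$, so \eqref{eq:dual_mnmf_e} survives; and at a terminal $s$ we have $p_s=q_s=((M,s),0)\in\mathbb{G}$, whence $\hat p_s=\check p_s=((M,s),0)$ and \eqref{eq:dual_mnmf_d} survives. (The normalization \eqref{eq:range} may be ignored here, or reinstated afterwards via Lemma~\ref{lem:range}.)

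The crux of (A) is that the edge constraint \eqref{eq:dual_mnmf_c} is preserved by both roundings. What is needed is the purely combinatorial fact: for every $ij\in E$ and all $p_i,p_j,q_i,q_j\in\mathbb{G}$ with nonnegative $\mathbb{R}$-coordinates, if $\mathrm{dist}(x_i,x_j)-y_i-y_j\le 2d_{ij}$ and $\mathrm{dist}(x'_i,x'_j)-y'_i-y'_j\le 2d_{ij}$, then the same inequality holds for the pair $(\hat p_i,\hat p_j)$ and for the pair $(\check p_i,\check p_j)$. Since $2d_{ij}$ appears only on the right-hand side and the midpoint operation on $\mathbb{G}^n$ involves neither $d$ nor $M$, this is exactly the content of \cite[Proposition 3.6]{Hirai2018Dual} (stated there with $0$ in place of $2d_{ij}$), so the cleanest route is to cite it. To re-prove it, I would argue by cases on the positions of $x_i,x_j,x'_i,x'_j$ on the infinite $k$-star $\mathbb{T}$: when all relevant points lie on a single branch, the $\mathbb{T}$-metric restricts to $|\cdot-\cdot|$ on $\mathbb{R}$ and the claim reduces to an elementary computation on the ``checkerboard'' grid obtained by restricting $\mathbb{G}$ to one branch --- equivalently, to L$^\natural$-convexity on $\mathbb{Z}^2$ of $(s,t)\mapsto|s-t|$ minus a linear form; when a pair straddles two branches, $\mathrm{dist}(a,b)=\mathrm{dist}(a,0)+\mathrm{dist}(0,b)$ lets the constraint decouple into per-node pieces, again reducing to the one-branch case. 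I expect the bookkeeping here --- tracking which branch of $\mathbb{T}$ each rounded coordinate lands on, i.e.\ how the discrete midpoint operation interacts with the star structure of $\mathbb{T}$ --- to be the main obstacle, and it is precisely where one must work ``beyond $\mathbb{Z}^n$''. Granting this fact, applying it to every $ij\in E$ shows \eqref{eq:dual_mnmf_c} holds at both $(\hat x,\hat y)$ and $(\check x,\check y)$, completing (A) and hence the proof.
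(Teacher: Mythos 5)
Your overall framing is sound, and your ``cleanest route'' coincides with what the paper actually does: Proposition~\ref{prop:L_conv} is imported from \cite[Proposition 3.6]{Hirai2018Dual} with no proof given here, and your reduction --- (B) the $\mathbb{R}$-coordinates of the two roundings sum to $y_i+y'_i$, so on potentials the midpoint inequality is an equality, leaving only (A) that both roundings remain potentials --- is the correct way to organize the statement. The nonnegativity and terminal checks are fine as you describe them.

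The gap is in your sketched re-proof of the crux of (A). Preservation of \eqref{eq:dual_mnmf_c} under \emph{both} roundings does not follow from discrete midpoint convexity of $\pi((x_i,y_i),(x_j,y_j))=\mathrm{dist}(x_i,x_j)-y_i-y_j$, nor from ``L$^\natural$-convexity of $|s-t|$ minus a linear form'': level sets of L-convex functions are in general \emph{not} closed under the pair of discrete midpoint roundings (already on $\mathbb{Z}^2$, the linear function $s+t$ is L$^\natural$-convex, yet $(1,0),(0,1)$ lie in $\{s+t\le 1\}$ while the rounded-up midpoint $(1,1)$ does not). Midpoint convexity only bounds the \emph{sum} $\pi(p\sqcap q)+\pi(p\sqcup q)$, so one of the two roundings could a priori overshoot the bound while the other undershoots. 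What rules this out here is the specific arithmetic of $\pi$ on $\mathbb{G}^2$ --- integrality of $\pi$, the fact that rounding changes $\pi$ by at most $1$ with the symmetry $\Delta=-\Delta'$, and a parity argument --- combined crucially with the \emph{evenness} of the right-hand side $2d_{ij}$; this is exactly the toolkit recorded in Lemma~\ref{lem:property_h}. Indeed, the paper's proof of Theorem~\ref{thm:sensitivity} shows that once the right-hand side is made odd (the cost $d_{1/2}$), only \emph{one} of $p\sqcap q$, $p\sqcup q$ can be guaranteed feasible, so any argument that ignores the parity of the bound, as your two-variable reduction does, cannot be complete. Likewise, the ``decoupling'' across branches is not automatic: the discrete midpoint operation near the origin of $\mathbb{T}$ can move rounded points between branches (the squares of $\mathbb{G}$ straddling the origin), which is precisely the bookkeeping you flag as the main obstacle but leave unresolved. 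So either cite \cite[Proposition 3.6]{Hirai2018Dual} outright (after checking its statement covers even right-hand sides $2d_{ij}$), or carry out the parity-based case analysis in the style of Lemma~\ref{lem:property_h}; the sketch as written would not survive the $s+t$-type obstruction.
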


Thus we can apply SDA to $h$ to obtain an optimal potential.
The initial step is finding a potential $(x,y)\in\mathbb{G}^n$.
It can be done immediately since
\begin{equation}
\label{eq:init}
    (x_i,y_i)=\begin{cases}
    ((M,s),0) & \text{if $i=s\in S$,} \\
    (0,M) & \text{otherwise} \\
    \end{cases}\quad (i\in V)
\end{equation}
is a potential for any $d$ and $M$.
To find a steepest descent direction at a current point $(x,y)\in \mathbb{G}^n$,
we again make use of the bisubmodular flow theory.


Suppose that $(x,y)$ is not optimal.
By Lemma~\ref{lem:admi}, there exists no $(x,y)$-feasible support on the bidirected network $((U,\tilde{E};\partial),\underline{c},\overline{c})$.
Thus BF introduced in the previous subsection is infeasible.
Then by Theorem~\ref{thm:bisubmod_feasibility}, there exists a violating cut in the network.
From a maximum violating cut, we can find a steepest descent direction.

\begin{lemma}
\label{lem:sdd_algo}
Let $(x,y)$ be a potential.
Suppose that $(x,y)$ is not optimal.
Then we can find a steepest descent direction of $h$ at $(x,y)$ in $O(\mathrm{SF}(kn,m,k))$ time.
\end{lemma}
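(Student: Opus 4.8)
The plan is to reduce the search for a steepest descent direction to the bisubmodular flow feasibility problem built in Section~\ref{subsec:ext_bidir}, together with a few perturbation tricks that encode the constraint ``$q \in \mathcal{F}_{(x,y)}$" or ``$q \in \mathcal{I}_{(x,y)}$" into the capacities of the bidirected network. Recall from the theory of L-convex functions (and the analysis in \cite{Hirai2018Dual,Hirai2018L}) that a steepest descent direction can be found by separately minimizing $h$ over $\mathcal{F}_{(x,y)}$ and over $\mathcal{I}_{(x,y)}$ and taking the better of the two; moreover, by L-convexity (submodularity of the restriction of $h$ to these sublattices), each such minimization is itself a submodular/bisubmodular minimization that can be recovered from a \emph{maximum} violating cut of the associated feasibility problem. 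So the work is: (i) show that the non-optimality of $(x,y)$ makes the BF instance of Corollary~\ref{cor:supp_algo} infeasible; (ii) extract a maximum violating cut $(Y,Z)$ of that BF in $O(\mathrm{SF}(kn,m,k))$ time via Theorem~\ref{thm:bisubmod_flow}(1) and Proposition~\ref{prop:reduction_bisub}(2); and (iii) translate $(Y,Z)$ back into a point $q \in \mathcal{F}_{(x,y)} \cup \mathcal{I}_{(x,y)}$ with $h(q) < h(x,y)$, and argue it is in fact steepest (i.e.\ a minimizer of $h$ over $\mathcal{F}_{(x,y)} \cup \mathcal{I}_{(x,y)}$).

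First I would handle step (i): since $(x,y)$ is not optimal, Lemma~\ref{lem:admi}(1) says no $(x,y)$-feasible support exists, hence the BF instance $((U,\tilde E;\partial),\underline c,\overline c,\beta)$ of Corollary~\ref{cor:supp_algo} is infeasible, so by Theorem~\ref{thm:bisubmod_feasibility} it has a violating cut. For step (ii), I apply Theorem~\ref{thm:bisubmod_flow}(1): because $\beta$ is separable into the reducible pieces $\beta_{c_i}$ and $\overline\beta_{c_i}$ of Lemma~\ref{lem:node_flowing}, $\beta$ is reducible to an integral submodular function $\rho$ on $U^\pm$ whose exchange-capacity oracle runs in $O(k)$ time, so one submodular-flow computation on $kn$ nodes and $m$ edges yields a maximum violating cut $(Y,Z) \in 3^U$ in $O(\mathrm{SF}(kn,m,k))$ time. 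The substantive content is step (iii): one must show that the quantity $\hat\kappa(Y,Z) - \beta(Y,Z)$, evaluated at a maximum violating cut, is exactly the amount by which $h$ decreases at the best neighbor, and that the combinatorial data of $(Y,Z)$ — which branch-nodes $i^s$, origin-nodes $i^0$, outward-nodes $i^+$ land in $Y$ versus $Z$ — determines how to move each coordinate $(x_i,y_i)$ by one step in $\mathbb{G}$ (either "up" toward $\mathcal{F}$ or "down" toward $\mathcal{I}$), analogously to how a minimum cut yields a steepest descent direction for L$^\natural$-convex functions on $\mathbb{Z}^n$. Concretely, I expect to run the argument twice — once with capacities modified to forbid decreasing any $y_i$ (forcing the candidate into $\mathcal{F}_{(x,y)}$) and once to forbid increasing any $y_i$ (forcing $\mathcal{I}_{(x,y)}$) — and to show using Proposition~\ref{prop:reduction_bisub}(2) that the violating-cut structure translates, coordinate by coordinate, into the indicator of the set of nodes whose potential should be shifted; the resulting $q$ satisfies $h(q) = h(x,y) - (\hat\kappa(Y,Z)-\beta(Y,Z)) < h(x,y)$, and maximality of the cut forces $q$ to attain $\min_{\mathcal{F}_{(x,y)}\cup\mathcal{I}_{(x,y)}} h$, i.e.\ $q$ is a steepest descent direction. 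Finally the running time is dominated by the single submodular-flow call, giving the claimed $O(\mathrm{SF}(kn,m,k))$.

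The main obstacle will be step (iii): making precise the dictionary between a maximum violating cut $(Y,Z)$ of the bidirected network and a valid one-step move in $\mathbb{G}^n$, and verifying that the value $h(x,y) - (\hat\kappa(Y,Z) - \beta(Y,Z))$ is actually attained by a genuine \emph{point of $\mathbb{G}^n$} lying in $\mathcal{F}_{(x,y)}\cup\mathcal{I}_{(x,y)}$ — in particular that the shift respects the grid constraints (integral vs.\ proper half-integral coordinates, the poset $\preceq$, and the constraints \eqref{eq:dual_mnmf_c}--\eqref{eq:dual_mnmf_e} after the move), and that one does not need to move by more than one step in any coordinate. This is where the precise structure of $\mathbb{T}\times\mathbb{R}$, the node set decomposition $U_i = \{i^0, i^+\}$ or $\{i^1,\dots,i^k\}$, and the capacity functions $\underline c, \overline c$ all come into play; I would isolate this translation as a separate technical lemma (likely deferred to Section~\ref{sec:proofs}), proving it by a case analysis on the location of $x_i$ in $\mathbb{T}$ and on which of $i^0, i^+$ or the $i^s$ lie in $Y$ or $Z$, mirroring the derivation of the slackness-to-support correspondence already spelled out after the statement of the $(x,y)$-feasible support conditions.
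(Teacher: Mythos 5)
Your steps (i) and (ii) match the paper exactly, and the overall route (maximum violating cut of the reducible BF instance $\mapsto$ neighbor of $p=(x,y)$ whose decrease in $h$ equals the violation, maximality of the cut giving steepness) is also the paper's. The gap is in your step (iii), in two places. First, your mechanism for separating the two half-problems --- ``run the argument twice, once with capacities modified to forbid decreasing any $y_i$ (forcing $\mathcal{F}_{(x,y)}$) and once to forbid increasing any $y_i$ (forcing $\mathcal{I}_{(x,y)}$)'' --- does not encode the right constraint. Membership in $\mathcal{F}_p$ versus $\mathcal{I}_p$ is governed by the partial order $\preceq$ on $\mathbb{G}$, i.e.\ by the parity/type of each $p_i$, not by the sign of the change in $y_i$: at a non-integral coordinate $p_i\in\mathbb{G}'_e$ the move $p_i+((1/2)_0,-1/2)$, which \emph{decreases} $y_i$, lies in $\mathcal{F}_p$, and symmetric examples occur for $\mathcal{I}_p$. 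The paper instead partitions the lifted node set into $U_\mathcal{F}$ and $U_\mathcal{I}$ according to the parity of $p_i$, calls a cut $\mathcal{F}$-movable ($\mathcal{I}$-movable) if it avoids $U_\mathcal{I}$ ($U_\mathcal{F}$), and proves (Lemma~\ref{lem:bisubcut_representation}) a bijection between such cuts and the nonnegative points of $\mathcal{F}_p$ resp.\ $\mathcal{I}_p$ together with the value identity $h(p^{Y,Z})-h(p)=-(\kappa(Y,Z)-\beta(Y,Z))/2$ and, crucially, additivity of the violation over the $\mathcal{F}$- and $\mathcal{I}$-parts of a movable cut; the additivity is what lets a \emph{single} BF/SF computation produce minimizers over both $\mathcal{F}_p$ and $\mathcal{I}_p$ simultaneously, so no second run with modified capacities is needed.

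Second, even granting a correct dictionary, an arbitrary maximum violating cut returned by the submodular flow algorithm need not be in the image of that dictionary: it can intersect the $U_i$'s in patterns (e.g.\ $2\le\lvert Y\cap U_i\rvert\le k-1$, or $Z$ meeting a terminal's $U_s$) that correspond to no one-step move in $\mathbb{G}^n$. The paper handles this with a normalization step (Lemma~\ref{lem:proc}): a sequence of local replacements turning any maximum violating cut into a \emph{movable} one without decreasing $\kappa-\beta$, and the verification uses the specific values of $\beta_c$, $\overline{\beta}_c$ in \eqref{def:beta_c}--\eqref{def:ex_beta_c} and of $\underline{c},\overline{c}$. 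Your plan's closing worry (``one does not need to move by more than one step in any coordinate'') is exactly this issue, but you propose no mechanism for it; without such a normalization, the claim that the maximum violation equals the best achievable decrease of $h$ over $\mathcal{F}_p\cup\mathcal{I}_p$ does not follow. (A small additional slip: the decrease is half the violation, not the violation itself.)
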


We give the proof in Section~\ref{subsec:sdd}.
By simply combining Corollary~\ref{cor:supp_algo} and Lemma~\ref{lem:sdd_algo}, we obtain a pseudopolynomial-time algorithm for MNMF.

\begin{proposition}
\label{prop:pseudopoly}
A half-integral multiflow which maximizes \eqref{obj:M} can be obtained in $O(M\cdot \mathrm{SF}(kn,m,k))$ time.
\end{proposition}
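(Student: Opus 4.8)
The plan is to run SDA on the L-convex function $h = h_{d,M}$ and bound the total work by combining the per-iteration cost of Lemma~\ref{lem:sdd_algo}, the iteration bound of Theorem~\ref{thm:SDA}, and the recovery step of Corollary~\ref{cor:supp_algo} and Lemma~\ref{lem:admi}. First I would observe that $h$ is L-convex by Proposition~\ref{prop:L_conv}, and that the explicit point \eqref{eq:init} is a potential (one checks \eqref{eq:dual_mnmf_c}--\eqref{eq:dual_mnmf_e} directly: $\mathrm{dist}(x_i,x_j)\le 2M$ always, and $y_i - y_j = M - 0 = M \le 2M \le 2d_{ij}+\dots$ — actually since $d_{ij}\ge 1$ the needed inequality $\mathrm{dist}(x_i,x_j)-y_i-y_j\le 2d_{ij}$ holds because the left side is at most $2M - M = M$ when one endpoint is a terminal and at most $2M - 2M = 0$ otherwise; wait, I should be careful and just cite that \eqref{eq:init} is stated to be a potential), so SDA has a valid initial point with $h<\infty$.

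Next I would bound the number of iterations. By Lemma~\ref{lem:range} and \eqref{eq:range} there is an optimal potential $(x^*,y^*)$ with $\mathrm{dist}(0,x_i^*)\le M$ and $y_i^*\le 2M$ for all $i$; the initial point \eqref{eq:init} has $\mathrm{dist}(0,x_i)=0$ (for nonterminals) or $=M$ (for terminals) and $y_i\in\{0,M\}$. Hence, using the $\ell_\infty$-distance \eqref{eq:dist_on_grid} on $\mathbb{G}^n$, each coordinate satisfies $\lVert p_i - q_i\rVert = \mathrm{dist}(x_i,x_i^*) + |y_i - y_i^*| = O(M)$; note that for terminal coordinates the distance is $0$ since the potential is pinned by \eqref{eq:dual_mnmf_d}. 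Therefore $\min_{q\in\mathrm{opt}(h)}\lVert p - q\rVert = O(M)$, and Theorem~\ref{thm:SDA} gives that SDA terminates after $O(M)$ iterations. Each iteration requires finding a steepest descent direction, which costs $O(\mathrm{SF}(kn,m,k))$ time by Lemma~\ref{lem:sdd_algo} (the bidirected network has $O(kn)$ nodes and $O(m)$ edges, and the exchange-capacity oracles of the relevant submodular functions are $O(k)$-time by Lemma~\ref{lem:node_flowing}); updating the current point and evaluating $h$ is dominated by this. So the descent phase costs $O(M\cdot\mathrm{SF}(kn,m,k))$.

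Finally, once SDA halts at an optimal potential $(x,y)$, I would invoke Corollary~\ref{cor:supp_algo} to compute an $(x,y)$-feasible support in $O(\mathrm{SF}(kn,m,k))$ time, and then Lemma~\ref{lem:admi}(2) to decompose it into a half-integral optimal multiflow in $O(nm)$ time. Since a minimizer of $h$ is an optimal solution of DMF, which is dual to PMF (Theorem~\ref{thm:maxflow_minpot}), and the recovered multiflow satisfies the complementary slackness of Lemma~\ref{lem:slack}, this multiflow maximizes the PMF objective and hence (after halving) maximizes \eqref{obj:M}. Summing the costs, the total running time is $O(M\cdot\mathrm{SF}(kn,m,k)) + O(\mathrm{SF}(kn,m,k)) + O(nm) = O(M\cdot\mathrm{SF}(kn,m,k))$, since $nm = O(\mathrm{SF}(kn,m,k))$. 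The main obstacle in making this rigorous is the iteration bound: one must verify carefully that the $\ell_\infty$-distance between the initial point \eqref{eq:init} and some optimal potential is genuinely $O(M)$ — which rests on Lemma~\ref{lem:range} for the existence of a bounded optimal potential and on the observation that terminal coordinates contribute nothing — and that Lemma~\ref{lem:sdd_algo} indeed produces a bona fide steepest descent direction (a minimizer of $h$ over $\mathcal F_p\cup\mathcal I_p$ with strictly smaller value), so that the SDA termination guarantee of Theorem~\ref{thm:SDA} applies verbatim.
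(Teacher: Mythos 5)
Your proposal is correct and follows essentially the same route as the paper's proof: bound the distance from the initial potential \eqref{eq:init} to an optimal potential by $O(M)$ via Lemma~\ref{lem:range}, invoke Theorem~\ref{thm:SDA} for the $O(M)$ iteration bound, Lemma~\ref{lem:sdd_algo} for the per-iteration cost, and Lemma~\ref{lem:admi} with Corollary~\ref{cor:supp_algo} for the final recovery of a half-integral optimal multiflow. The extra verifications you flag (that \eqref{eq:init} is a potential and that terminal coordinates contribute nothing to the distance) are exactly the routine checks the paper leaves implicit.
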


\begin{proof}
By Lemma \ref{lem:range}, there exists an optimal potential $p^*=(x^*,y^*)\in \mathbb{G}^n$ for DMF
satisfying $\lVert p-p^*\rVert=O(M)$, where $p=(x,y)$.
Thus by Theorem \ref{thm:SDA}, SDA terminates after at most $O(M)$ iterations.
Each iteration can be done in $O(\mathrm{SF}(kn,m,k))$ time by Lemma~\ref{lem:sdd_algo}.
Also, once an optimal potential is obtained, we can find a half-integral optimal multiflow
in $O(\mathrm{SF}(kn,m,k))$ time by Lemma~\ref{lem:admi} and Corollary~\ref{cor:supp_algo}.
\end{proof}

From Remarks~\ref{rem:suff_large_M} and \ref{rem:poscost},
we see that our algorithm in Proposition~\ref{prop:pseudopoly} runs in $O(nC^2D\cdot\mathrm{SF}(kn,m,k))$ time.

To combine the above algorithm with cost-scaling,
we use the following sensitivity property of $h$ under a small cost change.

\begin{theorem}[Sensitivity]
\label{thm:sensitivity}
For a positive edge-cost $d:E\rightarrow \mathbb{Z}_+$, let $e_0\in E$ be an edge with $d(e_0)\in 2\mathbb{Z}$.
Define an edge-cost $d_1:E\rightarrow\mathbb{Z}_+$ by
\[
    d_1(e):=\begin{dcases*} d(e_0)-1 & if $e=e_0$,\\
        d(e) & otherwise
    \end{dcases*}\quad (e\in E).
\]
Let $p\in\mathbb{G}^n$ be a minimizer of $h_{d,M}$.
Then there exists a minimizer $q\in\mathbb{G}^n$ of $h_{d_1,M}$ satisfying $\lVert p-q\rVert\leq 2$.
\end{theorem}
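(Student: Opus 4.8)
Throughout write $h:=h_{d,M}$, $h':=h_{d_1,M}$, put $e_0=i_0j_0$ and $g(x,y):=\mathrm{dist}(x_{i_0},x_{j_0})-y_{i_0}-y_{j_0}$, so that (using the normalization \eqref{eq:range}) a potential for $d$ is a potential for $d_1$ precisely when $g\le 2d(e_0)-2$. Since $d_1\le d$ with equality off $e_0$, every potential for $d_1$ is one for $d$, and the two dual objectives coincide on the common domain; hence $\mathrm{dom}\,h'\subseteq\mathrm{dom}\,h$, $h'\ge h$ pointwise, and $\min h'\ge\min h$. If the given minimizer $p$ of $h$ happens to satisfy $g(p)\le 2d(e_0)-2$, then $p\in\mathrm{dom}\,h'$ and $h'(p)=h(p)=\min h\le\min h'\le h'(p)$, so $p$ itself minimizes $h'$ and $q:=p$ works. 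So assume $g(p)\in\{2d(e_0)-1,2d(e_0)\}$ (it is an integer in $(2d(e_0)-2,\,2d(e_0)]$, using that $g$ is integer-valued on $\mathbb{G}^n$ and $2d(e_0)$ is even).

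The plan for this case is to transport a minimizer of $h'$ toward $p$ by discrete midpoints. Fix a minimizer $q^*$ of $h'$ (one exists: the base potential \eqref{eq:init} lies in $\mathrm{dom}\,h'$, and Lemma~\ref{lem:range} applied to $d_1$ confines a minimizer to a bounded, hence finite, region of $\mathbb{G}^n$). Starting from $r:=q^*$, repeat while $\lVert r-p\rVert\ge 3$: form $w:=(p+r)/2$ with roundings $\lfloor w\rfloor,\lceil w\rceil\in\mathbb{G}^n$, and select one of them, $r'$, such that (i) $r'\in\mathrm{dom}\,h'$, (ii) $h(r')\le h(r)$, and (iii) $\lVert r'-p\rVert<\lVert r-p\rVert$; then set $r:=r'$. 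If (i)--(iii) can always be met, the process terminates at a point $q$ with $\lVert q-p\rVert\le 2$; and $q$ minimizes $h'$, because at each step $h(r)=h'(r)=\min h'$ (as $r$ is an $h'$-minimizer in the common domain), so (i)--(ii) give $h'(r')=h(r')\le\min h'$, making $r'$ again an $h'$-minimizer.

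To produce such an $r'$: (ii) holds for at least one rounding since $h$ is L-convex (Proposition~\ref{prop:L_conv}) and $h(p)\le h(r)$, whence $\min\bigl(h(\lfloor w\rfloor),h(\lceil w\rceil)\bigr)\le\tfrac12(h(p)+h(r))\le h(r)$. For (i), feasibility of $r'$ for every potential constraint except the $e_0$-inequality is automatic from finiteness of $h(r')$, so one only needs $g(r')\le 2d(e_0)-2$; here one uses that $g$ is discrete midpoint convex on $\mathbb{G}^n$ (this is essentially the L-convexity of the single potential-constraint indicators underlying Proposition~\ref{prop:L_conv}, and can also be checked directly from the metric on $\mathbb{T}$), so $g(\lfloor w\rfloor)+g(\lceil w\rceil)\le g(p)+g(r)\le 2d(e_0)+(2d(e_0)-2)$, and the smaller value is $\le 2d(e_0)-1$, in fact $\le 2d(e_0)-2$ unless $g(p)=2d(e_0)$ and the two values are equal. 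Finally (iii) is the elementary estimate $\lVert\lfloor w\rfloor-p\rVert,\lVert\lceil w\rceil-p\rVert\le\lfloor\lVert r-p\rVert/2\rfloor+1$, which is a strict decrease exactly when $\lVert r-p\rVert\ge 3$; the residual slack $2$ in the conclusion corresponds to $w$ being the centre of a unit square of $\mathbb{G}$ with $p$ as a corner.

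The main obstacle is the coupling in the previous paragraph: one must choose a single rounding that is simultaneously feasible for the tightened $e_0$-constraint (i) and on the way to an $h$-minimizer (ii). When $g(p)=2d(e_0)$ (the edge $e_0$ is tight at $p$) the two requirements can pull apart by one unit, and this is presumably where the hypothesis $d(e_0)\in 2\mathbb{Z}$, i.e. $2d(e_0)\equiv 0\pmod 4$, is consumed: the parities of $g$ at the two roundings are governed by the even/odd colouring of $\mathbb{G}$, and this should exclude $g(\lfloor w\rfloor)=g(\lceil w\rceil)=2d(e_0)-1$ and let the "$g$-small" and "$h$-small" roundings be taken equal. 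Pinning this parity bookkeeping down — or, equivalently, carrying out by hand the analysis of the two coordinates $i_0,j_0$, which are the only ones where $p$ need move at all — is where I expect the real work to lie. A fallback, if a clean midpoint argument toward $p$ resists, is to first enlarge the ball radius $y_{i_0}$ (shifting $x_{i_0}$ by a half-step to stay in $\mathbb{G}$ if needed) to get a potential $\tilde p$ for $d_1$ with $\lVert\tilde p-p\rVert\le 2$, transport $q^*$ toward $\tilde p$ entirely inside $\mathrm{dom}\,h'$ (where the inequalities are immediate), and then sharpen the resulting distance bound back to $2$ using that $p$, not merely $\tilde p$, minimizes $h$.
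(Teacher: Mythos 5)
Your outline stops exactly where the proof has to be done: you must produce \emph{one} rounding of $(p+r)/2$ that simultaneously lies in $\mathrm{dom}\,h_{d_1,M}$ and does not increase $h_{d,M}$, and you explicitly leave this step open (``where I expect the real work to lie''), offering only a guess that a mod-$4$ parity argument will close it. Two concrete problems with finishing it along your lines. First, the coupling you worry about can indeed be removed, but not by the inequality you state: since $p$ is a \emph{global} minimizer of $h_{d,M}$, discrete midpoint convexity gives $h_{d,M}(p\sqcap r)\le h_{d,M}(r)$ and $h_{d,M}(p\sqcup r)\le h_{d,M}(r)$ for \emph{both} roundings (if one exceeded $h_{d,M}(r)$, then $h_{d,M}(p)+h_{d,M}(r)<h_{d,M}(p\sqcap r)+h_{d,M}(p\sqcup r)$, contradicting L-convexity), so only feasibility of \emph{some} rounding for the tightened $e_0$-constraint is needed. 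Second, and this is the real obstruction to your single full-unit step: your claimed inequality $g(\lfloor w\rfloor)+g(\lceil w\rceil)\le g(p)+g(r)$ is not automatic — the discrete roundings can each exceed the continuous midpoint value by $1$ (Lemma~\ref{lem:property_h}~(\ref{p_h_near})), and the cancellation $\Delta=-\Delta'$ holds only under the hypothesis of Lemma~\ref{lem:property_h}~(\ref{p_h_sym}); even granting it, from $g(p)\le 2d(e_0)$ and $g(r)\le 2d(e_0)-2$ the best you can extract is one rounding with $g\le 2d(e_0)-1$, which is feasible for $d$ but one unit short of the bound $2d(e_0)-2$ required for membership in $\mathrm{dom}\,h_{d_1,M}$; nothing in your parity speculation excludes the configuration where the midpoint value is $2d(e_0)-1$ and both roundings attain it. The evenness hypothesis $d(e_0)\in 2\mathbb{Z}$ is not consumed in a mod-$4$ bookkeeping at all; together with positivity it guarantees $d(e_0)\ge 2$, which is what validates the symmetry condition of Lemma~\ref{lem:property_h}~(\ref{p_h_sym}) (and keeps the perturbed costs positive).

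The paper closes precisely this one-unit deficit by interposing the half-integral cost $d_{1/2}$ with $d_{1/2}(e_0)=d(e_0)-1/2$ and running your transport argument twice, each time with a half-unit decrement: it takes the minimizer of $h_{d_{1/2},M}$ (resp.\ $h_{d_1,M}$) closest to $p$ (resp.\ to the intermediate point $q$), and if the distance were $\ge 2$, both roundings have good objective value as above, while integrality of $\pi$ on $\mathbb{G}^n$ plus the half-integral midpoint value and $\Delta=-\Delta'$ force at least one rounding to satisfy the half-tightened constraint; Lemma~\ref{lem:d_property} then contradicts the choice of the closest minimizer, giving $\lVert p-q\rVert\le 1$ and $\lVert q-r\rVert\le 1$. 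Note also a wrinkle your scheme would meet in the second stage: $h_{d_{1/2},M}$ is \emph{not} L-convex (the cost is non-integral), so the paper replaces L-convexity there by linearity of the objective together with a direct verification that both roundings are $d_{1/2}$-potentials. In short, your plan is a reasonable sketch of the right kind of argument, but the decisive feasibility-of-a-rounding step is missing, and a one-step $d\to d_1$ version of it cannot be completed by the stated estimates; repairing it essentially forces the paper's two-stage half-unit argument.
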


The proof is given in Section~\ref{subsec:sensitivity}.

\subsection{Cost scaling algorithm}
\label{subsec:scaling}

The pseudopolynomial-time algorithm we gave in the previous subsection consists of two stages.
In the first stage, one tries to minimize the L-convex function $h_{d,M}$ on $\mathbb{G}^n$ by SDA.
In the second stage, construct the desired optimal multiflow.
Both of them are done by solving a reducible BF.
The second stage can be done in strongly polynomial time (Lemma~\ref{lem:admi} and Corollary~\ref{cor:supp_algo}).
The bottleneck is in the first stage.
As we mentioned above, 
the number of iterations of SDA becomes $O(M)$ in the worst case.

We overcome this issue by the \emph{scaling} method.
Without loss of generality, we can assume that $M$ is the power of two, i.e., $M=2^\mu\ (\mu\geq 0)$.
The scaling algorithm consists of $(\mu+1)$ scaling phases.
In $t$-th phase ($t=0,1,\dotsc,\mu$), we deal with the problem of
finding a multiflow $f$ that maximizes
\[
    \frac{M}{2^{t}}\cdot \mathrm{val}(f)-\left(\left\lceil \frac{d}{2^{t}}\right\rceil\right) (f),
\]
where $(\lceil d/2^t \rceil)(e) := \lceil d(e)/2^t \rceil$ for $e\in E$.
Since $d$ is positive, $\lceil d/2^t \rceil$ is also positive.
Hence we can apply the same discussion as above to this scaled problem,
i.e., the dual objective function, $h_{\lceil d/2^{t}\rceil,M/2^{t}}$,
is an L-convex function on $\mathbb{G}^n$ and can be minimized via SDA.

In the first phase $t=\mu$, we can find an optimal potential in constant time.

\begin{lemma}
\label{lem:scale_start}
Let $(x,y)\in \mathbb{G}^n$ a potential for $\mu$-th phase defined by
\[
    (x_i,y_i)=\begin{dcases*}
        (0,0) & if $i\in V\setminus S$,\\
        ((1,s),0) & if $i=s\in S$
    \end{dcases*}\quad (i\in V).
\]
Then $(x,y)$ is an optimal potential.
\end{lemma}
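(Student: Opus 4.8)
The plan is to verify that the claimed potential $(x,y)$ satisfies the constraints of DMF for the $\mu$-th phase (where the capacity parameter is $M/2^\mu = 1$ and the edge cost is $\lceil d/2^\mu\rceil$), and that its objective value equals the optimum of the primal PMF, which by Theorem~\ref{thm:maxflow_minpot} certifies optimality. First I would check feasibility. Each $(x_i,y_i)$ lies in $\mathbb{G}$: $(0,0)$ is integral and $((1,s),0)$ is integral. The boundary condition \eqref{eq:dual_mnmf_d} becomes $(x_s,y_s)=((1,s),0)$ for $s\in S$, which holds by construction. The nonnegativity \eqref{eq:dual_mnmf_e} holds since every $y_i=0$. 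For \eqref{eq:dual_mnmf_c}, given $ij\in E$, recall that there are no edges between two terminals, so at least one endpoint, say $i$, is a nonterminal with $x_i=0$; then $\mathrm{dist}(x_i,x_j)\le 1$ and $y_i=y_j=0$, so the left-hand side is at most $1$, while the right-hand side $2\lceil d_{ij}/2^\mu\rceil$ is at least $2$ because $d$ is positive. Hence $(x,y)$ is a potential.

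Next I would compute the objective values on both sides. The dual objective \eqref{eq:dual_mnmf_a} of $(x,y)$ is $\sum_{i\in V\setminus S} 2 c_i y_i = 0$ since all $y_i=0$. For the primal side, the doubled objective of PMF is $\sum_P 2(M/2^\mu)\lambda(P) - \sum_e 2\lceil d_e/2^\mu\rceil f(e) = 2\,\mathrm{val}(f) - \sum_e 2\lceil d_e/2^\mu\rceil f(e)$. Since every $S$-path has at least one internal nonterminal node (again because terminals are pairwise nonadjacent), and that node $i$ has capacity $c_i\ge 0$; more to the point, every $S$-path uses at least one edge of positive cost, so each unit of flow on a path $P$ contributes at most $1$ to $\mathrm{val}(f)$ but at least $2\cdot 1 = 2$ in cost reduction along that one edge. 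Thus the primal objective is $\le \sum_P 2\lambda(P) - \sum_P 2\lambda(P) = 0$ for any multiflow, with the empty multiflow $f=0$ attaining value $0$. Hence the primal optimum is $0$, matching the dual value of $(x,y)$.

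Concretely, I would argue the inequality $2\,\mathrm{val}(f)\le \sum_e 2\lceil d_e/2^\mu\rceil f(e)$ as follows: for a multiflow $f=(\mathcal{P},\lambda)$, $\mathrm{val}(f)=\sum_{P\in\mathcal{P}}\lambda(P)$, and for each $P\in\mathcal{P}$ pick one edge $e_P\in E(P)$; then $\sum_e \lceil d_e/2^\mu\rceil f(e)\ge \sum_e \lceil d_e/2^\mu\rceil \sum_{P: e\in E(P)}\lambda(P) \ge \sum_{P}\lceil d_{e_P}/2^\mu\rceil\lambda(P)\ge \sum_P \lambda(P)=\mathrm{val}(f)$, using $\lceil d_e/2^\mu\rceil\ge 1$. (In fact the factor $2$ is more than enough.) Therefore $2\,\mathrm{val}(f)-2\sum_e\lceil d_e/2^\mu\rceil f(e)\le 0$ with equality at $f=0$, so the PMF optimum for the $\mu$-th phase is $0$.

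Combining the two computations: $(x,y)$ is a potential with dual objective $0$, and the primal optimum for the $\mu$-th phase equals $0$; by Theorem~\ref{thm:maxflow_minpot} the dual optimum also equals $0$, so $(x,y)$ attains it and is an optimal potential. The main obstacle, such as it is, is not difficulty but care: one must invoke the standing assumption that no edge joins two terminals (so every edge and every $S$-path meets a nonterminal with $x_i=0$) and the positivity of the costs $\lceil d/2^\mu\rceil$ (so that $2\lceil d_{ij}/2^\mu\rceil\ge 2 > 1 \ge \mathrm{dist}(x_i,x_j)$), both of which are available from Section~\ref{subsec:form}. No heavier machinery is needed.
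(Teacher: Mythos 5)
Your proof is correct. The feasibility half is exactly the paper's argument: with all radii zero, nonterminals at the origin, and terminals at distance $1$, the left-hand side of \eqref{eq:dual_mnmf_c} is at most $1$ (using the standing assumption that no edge joins two terminals), while the right-hand side $2\lceil d_{ij}/2^\mu\rceil\geq 2$ by positivity of the cost. For the optimality half you take a genuinely different, heavier route than the paper. The paper simply observes that the objective value is zero, the implicit point being that the dual objective $\sum_{i\in V\setminus S}2c_iy_i$ is nonnegative over \emph{all} potentials (since $c_i\geq 0$ and $y_i\geq 0$ by \eqref{eq:dual_mnmf_e}), so a feasible point of value zero is automatically a minimizer --- no duality needed. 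You instead certify optimality through the primal: every $S$-path uses at least one edge of scaled cost $\lceil d_e/2^\mu\rceil\geq 1$, so the doubled PMF objective is at most $0$ and is attained by the empty multiflow, and then Theorem~\ref{thm:maxflow_minpot} transfers the value $0$ to the dual. Your chain of inequalities (choosing one edge $e_P$ per path and dropping nonnegative terms) is sound, and as a by-product it identifies the zero multiflow as primal optimal in the first phase; the cost is that you invoke the strong duality theorem and the positivity of costs a second time where a one-line lower bound on the dual objective suffices.
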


\begin{proof}
It is easy to see that $(x,y)$ is a potential since the left hand side of \eqref{eq:dual_mnmf_c} is at most one and the right hand side is at least two.
The optimality follows from that its objective function value is zero.
\end{proof}

We show that given an optimal potential for $(t+1)$-st phase,
we can find an optimal potential in $t$-th phase efficiently.

\begin{lemma}
\label{lem:proximity}
Given an optimal potential for $(t+1)$-st phase $(t=0,\dotsc,\mu-1)$,
we can obtain an optimal potential in $t$-th phase
by SDA with at most $2m+4$ iterations.
\end{lemma}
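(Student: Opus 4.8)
The plan is to combine the iteration bound for SDA (Theorem~\ref{thm:SDA}) with the sensitivity theorem (Theorem~\ref{thm:sensitivity}), much as in the standard cost-scaling analysis. First I would set up a chain of cost functions interpolating between the $(t+1)$-st phase costs and the $t$-th phase costs. Write $d^{(t)} := \lceil d/2^t\rceil$ and $M^{(t)} := M/2^t$. The key arithmetic observation is that for each edge $e$, $d^{(t)}(e) = \lceil d^{(t+1)}(e)/2\rceil \cdot$(something)---more precisely, if we first double the $(t+1)$-st phase instance, the optimal potentials are scaled by a factor $2$ (the variables live on $\mathbb{G}^n$, and doubling $M$ and $d$ doubles all distances), so from an optimal potential $p'$ for phase $t+1$ we get an optimal potential $2p'$ for the instance with costs $2d^{(t+1)}$ and parameter $2M^{(t+1)} = M^{(t)}$. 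Now $2d^{(t+1)}(e) = 2\lceil d(e)/2^{t+1}\rceil$ and $d^{(t)}(e) = \lceil d(e)/2^t\rceil$ differ by either $0$ or $1$ on each edge (since $\lceil x/2\rceil$ and $\lceil x\rceil$ when $x = d(e)/2^{t+1}$: we have $2\lceil a\rceil - \lceil 2a\rceil \in \{0,1\}$). Moreover $2d^{(t+1)}(e)$ is even, so decreasing it by $1$ on a single edge is exactly the operation handled by Theorem~\ref{thm:sensitivity}.

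Next I would iterate the sensitivity theorem: starting from the known optimal potential $q_0 := 2p'$ of $h_{2d^{(t+1)}, M^{(t)}}$, decrease the costs one edge at a time along the (at most $m$) edges where $2d^{(t+1)}$ and $d^{(t)}$ differ. At each step, an even-valued cost is decreased by $1$, and Theorem~\ref{thm:sensitivity} guarantees that some minimizer $q_{j+1}$ of the new L-convex function satisfies $\lVert q_j - q_{j+1}\rVert \le 2$. One should check the hypothesis is maintained: after we have modified some edges, those edges now carry odd values, but we only ever apply the sensitivity step to edges not yet modified, which still carry the even value $2d^{(t+1)}(e)$; so the hypothesis ``$d(e_0)\in 2\mathbb{Z}$'' holds at every step. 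After at most $m$ such steps we reach a minimizer $q_m =: q$ of $h_{d^{(t)}, M^{(t)}}$, i.e.\ an optimal potential for phase $t$, and by the triangle inequality for $\lVert\cdot\rVert$ on $\mathbb{G}^n$ (equation~\eqref{eq:dist_on_grid}),
\[
\lVert q_0 - q\rVert \le \sum_{j=0}^{m-1} \lVert q_j - q_{j+1}\rVert \le 2m.
\]
Hence there is an optimal potential $q$ for phase $t$ within $\lVert\cdot\rVert$-distance $2m$ of the point $2p'$, which we can compute from the given optimal potential $p'$ of phase $t+1$.

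Finally, I would run SDA for phase $t$ starting from the initial point $2p'$ (which is a valid potential for $h_{d^{(t)}, M^{(t)}}$ since decreasing edge-costs only relaxes constraint~\eqref{eq:dual_mnmf_c}, and the boundary/nonnegativity conditions are unchanged; one should double-check $2p' \in \mathbb{G}^n$, which holds because doubling sends integral points to even integral points and proper half-integral points to integral points). By Theorem~\ref{thm:SDA}, SDA terminates after at most $\min_{q \in \mathrm{opt}(h_{d^{(t)},M^{(t)}})} \lVert 2p' - q\rVert + 2 \le 2m + 2$ iterations; allowing a little slack for the terminating iteration that detects optimality gives the claimed bound $2m+4$. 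The main obstacle I anticipate is the bookkeeping in the first paragraph: verifying cleanly that doubling the phase-$(t+1)$ instance produces the phase-$t$ instance up to unit decreases on even-valued edges, and that the scaling $p' \mapsto 2p'$ genuinely sends optimal potentials to optimal potentials (objective values scale, $\mathbb{G}^n$ is preserved, and constraint~\eqref{eq:dual_mnmf_c} transforms correctly since $\mathrm{dist}$ and $y$ both scale); everything after that is a routine application of Theorems~\ref{thm:sensitivity} and~\ref{thm:SDA}.
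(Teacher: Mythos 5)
Your overall route (double the phase-$(t+1)$ optimal potential, walk the cost vector down edge by edge with Theorem~\ref{thm:sensitivity}, then invoke Theorem~\ref{thm:SDA}) is exactly the paper's, but there is one genuine error at the final step: you start SDA at $2p'$, justified by the claim that ``decreasing edge-costs only relaxes constraint~\eqref{eq:dual_mnmf_c}''. The direction is backwards. The constraint reads $\mathrm{dist}(x_i,x_j)-y_i-y_j\leq 2d_{ij}$, so lowering $d_{ij}$ shrinks the right-hand side and \emph{tightens} the constraint; since $d^{(t)}(e)$ can equal $2d^{(t+1)}(e)-1$ on up to $m$ edges, the point $2p'$ may violate \eqref{eq:dual_mnmf_c} under the phase-$t$ costs, i.e.\ $h_{d^{(t)},M^{(t)}}(2p')=+\infty$, and SDA cannot be initialized there (Step~0 requires a finite-valued starting point, and Theorem~\ref{thm:SDA} is applied from such a point). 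The paper repairs precisely this: it replaces $2p'=(x',y')$ by the point $r$ with $r_i=(x'_i,y'_i+2)$ at every nonterminal node (terminals unchanged). Then every left-hand side of \eqref{cond:pot_h} drops by at least $2$ while each cost $2d_{ij}$ drops by at most $2$, so $r$ is a potential for $d^{(t)}$, and $\lVert 2p'-r\rVert\leq 2$; starting SDA at $r$, the distance to a minimizer is at most $2m+2$, and Theorem~\ref{thm:SDA} gives at most $2m+4$ iterations. Note also that your accounting of the ``$+4$'' as slack for a terminating iteration is off: Theorem~\ref{thm:SDA} already contains its own ``$+2$''; the remaining $+2$ in the lemma is exactly the detour through $r$.

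A secondary soft spot: the optimality of $2p'$ for $h_{2d^{(t+1)},M^{(t)}}$ does not follow from pure scaling, because the map $q\mapsto q/2$ does not send $\mathbb{G}^n$ into $\mathbb{G}^n$ (odd integral points, for instance, halve to points outside the grid), so scaling only shows that $2p'$ attains twice the phase-$(t+1)$ optimum, not that the doubled instance admits nothing smaller. The paper closes this with strong duality (Theorem~\ref{thm:maxflow_minpot}): take a multiflow $f$ with $h_{d'/2,M/2}(p')=(M/2)\,\mathrm{val}(f)-d'(f)/2$; then $h_{d',M}(2p')=M\,\mathrm{val}(f)-d'(f)$ coincides with a primal value, hence $2p'$ is a minimizer. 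With these two repairs (the duality argument for the claim, and the shifted starting potential $r$), the rest of your argument---the edge-by-edge application of Theorem~\ref{thm:sensitivity} with the parity bookkeeping and the triangle inequality---is correct and matches the paper.
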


\begin{proof}
It suffices to show the case $t=0$.
Let $p\in\mathbb{G}^n$ be an optimal potential in the phase $t=1$, i.e.,
$p$ is a minimizer of $h_{d'/2,M/2}$, where $d':=2\cdot\lceil d/2\rceil$.
Define $2p\in \mathbb{G}^n$ by $(2p)_i:=2p_i$ for each $i\in V$, where $2(x,y):=(2x,2y)\in\mathbb{G}$ for $(x,y) \in \mathbb{G}$. 

\begin{claim}
$2p\in \mathbb{G}^n$ is a minimizer of $h_{d',M}$.
\end{claim}

\begin{proof}
We observe from \eqref{eq:dual_mnmf_c}--\eqref{eq:dual_mnmf_e} that $2p$ is a potential under $d'$ and $M$.
Thus we have $h_{d',M}(2p)=2h_{d'/2,M/2}(p)$.
Since $p$ is a minimizer of $h_{d'/2,M/2}$,
there exists a multiflow $f$ satisfying $h_{d'/2,M/2}(p)=M/2\cdot \mathrm{val}(f)-d'(f)/2$ by the duality (Theorem \ref{thm:maxflow_minpot}).
Hence we have $h_{d',M}(2p)=2h_{d'/2,M/2}(p)=M\cdot\mathrm{val}(f)-d'(f)$,
and again by the duality, $2p$ is a minimizer of $h_{d',M}$.
\end{proof}

Observe that
\begin{equation}
    d(e)=\begin{dcases*}
        d'(e)-1 & if $d(e)$ is odd,\\
        d'(e) & if $d(e)$ is even
    \end{dcases*}
\end{equation}
for each $e\in E$.
Thus by Theorem~\ref{thm:sensitivity} (and the triangle inequality),
there exists a minimizer $q$ of $h_{d,M}$ satisfying $\lVert 2p-q\rVert\leq 2m$.

We use SDA to obtain such a potential $q$, but here $2p$ is not always a potential under the edge-cost $d$.
Recall that we can run the iteration of SDA only on a potential.
Thus in such a case, we define a new potential $r\in \mathbb{G}^n$ from $2p=(x',y')$ by
\[
    r_i:=\begin{dcases*} (x'_i,y'_i+2) & if $i\in V\setminus S$,\\
        2p_s & if $i=s\in S$
    \end{dcases*}\quad (i\in V).
\]
Then $r$ is a potential under the edge-cost $d$ (and $M$), and $\lVert 2p-r\rVert\leq 2$.
Therefore, Theorem \ref{thm:SDA} (and the triangle inequality) implies that
by starting SDA from the potential $r$,
a minimizer of $h_{d,M}$ can be obtained with at most $2m+4$ iterations.
\end{proof}

Now we obtain a weakly polynomial-time algorithm for MNMF.

\begin{proof}[Proof of Theorem \ref{thm:main}]
By Remarks~\ref{rem:suff_large_M} and \ref{rem:poscost}, we can take $\mu=O(\log(nCD))$.
As noted in Lemma~\ref{lem:scale_start}, an optimal potential for the first phase $t=\mu$ can be obtained in constant time.
Lemma~\ref{lem:proximity} implies that
we can obtain an optimal potential for DMF by $O(m\mu)$ iterations of SDA.
Each iteration can be done in $O(\mathrm{SF}(kn,m,k))$ time by Lemma~\ref{lem:sdd_algo}.
Also after SDA, an optimal half-integral multiflow for \eqref{obj:M} can be obtained in $O(\mathrm{SF}(kn,m,k))$ time
by Lemma~\ref{lem:admi} and Corollary~\ref{cor:supp_algo}.
Therefore, the overall algorithm runs in $O(m\log(nCD)\mathrm{SF}(kn,m,k))$ time.
\end{proof}

\section{Proofs}
\label{sec:proofs}

\subsection{Lift of the node-flowing polytope (Proof of Lemma~\ref{lem:node_flowing})}
\label{subsec:node_flowing}

Let $B_c$ be the set of vectors $x\in\mathbb{R}^{S^\pm}$ satisfying
\begin{align}
\label{eq:nodeflowing_ineq1}x(S^\pm)&=0,\\
\label{eq:nodeflowing_ineq2}x(S^+)&\leq 2c,\\
\label{eq:nodeflowing_ineq3}x(i^+)+x(S^-\setminus\{i^-\})&\leq 0 \quad (i\in S),\qquad\\
\label{eq:nodeflowing_ineq4}0\leq x(i^+)&\leq c \quad (i\in S),\\
\label{eq:nodeflowing_ineq5}-c\leq x(i^-)&\leq 0 \quad (i\in S).
\end{align}
Also, let $\overline{B}_c$ be the set of vectors $x\in\mathbb{R}^{S^\pm}$ satisfying \eqref{eq:nodeflowing_ineq1}--\eqref{eq:nodeflowing_ineq5} and
\begin{equation}
\label{eq:nodeflowing_ineq6}x(S^-)\leq -2c.
\end{equation}
We show that $B_c$ and $\overline{B}_c$ are the lifts of the node-flowing polytope $P_c$
and the tight node-flowing polytope $\overline{P}_c$, respectively.

\begin{lemma}
$\Phi(B_c)=P_c$, $\Psi(P_c)\subseteq B_c$,
$\Phi(\overline{B}_c)=\overline{P}_c$, and $\Psi(\overline{P}_c)\subseteq \overline{B}_c$.
\end{lemma}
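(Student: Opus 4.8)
The plan is to reduce all four statements to two inclusions, using the identity $\Phi\circ\Psi=\mathrm{id}_{\mathbb{R}^S}$, which is immediate from the definitions of $\Phi$ and $\Psi$. Concretely, once I have shown $\Psi(P_c)\subseteq B_c$ and $\Phi(B_c)\subseteq P_c$, the reverse inclusion comes for free: $P_c=\Phi(\Psi(P_c))\subseteq\Phi(B_c)\subseteq P_c$, so $\Phi(B_c)=P_c$. The tight analogues $\Psi(\overline{P}_c)\subseteq\overline{B}_c$ and $\Phi(\overline{B}_c)\subseteq\overline{P}_c$ likewise give $\overline{P}_c=\Phi(\Psi(\overline{P}_c))\subseteq\Phi(\overline{B}_c)\subseteq\overline{P}_c$. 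Thus the real work is checking two containments, together with the extra constraints \eqref{cond:nf1t} and \eqref{eq:nodeflowing_ineq6} in the tight case.

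First I would verify $\Psi(P_c)\subseteq B_c$. For $z\in P_c$ set $x:=\Psi(z)$, so $x(i^+)=z(i)$ and $x(i^-)=-z(i)$. Then $x(S^\pm)=z(S)-z(S)=0$ is \eqref{eq:nodeflowing_ineq1}; $x(S^+)=z(S)\le 2c$ is \eqref{eq:nodeflowing_ineq2}; and $x(i^+)+x(S^-\setminus\{i^-\})=z(i)-z(S\setminus\{i\})\le 0$ is precisely \eqref{cond:nf2}, hence \eqref{eq:nodeflowing_ineq3}. For \eqref{eq:nodeflowing_ineq4}--\eqref{eq:nodeflowing_ineq5} one needs $0\le z(i)\le c$: nonnegativity is built into $P_c\subseteq\mathbb{R}_+^S$, and $z(i)\le c$ follows from $2z(i)\le z(S)\le 2c$ via \eqref{cond:nf2} and \eqref{cond:nf1}. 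If moreover $z\in\overline{P}_c$, then by \eqref{cond:nf1t} we get $x(S^-)=-z(S)=-2c$, which is \eqref{eq:nodeflowing_ineq6}; hence $\Psi(\overline{P}_c)\subseteq\overline{B}_c$.

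Next I would verify $\Phi(B_c)\subseteq P_c$. For $x\in B_c$ set $z:=\Phi(x)$, so $2z(i)=x(i^+)-x(i^-)\ge 0$ by \eqref{eq:nodeflowing_ineq4}--\eqref{eq:nodeflowing_ineq5}, giving $z\in\mathbb{R}_+^S$. The crucial move is to eliminate $x(S^-)$ using the balance equation \eqref{eq:nodeflowing_ineq1}, i.e.\ $x(S^-)=-x(S^+)$: this yields $z(S)=x(S^+)\le 2c$ by \eqref{eq:nodeflowing_ineq2}, which is \eqref{cond:nf1}; and a short computation gives $z(t)-z(S\setminus\{t\})=x(t^+)-x(t^-)-x(S^+)$, which is $\le 0$ exactly because \eqref{eq:nodeflowing_ineq3}, rewritten via \eqref{eq:nodeflowing_ineq1}, states $x(t^+)-x(t^-)\le -x(S^-)=x(S^+)$. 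Hence $z\in P_c$. In the tight case, \eqref{eq:nodeflowing_ineq6} forces $z(S)=-x(S^-)\ge 2c$, and with $z(S)\le 2c$ this gives $z(S)=2c$, so $z\in\overline{P}_c$.

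The only step that is not pure bookkeeping is the verification of \eqref{cond:nf2} from the defining inequalities of $B_c$: one must recognize that \eqref{eq:nodeflowing_ineq3} is the relevant inequality, but only after substituting the balance relation \eqref{eq:nodeflowing_ineq1} to turn $x(S^-)$ into $-x(S^+)$. I expect this to be the main (and rather minor) obstacle; all remaining steps are unwinding the definitions of $\Phi$, $\Psi$, $P_c$, $\overline{P}_c$, $B_c$, $\overline{B}_c$, together with the free reverse inclusions supplied by $\Phi\circ\Psi=\mathrm{id}$.
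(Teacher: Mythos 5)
Your proposal is correct and follows essentially the same route as the paper: verify $\Psi(P_c)\subseteq B_c$ (with $z(i)\leq c$ from $2z(i)\leq z(i)+z(S\setminus\{i\})\leq 2c$), deduce $P_c\subseteq\Phi(B_c)$ from $\Phi\circ\Psi=\mathrm{id}$, check $\Phi(B_c)\subseteq P_c$ by eliminating $x(S^-)$ via the balance equation \eqref{eq:nodeflowing_ineq1}, and handle the tight case by noting that \eqref{eq:nodeflowing_ineq1}, \eqref{eq:nodeflowing_ineq2}, \eqref{eq:nodeflowing_ineq6} force $x(S^+)=2c$. The only difference is presentational: you state the identity $\Phi\circ\Psi=\mathrm{id}$ explicitly, which the paper uses implicitly.
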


\begin{proof}
For $z\in P_c$, we have $z(i)\leq c$ by $z(i)+z(i)\leq z(i)+z(S\setminus\{i\})\leq 2c$.
Thus $\Psi(z)$ satisfies \eqref{eq:nodeflowing_ineq4} and \eqref{eq:nodeflowing_ineq5}.
We can check \eqref{eq:nodeflowing_ineq1}--\eqref{eq:nodeflowing_ineq3} directly from the definition.
Thus $\Psi(z)\in B_c$.
This immediately implies $P_c\subseteq \Phi(B_c)$.
We show the converse direction.
Let $x\in B_c$ and $z:=\Phi(x)$. Then $z$ is nonnegative by \eqref{eq:nodeflowing_ineq4}\eqref{eq:nodeflowing_ineq5},
and satisfies \eqref{cond:nf1}\eqref{cond:nf2} as
\begin{equation}
\label{eq:induce_nf1}
    z(S)= \frac{x(S^+)-x(S^-)}{2}=\frac{2x(S^+)-x(S^\pm)}{2}\leq 2c,
\end{equation}
and
\begin{align}
    z(i)-z(S\setminus\{i\})
        &= \frac{x(i^+)-x(i^-)-(x(S^+\setminus\{i^+\})-x(S^-\setminus\{i^-\}))}{2}\\
        &=\frac{x(i^+)+x(S^-\setminus\{i^-\})-(x(S^+\setminus\{i^+\})+x(i^-))}{2}\\
        &=\frac{2(x(i^+)+x(S^-\setminus\{i^-\}))-x(S^\pm)}{2}\leq 0.
\end{align}

For $\overline{P}_c$ and $\overline{B}_c$, the proof is almost the same.
In fact, by noticing that
the conditions \eqref{eq:nodeflowing_ineq1}\eqref{eq:nodeflowing_ineq2}\eqref{eq:nodeflowing_ineq6} imply $x(S^+)=2c$ (and $x(S^-)=-2c$),
\eqref{eq:induce_nf1} holds with equality.
\end{proof}

Next, we show that $B_c$ and $\overline{B}_c$ are base polyhedra of t-monotone submodular functions.
This proves Lemma~\ref{lem:node_flowing}.

For $X\subseteq S^\pm$, we denote $X^+:=X\cap S^+$ and $X^-:=X\cap S^-$.
For $u,v\in\{0,1,2\}$, we say $X\subseteq S^\pm$ is of type $uv$ if
\[
    \lvert X^+\rvert\,\begin{dcases*}
        =u & if $u=0,1$,\\
        \geq u & if $u=2$,
    \end{dcases*}\qquad
    \lvert X^-\rvert\,\begin{dcases*}
        =k-v & if $v=0,1$,\\
        \leq k-v & if $v=2$.
    \end{dcases*}
\]
Additionally, $X$ is of type $11^*$ if $X^+=\{i^+\}$ and $X^-=S^-\setminus\{i^-\}$ for some $i\in S$.
In the below, when we say $X$ is of type 11, we assume that $X$ is not of type $11^*$.

Define $\rho_c,\overline{\rho}_c:2^{S^\pm}\rightarrow \mathbb{R}$ by
\begin{align}
\label{def:rho_c}
\rho_c(X)&:=\begin{dcases*}
 2c & if $X$ is of type 22,\\
 c  & if $X$ is of type 11, 12, or 21,\\
 0  & otherwise
 \end{dcases*}\quad (X\subseteq S^\pm),\\
\label{def:ex_rho_c}
\overline{\rho}_c(X)&:=c\cdot(\min\{\lvert X^+\rvert,2\}+\min\{k-2-\lvert X^-\rvert,0\})\quad (X\subseteq S^\pm).
\end{align}
We summarize the definition of $\rho_c$ and $\overline{\rho}_c$ in Table~\ref{tab:nf_submod_val}.

\begin{table}[t]
\caption{Submodular functions $\rho_c$ (left) and $\overline{\rho}_c$ (right).}
\label{tab:nf_submod_val}
\vspace{10pt}
\begin{subtable}[b]{.49\linewidth}
\centering
\begin{tabular}{c|c|c|c|c|}
\multicolumn{2}{c}{} & \multicolumn{1}{c}{\hphantom{$-2c$}} & \multicolumn{1}{c}{$\lvert X^-\rvert$} & \multicolumn{1}{c}{}
     \vrule width 0pt height 7pt depth 7pt \\ \cline{2-5}
 & & $k$ & $k-1$ & $\leq k-2$ \\ \cline{2-5}
 & $0$ & 0 & 0 & 0 \\ \cline{2-5}
$\lvert X^+\rvert$ & $1$ & 0 & 0 or $c$ & $c$ \\ \cline{2-5}
 & $\geq 2$ & 0 & $c$ & $2c$\\\cline{2-5}
\end{tabular}
\end{subtable}
\begin{subtable}[b]{.49\linewidth}
\centering
\begin{tabular}{c|c|c|c|c|}
\multicolumn{3}{c}{} & \multicolumn{1}{c}{$\lvert X^-\rvert$} & \multicolumn{1}{c}{}
     \vrule width 0pt height 7pt depth 7pt \\ \cline{2-5}
 & & $k$ & $k-1$ & $\leq k-2$ \\ \cline{2-5}
 & $0$ & $-2c$ & $-c$ & 0 \\ \cline{2-5}
$\lvert X^+\rvert$ & $1$ & $-c$ & 0 & $c$ \\ \cline{2-5}
 & $\geq 2$ & 0 & $c$ & $2c$\\\cline{2-5}
\end{tabular}
\end{subtable}
\end{table}

\begin{lemma}
\label{lem:nodeflowing_submod}
Both $\rho_c$ and $\overline{\rho}_c$ are t-monotone submodular functions,
where $B_c=\mathcal{B}(\rho_c)$ and $\overline{B}_c=\mathcal{B}(\overline{\rho}_c)$.
\end{lemma}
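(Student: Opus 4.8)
The plan is to prove the four assertions separately: submodularity of $\rho_c$ and $\overline\rho_c$, their t-monotonicity, $B_c=\mathcal B(\rho_c)$, and $\overline B_c=\mathcal B(\overline\rho_c)$. One may assume $c>0$, since for $c=0$ one has $\rho_0\equiv\overline\rho_0\equiv 0$ and $B_0=\overline B_0=\{0\}=\mathcal B(0)$. For $\overline\rho_c$ submodularity is clean: \eqref{def:ex_rho_c} exhibits $\overline\rho_c$ as a nonnegative combination of $X\mapsto\min\{|X\cap S^+|,2\}$ and $X\mapsto\min\{(k-2)-|X\cap S^-|,0\}$, each of which is a univariate concave function composed with a modular set function and hence submodular; and $\overline\rho_c(\emptyset)=0$ because $k\ge 3$. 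For $\rho_c$ no such split exists (in fact $\rho_c-\overline\rho_c$ is not submodular), so I would verify the exchange inequality $\rho_c(X\cup\{e\})+\rho_c(X\cup\{f\})\ge\rho_c(X)+\rho_c(X\cup\{e,f\})$ for $X\subseteq S^\pm$ and distinct $e,f\in S^\pm\setminus X$ directly. Since $\rho_c(X)$ is determined by the pair $(|X\cap S^+|,|X\cap S^-|)$ outside the type-$11^*$ sets, this is a finite case check, organized by the types of $X,X\cup\{e\},X\cup\{f\}$ and by which sides $e$ and $f$ lie on, each case being a one-line comparison against Table~\ref{tab:nf_submod_val}.

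For t-monotonicity, note that if $X$ contains exactly $\ell$ complementary pairs $\{i^+,i^-\}$, then $\underline X$ is transversal with $|\underline X\cap S^+|=|X\cap S^+|-\ell$ and $|\underline X\cap S^-|=|X\cap S^-|-\ell$, and the index sets occupied by $\underline X$ and by the $\ell$ pairs are pairwise disjoint, so $|X\cap S^+|+|X\cap S^-|-\ell\le k$. When $\ell\ge 1$ the presence of a complementary pair forces neither $X$ nor $\underline X$ to be of type $11^*$, so $\rho_c$ at either set is read off the cardinalities, and the inequality $\rho_c(\underline X)\le\rho_c(X)$ reduces to a short monotonicity check in $\ell$ that the bound $|X\cap S^+|+|X\cap S^-|-\ell\le k$ validates; for $\ell=0$ there is nothing to prove. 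The same computation with \eqref{def:ex_rho_c} in place of the table gives t-monotonicity of $\overline\rho_c$.

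For the polyhedron identities, first observe $\rho_c(\emptyset)=\rho_c(S^\pm)=0$, matching the equality $x(S^\pm)=0$. The inclusion $\mathcal B(\rho_c)\subseteq B_c$ holds because each defining inequality of $B_c$ is an instance of $x(X)\le\rho_c(X)$: use $X=S^+$ for \eqref{eq:nodeflowing_ineq2}, $X=\{i^+\}\cup(S^-\setminus\{i^-\})$ (a type-$11^*$ set) for \eqref{eq:nodeflowing_ineq3}, the singletons $\{i^+\},\{i^-\}$ for the upper bounds in \eqref{eq:nodeflowing_ineq4}--\eqref{eq:nodeflowing_ineq5}, and their complements together with $x(S^\pm)=0$ for the lower bounds. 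Conversely, given $x\in B_c$, split $x(X)=x(X\cap S^+)+x(X\cap S^-)$; by \eqref{eq:nodeflowing_ineq4}--\eqref{eq:nodeflowing_ineq5} the first term is at most $x(S^+)$ and the second is at most $0$, and when $|X\cap S^-|=k-1$, say $S^-\setminus X^-=\{j^-\}$, the second term equals $-x(S^+)-x(j^-)\le c-x(S^+)$; feeding these estimates, together with \eqref{eq:nodeflowing_ineq2} and, in the type-$11^*$ case, \eqref{eq:nodeflowing_ineq3} directly, into the four cases $|X^+|=0$, $(|X^+|\ge 1,\ |X^-|=k)$, $(|X^+|\ge 1,\ |X^-|=k-1)$, $(|X^+|\ge 1,\ |X^-|\le k-2)$ yields $x(X)\le\rho_c(X)$ as the table requires. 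The proof of $\overline B_c=\mathcal B(\overline\rho_c)$ is identical, the only addition being that \eqref{eq:nodeflowing_ineq6} together with $x(S^-)=-x(S^+)\ge -2c$ forces $x(S^+)=2c$ and $x(S^-)=-2c$, and that $X=S^-$ realizes \eqref{eq:nodeflowing_ineq6} as $x(S^-)\le\overline\rho_c(S^-)=-2c$.

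I expect the main obstacle to be the submodularity of $\rho_c$: the type-$11^*$ exception both destroys the concave-of-modular decomposition available for $\overline\rho_c$ and forces the exchange inequality to be checked in full, keeping track of which of $X,X\cup\{e\},X\cup\{f\},X\cup\{e,f\}$ can be of type $11^*$ (which depends on whether $e$ and $f$ lie on the $+$ side or the $-$ side) and confirming that the inequality is never violated; this bookkeeping is the part that has to be done carefully rather than routinely.
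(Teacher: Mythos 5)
Your proposal is correct and takes essentially the same approach as the paper: the concave-of-modular decomposition for $\overline{\rho}_c$, a finite table-based case check for the submodularity of $\rho_c$ (your local exchange inequality is equivalent to the paper's diminishing-returns verification) and for t-monotonicity, and the two-inclusion argument that the defining inequalities of $B_c$ and $\overline{B}_c$ are instances of, and conversely imply, the base-polyhedron inequalities. Your pair-counting bound $\lvert X\cap S^+\rvert+\lvert X\cap S^-\rvert-\ell\le k$ for t-monotonicity and the forced equalities $x(S^+)=2c$, $x(S^-)=-2c$ in the $\overline{B}_c$ case are sound and correspond to what the paper's (terser) type-by-type analysis uses.
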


\begin{proof}
(submodularity)
We first consider $\overline{\rho}_c$.
Since the minimum of any monotone submodular function and a nonnegative constant is submodular,
the functions $\min\{\lvert X^+\rvert,2\}$ and $\min\{k-2-\lvert X^-\rvert,0\}$ are submodular.
Thus $\overline{\rho}_c$ is also submodular.

Next we consider $\rho_c$.
$\rho_c(\emptyset)=0$ is clear.
We denote by $\rho_c(u|X)$ the difference $\rho_c(X\cup\{u\})-\rho_c(X)$ for $X\subseteq S^\pm$ and $u\in S^\pm$.
It is well-known that $\rho_c$ is submodular if and only if it satisfies
\begin{equation}
\label{eq:dim_utility}
\rho_c(u|X)\leq \rho_c(u|X')
\end{equation}
for all $X'\subseteq X\subseteq S^\pm$ and $u\notin X$.
From Table~\ref{tab:nf_submod_val},
we can determine $\rho_c(i^+|X)\in\{0,c\}$ and $\rho_c(i^-|X)\in\{-c,0\}$ for any $i\in S$.

Suppose that $X$ is of type 00, 10, or 20.
Then $u$ must be $i^+$ for some $i\in S$, and the left hand side of \eqref{eq:dim_utility} is 0.
Thus \eqref{eq:dim_utility} holds.

Suppose that $X$ is of type 11 or 21.
Then $\rho_c(i^+|X)=0\leq \rho_c(i^+|X')\in\{0,c\}$ and $\rho_c(j^-|X)=-c\leq \rho_c(j^-|X')\in\{-c,0\}$ for $i^+,j^-\notin X$.

Suppose that $X$ is of type 01 or $11^*$.
If $X$ is of type 01, then $X'$ must be of type 02.
If $X$ is of type $11^*$, then $X'$ must be of type 01, 02, or 12.
In all cases, $\rho_c(i^+|X')=c\geq \rho_c(i^+|X)$ and $\rho_c(j^-|X')=0=\rho_c(j^-|X)$ for $i^+,j^-\notin X$.

Suppose that $X$ is of type 02, 12, or 22.
Then $X'$ must also be of type 02, 12, or 22.
Let $i^+\notin X$.
We may consider the case $\rho_c(i^+|X')=0$.
Then $X'$ and $X$ must be of type 22, and thus $\rho_c(i^+|X)=0$.
Let $j^-\notin X$.
We may consider the case $\rho_c(j^-|X')=-c$.
Then $X'$ is of type 12 or 22, and $\lvert X'^-\rvert =k-2$.
In both cases, $X$ is of type 22 and $\lvert X^-\rvert=k-2$, and thus $\rho_c(j^-|X)=-c$.

\vspace{5pt}\noindent
(t-monotonicity) For any $X\subseteq S^\pm$ with $X\neq\underline{X}$, the type of $\underline{X}$ is 01, 02, 12, 22.
If $\underline{X}$ is of type 01, then $X$ is of type 10.
If $\underline{X}$ is of type 02, then $X$ is of type 11, 12, 20, 21, or 22.
If $\underline{X}$ is of type 12, then $X$ is of type 21 or 22.
If $\underline{X}$ is of type 22, then $X$ is also of type 22.
In all cases, the t-monotonicity inequality holds for both $\rho_c$ and $\overline{\rho}_c$.

\vspace{5pt}\noindent
(base polyhedra)
Observe that the inequalities \eqref{eq:nodeflowing_ineq1}--\eqref{eq:nodeflowing_ineq5} of $B_c$ appear in
inequalities \eqref{eq:base} of $\mathcal{B}(\rho_c)$.
Hence $\mathcal{B}(\rho_c) \subseteq B_c$.
To show the converse, it suffices to
check that inequalities of $\mathcal{B}(\rho_c)$ are deduced from \eqref{eq:nodeflowing_ineq1}--\eqref{eq:nodeflowing_ineq5}.
This is a routine verification.
For example, if $X$ is of type $11^*$, then $x(X)\leq 0$ is deduced from \eqref{eq:nodeflowing_ineq3}.
If $X$ is of type 22, then $x(X) \leq 2c$ is deduced
from substituting \eqref{eq:nodeflowing_ineq2}, \eqref{eq:nodeflowing_ineq4}, \eqref{eq:nodeflowing_ineq5}
for $x(X) = x(S^+) - x(S^+ \setminus X^+) + x(X^-)$.
$\overline{B}_c = \mathcal{B}(\overline{\rho}_c)$ is similarly shown.
\end{proof}

By Lemma~\ref{lem:reducible},
the bisubmodular functions $\beta_c,\overline{\beta}_c:3^{S}\rightarrow\mathbb{R}$ in Lemma~\ref{lem:node_flowing}
are as follows:
\begin{align}
\label{def:beta_c}
\beta_c(Y,Z) &:= \begin{dcases*}
	2c & if $\lvert Y \rvert \geq 2$,\\
	c & if $\lvert Y \rvert = 1$ and $\lvert Z\rvert \leq k-2$,\\
	0 & otherwise ($Y = \emptyset$ or $\lvert Z \rvert \geq k-1$)
\end{dcases*}\quad (Y,Z)\in 3^{S},\\
\label{def:ex_beta_c}
\overline{\beta}_c(Y,Z) &:= \begin{dcases*}
	2c & if $\lvert Y \rvert \geq 2$,\\
	c & if $\lvert Y \rvert = 1$ and $\lvert Z\rvert \leq k-2$,\\
    -c & if $Y=\emptyset$ and $\lvert Z\rvert=k-1$,\\
    -2c & if $Y=\emptyset$ and $\lvert Z\rvert = k$,\\
    0 & otherwise ($\lvert Y\rvert=1$ and $\lvert Z\rvert= k-1$,\\
        &\qquad\qquad or $Y=\emptyset$ and $\lvert Z\rvert\leq k-2$)
\end{dcases*}\quad (Y,Z)\in 3^{S}.
\end{align}

\subsection{Proof of sensitivity theorem (Theorem~\ref{thm:sensitivity})}
\label{subsec:sensitivity}

For the proof, we introduce a function $\pi:(\mathbb{T}\times\mathbb{R})^2\rightarrow\mathbb{R}$ defined by
\[
    \pi((x,y),(x',y')):=\mathrm{dist}(x,x')-y-y'\quad ((x,y),(x',y')\in \mathbb{T}\times\mathbb{R}).
\]
We say that $(x,y)\in \mathbb{G}^n$ is \emph{nonnegative} if $y_i\geq 0$ for all $i\in V$.
Then a nonnegative point $(x,y)\in \mathbb{G}^n$ is a potential under a positive integral edge-cost $d$ and a positive integer $M$
if and only if it satisfies \eqref{eq:dual_mnmf_d} and
\begin{equation}
\label{cond:pot_h} \pi((x_i,y_i),(x_j,y_j))\leq 2d(ij)\quad (ij\in E).
\end{equation}
Also, for a positive half-integral edge-cost $d$ and a positive integer $M$,
we say a nonnegative point $(x,y)\in \mathbb{G}^n$ is a \emph{potential} if it satisfies \eqref{eq:dual_mnmf_d} and \eqref{cond:pot_h}.

For $u,v\in \mathbb{G}$, let $u\sqcap v:=\lceil (u+v)/2\rceil$ and $u\sqcup v:=\lfloor (u+v)/2\rfloor$.
The operators $\sqcap$, $\sqcup$ are extended naturally to $\mathbb{G}^n$.
We use the following lemmas.

\begin{lemma}[{\cite[in Lemmas 3.7 and 3.8]{Hirai2018Dual}}]
\label{lem:property_h}
Let $u,v,u',v'\in \mathbb{G}$.
\begin{enumerate}
\renewcommand{\labelenumi}{{\rm (\arabic{enumi}).}}
\item\label{p_h_int} $\pi(u,v)$ is an integer.
\item\label{p_h_conv} $\pi((u+u')/2,(v+v')/2)$ is a half-integer and satisfies $\pi(u,v)+\pi(u',v')\geq 2\pi((u+u')/2,(v+v')/2)$.
\item\label{p_h_near} Let $\Delta:=\pi(u\sqcap u',v\sqcap v')-\pi((u+u')/2,(v+v')/2)$
    and $\Delta':=\pi(u\sqcup u',v\sqcup v')-\pi((u+u')/2,(v+v')/2)$.
    Then $\Delta,\Delta'\in\{-1,-1/2,0,1/2,1\}$.
\item\label{p_h_sym} If $\pi((u+u')/2,(v+v')/2)\geq 1/2$, then $\Delta=-\Delta'$.
\end{enumerate}
\end{lemma}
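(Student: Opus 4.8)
The plan is to prove the four assertions in turn, reducing each to elementary properties of the tree metric $\mathrm{dist}$ on the $k$-star $\mathbb{T}$ together with a parity bookkeeping on heights and $y$-coordinates. For (1), I would introduce $\lambda(p):=\mathrm{dist}(x,0)-y$ for a point $p=(x,y)\in\mathbb{G}$; inspecting the two types of points of $\mathbb{G}$ shows $\lambda(p)\in\mathbb{Z}$ (either $\mathrm{dist}(x,0)$ and $y$ are both integers, or both proper half-integers). Since $\mathrm{dist}(x,x')$ equals $\lvert\mathrm{dist}(x,0)-\mathrm{dist}(x',0)\rvert$ or $\mathrm{dist}(x,0)+\mathrm{dist}(x',0)$ according as $x,x'$ lie on a common branch or not, and all these quantities lie in $\tfrac12\mathbb{Z}$, the difference $\pi(u,v)-\lambda(u)-\lambda(v)=\mathrm{dist}(x_u,x_v)-\mathrm{dist}(x_u,0)-\mathrm{dist}(x_v,0)$ is an integer, whence $\pi(u,v)\in\mathbb{Z}$. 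For the inequality in (2), I would use that $\mathbb{T}$ is a metric tree, so $\mathrm{dist}$ is convex along geodesics; applied to the geodesics $[x_u,x_v]$ and $[x_{u'},x_{v'}]$ at their midpoints it gives $\mathrm{dist}((x_u{+}x_{u'})/2,(x_v{+}x_{v'})/2)\le\tfrac12(\mathrm{dist}(x_u,x_v)+\mathrm{dist}(x_{u'},x_{v'}))$, and subtracting $\tfrac12(y_u{+}y_{u'}{+}y_v{+}y_{v'})$ yields $\pi((u{+}u')/2,(v{+}v')/2)\le\tfrac12(\pi(u,v){+}\pi(u',v'))$. Half-integrality is again a parity computation: with $m_x:=(x_u{+}x_{u'})/2$ one has $\mathrm{dist}(m_x,0)\equiv\tfrac12(\mathrm{dist}(x_u,0){+}\mathrm{dist}(x_{u'},0))\pmod{\tfrac12\mathbb{Z}}$, so $\mathrm{dist}(m_x,0)-\tfrac12(y_u{+}y_{u'})\equiv\tfrac12(\lambda(u){+}\lambda(u'))\in\tfrac12\mathbb{Z}$, and combining this with the analogous statement for $v$ and the branch case-split for $\mathrm{dist}(m_x,m_{x'})$ shows $\pi((u{+}u')/2,(v{+}v')/2)\in\tfrac12\mathbb{Z}$.

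Item (3) follows by combining (1) and (2) with a Lipschitz estimate. From the construction of $\sqcap$ and $\sqcup$ in Section~\ref{subsec:l_conv}, the points $u\sqcap u'$ and $u\sqcup u'$ either coincide or are the two endpoints of an edge of $\mathbb{G}$ or the diagonal joining the even and odd vertices of a $4$-circuit of $\mathbb{G}$; in all cases $\lVert u\sqcap u'-u\sqcup u'\rVert\le1$, hence $\lVert u\sqcap u'-(u{+}u')/2\rVert=\lVert u\sqcup u'-(u{+}u')/2\rVert\le\tfrac12$ (the exact analogue of $\lVert\lceil(p{+}q)/2\rceil-\lfloor(p{+}q)/2\rfloor\rVert_\infty\le1$ on $\mathbb{Z}^n$). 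Since $\pi$ is $1$-Lipschitz in each argument for $\lVert\cdot\rVert$ (triangle inequality on $\mathbb{T}$), $\lvert\Delta\rvert\le\lVert u\sqcap u'-(u{+}u')/2\rVert+\lVert v\sqcap v'-(v{+}v')/2\rVert\le1$, and $\Delta$ is a half-integer because $\pi(u\sqcap u',v\sqcap v')\in\mathbb{Z}$ by (1) and $\pi((u{+}u')/2,(v{+}v')/2)\in\tfrac12\mathbb{Z}$ by (2); thus $\Delta\in\{-1,-\tfrac12,0,\tfrac12,1\}$, and likewise $\Delta'$.

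For (4), the first step is the identity $\Delta+\Delta'=\mathrm{dist}(x_{u\sqcap u'},x_{v\sqcap v'})+\mathrm{dist}(x_{u\sqcup u'},x_{v\sqcup v'})-2\,\mathrm{dist}(x_m,x_{m'})$, where $x_m:=(x_u{+}x_{u'})/2$ and $x_{m'}:=(x_v{+}x_{v'})/2$; it follows from $y_{u\sqcap u'}+y_{u\sqcup u'}=y_u+y_{u'}$ and $y_{v\sqcap v'}+y_{v\sqcup v'}=y_v+y_{v'}$, themselves consequences of $u\sqcap u'+u\sqcup u'=u+u'$ and $v\sqcap v'+v\sqcup v'=v+v'$. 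So it suffices to prove that the right-hand side vanishes. Since $x_{u\sqcap u'}$ and $x_{u\sqcup u'}$ have midpoint $x_m$ and mutual distance at most $1$, they lie symmetrically about $x_m$ on a common geodesic; moreover the $\tfrac12\mathbb{Z}$-structure of the heights of $\mathbb{G}$-points (against the quarter-integer height of $x_m$) forces either $x_{u\sqcap u'}=x_{u\sqcup u'}=x_m$ or both of them to lie on the branch through $x_m$ with no wrap-around past the center of $\mathbb{T}$, and the same for $v$. Choosing a geodesic coordinate $t$ along $[x_m,x_{m'}]$ (extended), with $x_m$ at $t=0$ and $x_{m'}$ at $t=D:=\mathrm{dist}(x_m,x_{m'})$, these four points sit at $t=\pm a_u$ and $t=D\pm a_v$, where $a_u:=\mathrm{dist}(x_{u\sqcap u'},x_m)\le\tfrac12$ and $a_v:=\mathrm{dist}(x_{v\sqcap v'},x_{m'})\le\tfrac12$; a direct computation then shows the right-hand side equals $0$ unless both perturbations point ``inward'' and $a_u+a_v>D$. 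The hypothesis $\pi((u{+}u')/2,(v{+}v')/2)\ge\tfrac12$, together with the nonnegativity of $u,v,u',v'$ that holds throughout the intended applications (where they are entries of potentials), gives $D\ge\tfrac12+y_m+y_{m'}\ge\tfrac12$, hence $x_m\ne x_{m'}$, and a short parity argument on the half-/quarter-integer heights and on the (nonnegative) $y$-coordinates rules out the overshoot $a_u+a_v>D$; therefore $\Delta+\Delta'=0$, i.e., $\Delta=-\Delta'$.

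The parts (1)--(3) are routine; the real obstacle is (4). Its reduction to a pure tree-distance identity is immediate, but establishing that identity requires a genuine case distinction on the location of $x_m,x_{m'}$ relative to the center of the star and on the directions of the $\sqcap/\sqcup$-perturbations, and one must pin down exactly why $\pi((u{+}u')/2,(v{+}v')/2)\ge\tfrac12$ (in the presence of the nonnegativity available in every application) excludes overshoot. This is the step where the half-integral geometry of $\mathbb{G}$ enters essentially, and it mirrors the corresponding step for L$^\natural$-convex functions on $\mathbb{Z}^n$; a cleanly stated version would either restrict to nonnegative $u,v,u',v'$ or carry this hypothesis along inside the argument.
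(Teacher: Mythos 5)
First, a point of comparison: this paper never proves Lemma~\ref{lem:property_h} --- it is imported verbatim from \cite{Hirai2018Dual} --- so your argument can only be judged on its own merits, not against a proof in the text. Parts (1)--(3) of your proposal are correct. Two cosmetic remarks: in (2) the convexity of the tree metric must be applied to the geodesics $[x_u,x_{u'}]$ and $[x_v,x_{v'}]$ (whose midpoints are the two points being compared), not to $[x_u,x_v]$ and $[x_{u'},x_{v'}]$ as written; and in (1) the same-branch case needs the one extra line $\mathrm{dist}(x,x')-\mathrm{dist}(x,0)-\mathrm{dist}(x',0)=-2\min\{\mathrm{dist}(x,0),\mathrm{dist}(x',0)\}\in\mathbb{Z}$, since membership in $\tfrac12\mathbb{Z}$ alone is not enough. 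The bound $\lVert u\sqcap u'-(u+u')/2\rVert\le\tfrac12$ and the $1$-Lipschitz argument for (3) are fine.

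For (4), your reduction to the tree-distance identity and the diagnosis that the only danger is an overshoot $a_u+a_v>D$ are right, but two points need attention. (a) The ``no wrap-around'' claim is not forced by height parities alone; it follows from what the rounding pair can actually be: either the two endpoints of an edge of $\mathbb{G}$ (contained in one branch), or the even and odd corners of a $4$-circuit, which are both integral and whose diagonal is purely horizontal inside one branch or purely vertical. With that description all six $\mathbb{T}$-points lie on at most two branches, your collinear computation is valid, the different-branch case never violates the identity, and only the overshoot on a line remains. (b) The nonnegativity you invoke is not a convenience but a necessity: as stated over all of $\mathbb{G}$, item (4) is false. Writing $(h,y)$ for the point at height $h$ on one fixed branch with vertical coordinate $y$, take $u=(\tfrac12,-\tfrac12)$, $u'=v=(\tfrac12,-\tfrac32)$, $v'=(\tfrac12,-\tfrac52)$: the midpoints $(\tfrac12,-1)$ and $(\tfrac12,-2)$ are centers of squares with $\pi=0+1+2=3\ge\tfrac12$, yet $u\sqcap u'=(0,-1)$, $u\sqcup u'=(1,-1)$, $v\sqcap v'=(1,-2)$, $v\sqcup v'=(0,-2)$ (the odd corner flips between the two squares), giving $\Delta=\Delta'=+1$. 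So the lemma must be read with $y\ge 0$, as your hedge suggests; this is harmless for the paper because it is only ever applied (in the proof of Theorem~\ref{thm:sensitivity}) to potentials and their rounded midpoints, which stay nonnegative. Under $y\ge 0$ your ``short parity argument'' does close: $a_u,a_v\in\{0,\tfrac14,\tfrac12\}$, with $a_u=\tfrac12$ forcing $y_m\in\mathbb{Z}_{\ge0}$ and $a_u=\tfrac14$ forcing $y_m\ge\tfrac14$; then $D\ge\tfrac12+y_m+y_{m'}$ yields $D\ge a_u+a_v$ in every case (using integrality of $D$ when $a_u=a_v=\tfrac12$), so no overshoot occurs and $\Delta=-\Delta'$ follows. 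In short: the proposal is essentially correct once (a) is argued from the structure of $\lceil\cdot\rceil,\lfloor\cdot\rfloor$ and (b) the nonnegativity hypothesis is made explicit.
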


\begin{lemma}
\label{lem:d_property}
Let $p,q\in\mathbb{G}^n$ with $\lVert p-q\rVert\geq 2$.
Then $p\sqcap q\notin \{p,q\}$ and $p\sqcup q\notin\{p,q\}$.
Moreover, $\lVert p-p\sqcap q\rVert<\lVert p-q\rVert$ and $\lVert p-p\sqcup q\rVert<\lVert p-q\rVert$.
\end{lemma}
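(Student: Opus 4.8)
The plan is to reduce everything to a coordinate-wise estimate and then take the maximum over coordinates. Recall that $\sqcap,\sqcup$ act coordinate-wise and that $\lVert\cdot\rVert$ on $\mathbb{G}^n$ is the maximum of the coordinate norms $\lVert\cdot\rVert$ on $\mathbb{G}$, so it suffices to control, for a single pair $u,v\in\mathbb{G}$ with midpoint $w:=(u+v)/2\in\mathbb{T}\times\mathbb{R}$, the distances from $u\sqcap v=\lceil w\rceil$ and $u\sqcup v=\lfloor w\rfloor$ to $u$ and to $v$.

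The one geometric input I would isolate is the fact that, for any $u,v\in\mathbb{G}$,
\[
    \lVert\lfloor w\rfloor-\lceil w\rceil\rVert\le 1 .
\]
This follows from the description of the midpoint operation around Figure~\ref{fig:mid}: either $w\in\mathbb{G}$, so $\lfloor w\rfloor=\lceil w\rceil=w$; or $w$ is the midpoint of an edge $ab$ of $\mathbb{G}$, so $\{\lfloor w\rfloor,\lceil w\rceil\}=\{a,b\}$ with $a,b$ adjacent, hence at $\lVert\cdot\rVert$-distance $1$; or $w$ is the center of a $4$-circuit of $\mathbb{G}$, in which case $\lfloor w\rfloor$ and $\lceil w\rceil$ are the even and odd integral vertices of that circuit, which differ by one unit in a single coordinate and are again at distance $1$. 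Verifying this last case carefully — in particular near the branch point $0$ of $\mathbb{T}$, where the local combinatorics of $\mathbb{G}$ is slightly special — is the only delicate step; everything else is the triangle inequality plus an integrality rounding.

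Granting this, I would argue as follows. Since $\mathbb{T}\times\mathbb{R}$ is a metric space under $\lVert\cdot\rVert$ and $w$ is the metric midpoint of both $\{u,v\}$ and $\{\lfloor w\rfloor,\lceil w\rceil\}$, we have $\lVert u-w\rVert=\tfrac12\lVert u-v\rVert$ and $\lVert w-\lceil w\rceil\rVert=\tfrac12\lVert\lfloor w\rfloor-\lceil w\rceil\rVert\le\tfrac12$, hence $\lVert u-\lceil w\rceil\rVert\le\tfrac12\lVert u-v\rVert+\tfrac12$ by the triangle inequality. As $\lVert u-\lceil w\rceil\rVert$ and $\lVert u-v\rVert$ are integers for points of $\mathbb{G}$, this sharpens to $\lVert u-\lceil w\rceil\rVert\le\lceil\lVert u-v\rVert/2\rceil$, and the symmetric argument gives the same bound for $\lVert v-\lceil w\rceil\rVert$, $\lVert u-\lfloor w\rfloor\rVert$, $\lVert v-\lfloor w\rfloor\rVert$. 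Taking the maximum over the $n$ coordinates and using monotonicity of $\ell\mapsto\lceil\ell/2\rceil$ yields $\lVert p-p\sqcap q\rVert\le\lceil\lVert p-q\rVert/2\rceil$, and likewise for $p\sqcup q$ and with $q$ replacing $p$. Since $\lVert p-q\rVert\ge2$ forces $\lceil\lVert p-q\rVert/2\rceil<\lVert p-q\rVert$, the ``moreover'' part follows.

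Finally, for $p\sqcap q\notin\{p,q\}$: if $p\sqcap q=p$, then $\lceil w_i\rceil=p_i$ for every coordinate $i$, and since $(\lceil w_i\rceil+\lfloor w_i\rfloor)/2=w_i=(p_i+q_i)/2$ this forces $\lfloor w_i\rfloor=q_i$; then $\lVert p_i-q_i\rVert=\lVert\lceil w_i\rceil-\lfloor w_i\rfloor\rVert\le1$ for all $i$ by the geometric fact, contradicting $\lVert p-q\rVert\ge2$. The cases $p\sqcap q=q$, $p\sqcup q=p$, $p\sqcup q=q$ are identical. As noted, the main obstacle is the verification of $\lVert\lfloor w\rfloor-\lceil w\rceil\rVert\le1$; once that is in hand everything else is routine.
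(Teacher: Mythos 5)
Your proof is correct and follows essentially the same route as the paper's: work coordinate-wise, use that $(p_i+q_i)/2$ is the metric midpoint (so at distance $\lVert p_i-q_i\rVert/2$ from $p_i$) and that the rounded points lie within $1/2$ of it, then apply the triangle inequality and conclude strict decrease from $\lVert p-q\rVert\geq 2$; the paper simply asserts the $1/2$-bound that you justify via $\lVert\lfloor w\rfloor-\lceil w\rceil\rVert\leq 1$, and the integrality sharpening to $\lceil\lVert u-v\rVert/2\rceil$ is harmless but not needed. One small caution: in your final argument for $p\sqcap q\notin\{p,q\}$, the inference ``$(\lceil w_i\rceil+\lfloor w_i\rfloor)/2=(p_i+q_i)/2$ and $\lceil w_i\rceil=p_i$ force $\lfloor w_i\rfloor=q_i$'' is not automatic, since on $\mathbb{T}$ one endpoint together with the midpoint does not in general determine the other endpoint (e.g.\ $(1,s)$ and $(1,s')$ for distinct branches $s',s''\neq s$ both have midpoint $0$ with $(1,s)$); but you do not need it, because $\lVert p_i-q_i\rVert=2\lVert p_i-w_i\rVert=\lVert\lceil w_i\rceil-\lfloor w_i\rfloor\rVert\leq 1$ gives the contradiction directly, and in fact $p\sqcap q\notin\{p,q\}$ already follows from your displayed bounds applied to both $p$ and $q$, which is how the paper concludes.
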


\begin{proof}
Let $i$ be any index with $\lVert p_i-q_i\rVert\geq 2$.
By the definition we have $\lVert p_i - (p_i+q_i)/2\rVert = \lVert p_i-q_i\rVert / 2$.
Also we have $\lVert (p_i+q_i)/2-(p_i\sqcap q_i)\rVert \leq 1/2$.
Then
\begin{align*}
\lVert p_i - p_i\sqcap q_i\rVert &\leq \lVert p_i - (p_i+q_i)/2\rVert + \lVert (p_i+q_i)/2 - (p_i\sqcap q_i)\rVert\\
    &\leq \lVert p_i-q_i\rVert/2 +1/2 < \lVert p_i-q_i\rVert.
\end{align*}
Similarly, it holds that $\lVert p_i - p_i\sqcup q_i\rVert < \lVert p_i-q_i\rVert$. These imply the lemma.
\end{proof}

\begin{proof}[Proof of Theorem~\ref{thm:sensitivity}]
Define an edge-cost $d_{1/2}:E\rightarrow \mathbb{Z}/2$ by
\begin{equation}
    d_{1/2}(e):=\begin{dcases*}
        d(e_0)-1/2 & if $e=e_0$,\\
        d(e) & otherwise.
    \end{dcases*}
\end{equation}
Our proof consists of two parts: we show that
(i) there exists a minimizer $q\in \mathbb{G}^n$ of $h_{d_{1/2},M}$ satisfying $\lVert p-q\rVert\leq 1$,
and 
(ii) there exists a minimizer $r\in\mathbb{G}^n$ of $h_{d_1,M}$ satisfying $\lVert q-r\rVert\leq 1$.

\vspace{5pt}
\noindent (i) Suppose not.
Let $q\in\mathbb{G}^n$ be the minimizer of $h_{d_{1/2},M}$ having the minimum distance from $p$.
Then $\lVert p-q\rVert\geq 2$ by the assumption.
We have $h_{d,M}(p\sqcap q), h_{d,M}(p\sqcup q) \leq h_{d,M}(q)=h_{d_{1/2},M}(q)$.
Indeed, if $h_{d,M}(p\sqcap q)>h_{d,M}(q)$,
then $h_{d,M}(p)+h_{d,M}(q)< h_{d,M}(p\sqcap q)+h_{d,M}(p \sqcup q)$ by the optimality of $p$; a contradiction to L-convexity of $h_{d,M}$.

The nonnegativity of $p\sqcap q$ and $p\sqcup q$ is clear.
Now we show that at least one of $p\sqcap q$ and $p\sqcup q$ satisfies \eqref{cond:pot_h} for $d_{1/2}$.
Then such one is a potential under the edge-cost $d_{1/2}$
(since it obviously satisfies \eqref{eq:dual_mnmf_d}).
Moreover, by Lemma~\ref{lem:d_property}, it is a minimizer of $h_{d_{1/2},M}$ with a smaller distance from $p$ than $q$; a contradiction.

First we observe that both $p\sqcap q$ and $p\sqcup q$ satisfy \eqref{cond:pot_h} for $e\neq e_0$ since both of them are potentials under $d$.
For $e=e_0=ij$, we have $\pi(p_i,p_j)=2d(ij)$ since $p$ is not a potential under $d_{1/2}$.
Also we have $\pi(q_i,q_j)\leq 2d(ij)-1$ by the definition.
Let $p':=(p+q)/2$.
By Lemma \ref{lem:property_h} (\ref{p_h_conv}), we obtain $\pi(p'_i,p'_j)\leq 2d(ij)-1/2$.
If $\pi(p'_i,p'_j)< 2d(ij)-1$, then $\pi((p\sqcap q)_i,(p\sqcap q)_j), \pi((p\sqcup q)_i,(p\sqcup q)_j)\leq 2d(ij)-1$
by Lemma \ref{lem:property_h} (\ref{p_h_int})(\ref{p_h_near}).
Thus both $p\sqcap q$ and $p\sqcup q$ satisfy \eqref{cond:pot_h} for $d_{1/2}$.
If $\pi(p'_i,p'_j)\geq 2d(ij)-1$, then $\pi(p'_i,p'_j)=2d(ij)-1/2$ or $2d(ij)-1$.
The condition of Lemma \ref{lem:property_h} (\ref{p_h_sym}) is satisfied since $d(ij)\geq 2$.
Hence at least one of $p\sqcap q$ and $p\sqcup q$ satisfies \eqref{cond:pot_h} for $d_{1/2}$.

\vspace{5pt}
\noindent (ii) The proof is almost the same as (i).
The difference lies in that $h_{d_{1/2},M}$ is not an L-convex function since $d_{1/2}$ is not integral.
But we can show that for a potential $r$ under the edge cost $d_1$, 
both $q\sqcap r$ and $q\sqcup r$ satisfy \eqref{cond:pot_h} for $d_{1/2}$.
Indeed, by Lemma \ref{lem:property_h} (\ref{p_h_conv})(\ref{p_h_near}) and \eqref{cond:pot_h} we have
\begin{align*}
\pi((q\sqcap r)_i,(q\sqcap r)_j)&\leq 1+\frac{(\pi(q_i,q_j)+\pi(r_i,r_j))}{2}\\
    &\leq 1+\frac{2d(ij)-1+2d(ij)-2}{2}\leq 2d(ij)-\frac{1}{2}
\end{align*}
for $e_0=ij$, and thus $\pi((q\sqcap r)_i,(q\sqcap r)_j)\leq 2d(ij)-1$ by Lemma \ref{lem:property_h} (\ref{p_h_int}).
Similarly, $\pi((q\sqcup r)_i,(q\sqcup r)_j)\leq 2d(ij)-1$.
For other edges, since both $q\sqcap r$ and $q\sqcup r$ are potentials under $d$, they satisfy \eqref{cond:pot_h} for $d_{1/2}$.
Hence we have $h_{d_{1/2},M}(q\sqcap r),h_{d_{1/2},M}(q\sqcup r)<\infty$.

Let $r\in\mathbb{G}^n$ be the minimizer of $h_{d_1,M}$ having the minimum distance from $q$.
Suppose that $\lVert q-r\rVert\geq 2$.
By the linearity of $\sum_{i\in V\setminus S}c_i y_i$,
we have $h_{d_{1/2},M}(q)+h_{d_{1/2},M}(r)=h_{d_{1/2},M}(q\sqcap r)+h_{d_{1/2},M}(q\sqcup r)$.
Then it holds that $h_{d_{1/2},M}(q\sqcap r),h_{d_{1/2},M}(q\sqcup r)\leq h_{d_{1/2},M}(r)$
by the optimality of $q$ under $d_{1/2}$.
We can show that at least one of $q\sqcap r$ and $q\sqcup r$ satisfies \eqref{cond:pot_h} for $d_1$; its proof is almost the same as (1).
Hence by Lemma \ref{lem:d_property} it is a minimizer of $h_{d_1,M}$
with a smaller distance from $q$ than $r$; a contradiction.
\end{proof}

\subsection{Finding a steepest descent direction (Proof of Lemma \ref{lem:sdd_algo})}
\label{subsec:sdd}

In this subsection, we show that a steepest descent direction at the current potential $p=(x,y)\in \mathbb{G}^n$
can be obtained from a maximum violating cut of an instance $((U,\tilde{E};\partial),\underline{c},\overline{c},\beta)$,
where $\beta$ is defined by \eqref{def:beta} (and \eqref{def:beta_c}\eqref{def:ex_beta_c}).
The proof is almost a straightforward extension of that given in \cite{Hirai2018Dual},
but is technically and notationally simplified.

For $(Y,Z)\in 3^U$ and $i\in V$, let $(Y,Z)_i:=(Y\cap U_i,Z\cap U_i)$.
We say that a cut $(Y,Z)\in 3^U$ is \emph{movable} if it satisfies the following condition:
\begin{itemize}
\item For $s\in S$, it holds that $(Y,Z)_s=(\emptyset,\emptyset)$.
\item For $i\in V\setminus S$ with $y_i=0$, it holds that $(Y,Z)_i$ is one of
\begin{equation}
\label{list:movable_1}
    (U_i,\emptyset),\ (\{i^t\},\emptyset),\ (\{i^t\},U_i\setminus\{i^t\}),\ (\emptyset,\emptyset),
\end{equation}
where $t\in\{1,\dotsc,k\}$ if $x_i=0$, and $t\in\{0,+\}$ if $x_i\neq 0$.
\item For $i\in V\setminus S$ with $y_i>0$, it holds that $(Y,Z)_i$ is one of \eqref{list:movable_1} and
\begin{equation}
\label{list:movable_2}
    (\emptyset,U_i\setminus\{i^t\}),\ (\emptyset,U_i),
\end{equation}
where $t\in\{1,\dotsc,k\}$ if $x_i=0$, and $t\in\{0,+\}$ if $x_i\neq 0$.
\end{itemize}

Recall that a violating cut $(Y,Z)\in 3^U$ is called maximum
if $\kappa(Y,Z)-\beta(Y,Z)$ is maximum among all violating cuts.

\begin{lemma}
\label{lem:proc}
From a maximum violating cut $(Y,Z)\in 3^U$,
we can obtain a movable maximum violating cut by the following procedure:
\begin{enumerate}
\renewcommand{\labelenumi}{{\rm (\Alph{enumi}).}}
\item For each $s\in S$, if $Z\cap U_s\neq \emptyset$,
	then replace $Z$ by $Z\setminus U_s$.
\item For each $i\in V\setminus S$ with $y_i=0$ and $Y\cap U_i=\emptyset$,
	if $Z\cap U_i\neq\emptyset$, then replace $Z$ by $Z\setminus U_i$.
\item For each $i\in V\setminus S$ with $x_i=0$ and $\lvert Y\cap U_i\rvert\leq 1$,
    if $1\leq\lvert Z\cap U_i\rvert\leq k-2$, then replace $Z$ by $Z\setminus U_i$.
\item For each $i\in V\setminus S$ with $x_i=0$ and $\lvert Y\cap U_i\rvert\geq 2$, replace $(Y,Z)$ by $(Y\cup U_i,Z\setminus U_i)$.
\end{enumerate}
\end{lemma}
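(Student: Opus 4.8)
The plan is to prove the statement in two stages. First I would show that each of the operations (A)--(D) keeps $(Y,Z)$ in $3^U$ and leaves the quantity $\kappa(Y,Z)-\beta(Y,Z)$ unchanged; since this quantity equals the maximum value $\delta$ over all violating cuts and $\delta>0$ (as $(Y,Z)$ itself is violating), every cut produced along the way then has $\kappa-\beta=\delta$, hence is a maximum violating cut. Second I would verify node by node that the final cut is movable. To prepare the first stage, call the \emph{charge} of $v\in U$ the integer $+1$, $0$, or $-1$ according as $v\in Y$, $v\notin Y\cup Z$, or $v\in Z$, so that for $i'j'\in E_{=}$ the number $\langle\partial(i'j'),\chi_{Y,Z}\rangle$ is the sum of the charges of $i'$ and $j'$. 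Finiteness of $\kappa(Y,Z)$ (forced by $(Y,Z)$ being violating and $\beta$ finite) means exactly that every edge of $E_{=}$ has nonnegative charge-sum at its endpoints -- such edges have $\underline{c}=0$, $\overline{c}=\infty$ -- and that $s^{0}\notin Y$ for each terminal $s$, since otherwise the self-loop of $E_S$ at $s^{0}$ contributes $-\infty$. In particular every $E_{=}$-edge and every $E_S$-self-loop contributes $0$ to $\kappa(Y,Z)$. The crucial observation is that no operation lowers the charge of any node -- (A)--(C) only move nodes out of $Z$ (charge $-1\mapsto 0$) and (D) moves all of $U_i$ into $Y$ (charge $\mapsto +1$) -- so these properties, and the vanishing of all $E_{=}$- and $E_S$-contributions, persist throughout the procedure.

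With this in hand, each operation changes $\kappa-\beta$ by $0$ after a short local check, since an operation acting on $U_i$ (or $U_s$) can alter only the $E_{=}$-terms incident to $U_i$ (which remain $0$), the $E_-$-term of the arc $i^{0}i^{+}$ when $x_i\neq 0$, the $E_S$-term when $i$ is a terminal, and the summand of \eqref{def:beta} indexed by $i$ -- namely $\beta_{c_i}((Y,Z)_i)$ if $y_i=0$, or $\overline{\beta}_{c_i}((Y,Z)_i)$ if $y_i>0$ -- when $x_i=0$. For (A): removing $s^{0}$ from $Z$ leaves the $E_S$-term at $0$ (its capacity interval is $[0,\infty)$), and a terminal has no $\beta$-summand. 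For (B), which acts only when $y_i=0$ and $Y\cap U_i=\emptyset$: if $x_i\neq 0$, the endpoints of $i^{0}i^{+}$ have charge $\leq 0$ before and $0$ after, so its $E_-$-term -- which, as $\underline{c}=0$ there, is $0$ whenever that charge-sum is $\leq 0$ -- vanishes in both cases; if $x_i=0$, then $\beta_{c_i}(\emptyset,Z\cap U_i)=0=\beta_{c_i}(\emptyset,\emptyset)$ by \eqref{def:beta_c}. For (C), which acts only when $x_i=0$ and $\lvert Y\cap U_i\rvert\leq 1$: collapsing $Z\cap U_i$ from size in $[1,k-2]$ to $\emptyset$ changes neither $\beta_{c_i}((Y,Z)_i)$ nor $\overline{\beta}_{c_i}((Y,Z)_i)$, since by \eqref{def:beta_c} and \eqref{def:ex_beta_c} both have value $c_i$ whenever $\lvert Y\cap U_i\rvert=1$ and $\lvert Z\cap U_i\rvert\leq k-2$, and value $0$ whenever $Y\cap U_i=\emptyset$ and $\lvert Z\cap U_i\rvert\leq k-2$. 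For (D), which acts only when $x_i=0$ and $\lvert Y\cap U_i\rvert\geq 2$: both $\beta_{c_i}$ and $\overline{\beta}_{c_i}$ equal $2c_i$ at $(Y\cap U_i,Z\cap U_i)$ and at $(U_i,\emptyset)$. Since every operation only deletes from $Z$ or moves $U_i$ from $Z$ into $Y$, membership in $3^U$ is preserved, so the output is a maximum violating cut.

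For movability of the output I would argue node by node. For $s\in S$, operation (A) makes $(Y,Z)_s=(\emptyset,\emptyset)$, which is allowed (and $s^{0}\notin Y$ throughout). For $i\in V\setminus S$ with $x_i\neq 0$ one has $\lvert U_i\rvert=2$, and a direct inspection shows that every member of $3^{U_i}$ already appears in \eqref{list:movable_1}--\eqref{list:movable_2}, except $(\emptyset,Z\cap U_i)$ with $y_i=0$ and $Z\cap U_i\neq\emptyset$, which (B) turns into $(\emptyset,\emptyset)$. For $i\in V\setminus S$ with $x_i=0$: if $\lvert Y\cap U_i\rvert\geq 2$ then (D) yields $(U_i,\emptyset)$; if $\lvert Y\cap U_i\rvert=1$ then after (C) the set $Z\cap U_i$ is empty or of size $k-1$, so $(Y,Z)_i$ is $(\{i^{t}\},\emptyset)$ or $(\{i^{t}\},U_i\setminus\{i^{t}\})$; if $\lvert Y\cap U_i\rvert=0$ then (B) (when $y_i=0$) or (C) (when $y_i>0$) makes $Z\cap U_i$ empty, or -- possible only when $y_i>0$ -- leaves it of size $k-1$ or $k$, so $(Y,Z)_i$ is $(\emptyset,\emptyset)$, $(\emptyset,U_i\setminus\{i^{t}\})$, or $(\emptyset,U_i)$. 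In each case $(Y,Z)_i$ lies in the prescribed list, so the output is movable.

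I expect the main obstacle to be the value-preservation bookkeeping of the second stage: for every operation one must confirm that the change in the incident $E_-$- or $E_S$-term and the change in the relevant $\beta$-summand cancel exactly. This is where the precise thresholds in the operations matter -- the bound $\lvert Z\cap U_i\rvert\leq k-2$ in (C) is exactly the flat region of $\beta_{c_i}$ and $\overline{\beta}_{c_i}$, and the restriction $y_i=0$ in (B) is what makes collapsing a $(k-1)$-element set $Z\cap U_i$ harmless for $\beta_{c_i}$, whereas it would change $\overline{\beta}_{c_i}$ by $c_i$; this is also precisely the situation in which a non-movable cut such as $(\emptyset,U_i\setminus\{i^{t}\})$ with $y_i=0$ must be repaired. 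A minor point along the way is that operation (D), the only one enlarging $Y$, cannot destroy the finiteness of $\kappa$ at node-groups adjacent to $U_i$, which holds simply because enlarging $Y$ only raises charges.
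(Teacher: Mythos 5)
Your proposal is correct and follows essentially the same route as the paper's proof: show that each of (A)--(D) leaves $\kappa(Y,Z)-\beta(Y,Z)$ unchanged (using that all $E_=$- and $E_S$-contributions are $0$ and stay $0$ because charges never decrease, that the relevant $E_-$-term vanishes when $y_i=0$, and the flat regions of $\beta_{c_i}$, $\overline{\beta}_{c_i}$ in \eqref{def:beta_c}, \eqref{def:ex_beta_c}), and then verify node by node that the resulting cut is movable. Your charge bookkeeping just makes explicit the "does not change $\kappa(Y,Z)$" assertions that the paper states more tersely.
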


\begin{proof}
Let $(Y,Z)\in 3^U$ be a maximum violating cut.
Then we have $\kappa(Y,Z)>\beta(Y,Z)>-\infty$.
If $Y\cap U_s\neq \emptyset$ for $s\in S$, then $\kappa(Y,Z)=-\infty$; a contradiction.
Thus $Y\cap U_s=\emptyset$.
If $Z\cap U_s\neq\emptyset$ for $s\in S$,
then the removal of $s^0$ from $Z$ does not change $\kappa(Y,Z)$ (and $\beta(Y,Z)$).
Hence after the procedure (A),
$(Y,Z)$ is still a maximum violating cut and $(Y,Z)_s=(\emptyset,\emptyset)$.

Consider a nonterminal node $i\in V\setminus S$ with $x_i\neq 0$.
If $y_i>0$, then any pair of distinct subsets of $U_i$ satisfies the condition of movability.
So we assume $y_i=0$.
If $i^0\in Z$ and $i^+\notin Y$, then the removal of $i^0$ from $Z$ does not change $\kappa(Y,Z)$.
Similarly, if $i^+\in Z$ and $i^0\notin Y$,
then the removal of $i^+$ from $Z$ does not change $\kappa(Y,Z)$.
Hence after the procedure (B),
$(Y,Z)$ is still a maximum violating cut
and $(Y,Z)_i$ is one of \eqref{list:movable_1}.

Consider a nonterminal node $i\in V\setminus S$ with $x_i=0$ and $y_i=0$.
If $Y\cap U_i=\emptyset$ and $Z\cap U_i\neq\emptyset$,
then the removal of $Z\cap U_i$ from $Z$
does not change $\kappa(Y,Z)$ and $\beta(Y,Z)$ by \eqref{def:beta_c}.
Similarly, if $1\leq \lvert Z\cap U_i\rvert\leq k-2$,
then the removal of $Z\cap U_i$ from $Z$
does not change $\kappa(Y,Z)$ and $\beta(Y,Z)$.
If $2\leq \lvert Y\cap U_i\rvert\leq k-1$,
then replacing $Y$ with $Y\cup U_i$ and $Z$ with $Z\setminus U_i$
does not change $\kappa(Y,Z)$ and $\beta(Y,Z)$.
Hence after the procedure (B)(C)(D),
$(Y,Z)$ is still a maximum violating cut
and $(Y,Z)_i$ is one of \eqref{list:movable_1}.

Consider a nonterminal node $i\in V\setminus S$ with $x_i=0$ and $y_i>0$.
As with the previous case,
if $1\leq \lvert Z\cap U_i\rvert\leq k-2$,
then the removal of $Z\cap U_i$ from $Z$ does not change
$\kappa(Y,Z)$ and $\overline{\beta}(Y,Z)$ by \eqref{def:ex_beta_c}.
If $\lvert Y\cap U_i\rvert \geq 2$ and $Z\cap U_i=\emptyset$,
then replacing $Y$ with $Y\cup U_i$ does not change
$\kappa(Y,Z)$ and $\overline{\beta}(Y,Z)$.
Hence after the procedure (C)(D),
$(Y,Z)$ is still a maximum violating cut
and $(Y,Z)_i$ is one of \eqref{list:movable_1} and \eqref{list:movable_2}.
\end{proof}

Next we show that there is a bijection between a subset of all movable cuts
and a subset of neighbors $\mathcal{F}_{p}\cup \mathcal{I}_{p}$ of the current potential $p=(x,y)$;
recall \eqref{eq:neighbor} in Section~\ref{subsec:l_conv}.
Then it turns out that the movable maximum violating cut gives rise to a steepest descent direction at $p=(x,y)$; see Figure~\ref{fig:neighbor}.

\begin{figure}[t]
\centering
\includegraphics[scale=0.9]{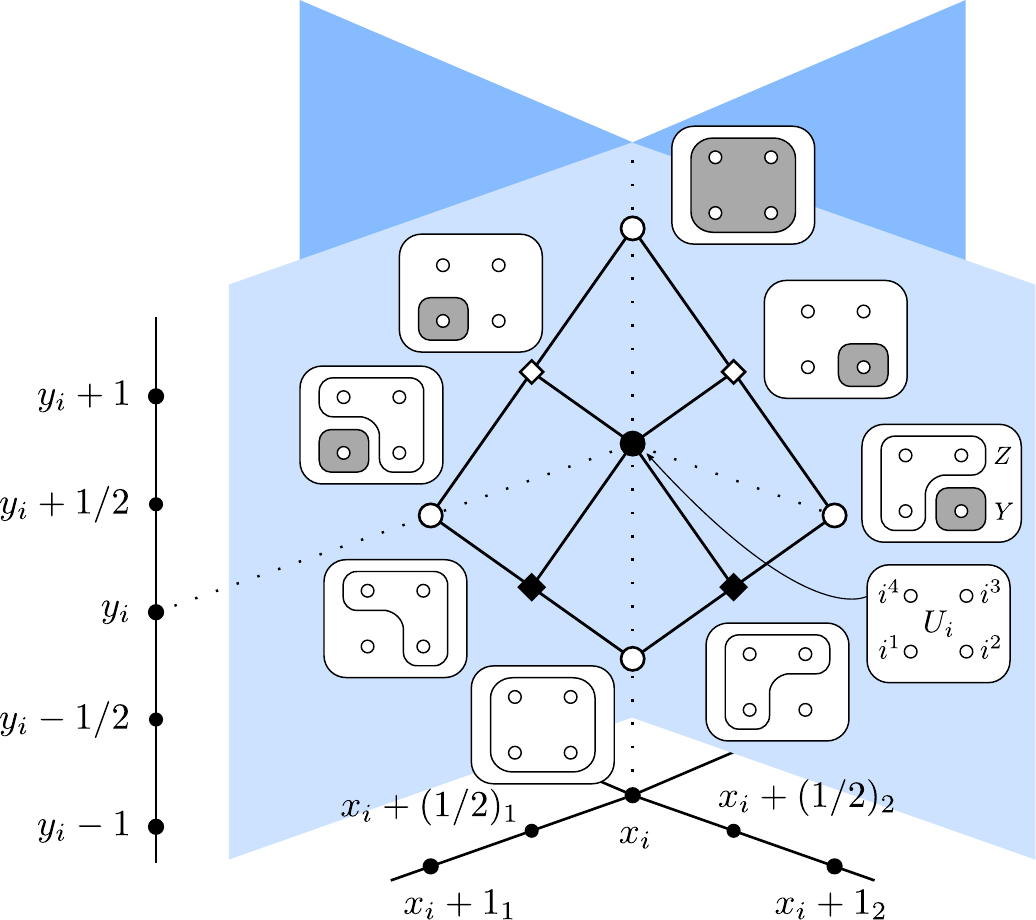}
\caption{The correspondence between $\mathcal{M}_{\mathcal{F},i}$ and $\mathcal{F}_i^+$.}
\label{fig:neighbor}
\end{figure}

For $x=(x,s)\in \mathbb{T}$, and $\varepsilon\in\{1/2, 1\}$, we use the following simplified notation.
\begin{align}
    x+\varepsilon_t&:=(x+\varepsilon, t)\quad \text{if $x\neq 0$ and $t=s$, or $x=0$,}\\
    x+\varepsilon_0&:=(x-\varepsilon, s)\quad \text{if $x\geq \varepsilon$.}
\end{align}
Let $p_i+(\varepsilon_t,\delta):=(x_i+\varepsilon_t,y_i+\delta)$.
We divide $\mathbb{G}:=\mathbb{G}_e\cup \mathbb{G}_p\cup \mathbb{G}'_e\cup \mathbb{G}'_o$ as
\begin{align}
    \mathbb{G}_e&=\{u\in \mathbb{G}\mid u:\text{even}\},\\
    \mathbb{G}_o&=\{u\in \mathbb{G}\mid u:\text{odd}\},\\
    \mathbb{G}'_e&=\{u\in \mathbb{G}\mid u:\text{non-integral, and}\ u+((1/2)_0,1/2):\text{even}\},\\
    \mathbb{G}'_o&=\{u\in \mathbb{G}\mid u:\text{non-integral, and}\ u+((1/2)_0,1/2):\text{odd}\}.
\end{align}
In Figure~\ref{fig:twisted_grid}, the points $\mathbb{G}'_e$ and $\mathbb{G}'_o$ are filled in black and white, respectively.
Then let $U_\mathcal{I}$ and $U_\mathcal{F}$ be the subsets of $U$ defined by
\begin{align}
U_\mathcal{I}&:=\{i^0 \mid p_i\in \mathbb{G}'_e\} \cup \{i^+ \mid p_i\in \mathbb{G}'_o\}\cup \bigcup_{p_i\in \mathbb{G}_o} U_i,\\
U_\mathcal{F}&:=\{i^+ \mid p_i\in \mathbb{G}'_e\} \cup \{i^0 \mid p_i\in \mathbb{G}'_o\}\cup \bigcup_{p_i\in \mathbb{G}_e} U_i.
\end{align}
A movable cut $(Y,Z)$ is said to be \emph{$\mathcal{F}$-movable} if $Y\cap U_\mathcal{I}=Z\cap U_\mathcal{I}=\emptyset$,
and be \emph{$\mathcal{I}$-movable} if $Y\cap U_\mathcal{F}=Z\cap U_\mathcal{F}=\emptyset$.
We define the set $\mathcal{M}_\mathcal{F},\mathcal{M}_\mathcal{I}\subseteq 3^U$ by
\begin{align}
    \mathcal{M}_\mathcal{F}&:=\{(Y,Z)\in 3^U\mid (Y,Z): \text{$\mathcal{F}$-movable cut}\},\\
    \mathcal{M}_\mathcal{I}&:=\{(Y,Z)\in 3^U\mid (Y,Z): \text{$\mathcal{I}$-movable cut}\}.
\end{align}
For $(Y,Z)\in \mathcal{M}_\mathcal{F}\cup\mathcal{M}_\mathcal{I}$, let
\begin{equation}
\label{eq:map_nonterminal}
p^{Y,Z}(i):=\begin{dcases*}
    p_i+((1/2)_t,1/2) & if $(Y,Z)_i=(\{i^t\},\emptyset)$, \\
    p_i+((1/2)_t,-1/2) & if $(Y,Z)_i=(\emptyset,U_i\setminus\{i^t\})$, \\
    p_i+(1_t,0) & if $(Y,Z)_i=(\{i^t\},U_i\setminus\{i^t\})$, \\
    p_i+(0,1) & if $(Y,Z)_i=(U_i,\emptyset)$, \\
    p_i+(0,-1) & if $(Y,Z)_i=(\emptyset,U_i)$, \\
    p_i & if $(Y,Z)_i=(\emptyset,\emptyset)$
\end{dcases*}
\end{equation}
for each nonterminal node $i\in V\setminus S$, and $p^{Y,Z}(s):=((M,s),0)$ for each terminal $s\in S$.
Recall that $(x',y')\in \mathbb{G}^n$ is said to be nonnegative if $y'_i\geq 0$ for all $i\in V$.
Let $\mathcal{F}^+_{p}:=\{q\in\mathcal{F}_{p}\mid q:\text{nonnegative}\}$
and $\mathcal{I}^+_{p}:=\{q\in\mathcal{I}_{p}\mid q:\text{nonnegative}\}$.

\begin{lemma}
\label{lem:bisubcut_representation}
\begin{enumerate}
\renewcommand{\labelenumi}{{\rm (\arabic{enumi}).}}
\item The map $(Y,Z)\mapsto p^{Y,Z}$ is a bijection from $\mathcal{M}_\mathcal{F}$ to $\mathcal{F}^+_{p}$
    and from $\mathcal{M}_\mathcal{I}$ to $\mathcal{I}^+_{p}$.
\item For $(Y,Z)\in \mathcal{M}_\mathcal{F}\cup\mathcal{M}_\mathcal{I}$, it holds
\begin{equation}
\label{eq:bisubcut_representation}
h(p^{Y,Z})-h(p)=-\frac{\kappa(Y,Z)-\beta(Y,Z)}{2},
\end{equation}
where $h$ is defined in \eqref{eq:h}.
\item Let $(Y_\mathcal{F},Z_\mathcal{F}):=(Y\cap U_\mathcal{F},Z\cap U_\mathcal{F})$
    and $(Y_\mathcal{I},Z_\mathcal{I}):=(Y\cap U_\mathcal{I},Z\cap U_\mathcal{I})$ for a movable cut $(Y,Z)\in 3^U$.
    Then $(Y_\mathcal{F},Z_\mathcal{F})$ is $\mathcal{F}$-movable, $(Y_\mathcal{I},Z_\mathcal{I})$ is $\mathcal{I}$-movable,
    and they satisfy
\begin{equation}
    \kappa(Y,Z)-\beta(Y,Z)
    =\kappa(Y_\mathcal{F},Z_\mathcal{F})-\beta(Y_\mathcal{F},Z_\mathcal{F})
    +\kappa(Y_\mathcal{I},Z_\mathcal{I})-\beta(Y_\mathcal{I},Z_\mathcal{I}).
\end{equation}
Conversely, for any $\mathcal{F}$-movable cut $(Y',Z')$ and $\mathcal{I}$-movable cut $(Y'',Z'')$,
the cut $(Y'\cup Y'',Z'\cup Z'')$ is a movable cut.
\end{enumerate}
\end{lemma}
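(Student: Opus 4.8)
The plan is to reduce all three parts to a node-by-node verification. The point is that membership in $\mathcal{F}_p$ or $\mathcal{I}_p$, nonnegativity, movability, and $\mathcal{F}$-/$\mathcal{I}$-movability are all \emph{product conditions} over the blocks $U_i$ (they only constrain the restrictions $(Y,Z)_i:=(Y\cap U_i,Z\cap U_i)$ and the value $p^{Y,Z}(i)$), and that $h(p^{Y,Z})-h(p)=\sum_{i\in V\setminus S}c_i\,(y^{Y,Z}_i-y_i)$ where $y^{Y,Z}_i-y_i$ is the $y$-displacement read off from \eqref{eq:map_nonterminal} at $i$. For the cut function, note that on a movable cut the only edges of $\tilde E$ joining two distinct blocks $U_i,U_j$ are those of $E_=$, which carry capacities $0$ and $\infty$; such an edge $e$ therefore contributes $0$ to $\kappa(Y,Z)$ when $\langle\partial e,\chi_{Y,Z}\rangle\ge 0$ and $-\infty$ otherwise. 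Hence, whenever $\kappa(Y,Z)>-\infty$, both $\kappa$ and (by \eqref{def:beta}) $\beta$ are separable, $\kappa(Y,Z)-\beta(Y,Z)=\sum_i(\kappa_i-\beta_i)((Y,Z)_i)$, where $\kappa_i$ only involves the edge $i^0i^+\in E_-$ if $x_i\neq 0$ and is $0$ if $x_i=0$, and $\beta_i$ is $\beta_{c_i}$, $\overline\beta_{c_i}$, or (if $x_i\neq 0$) the zero function.

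For (1), I would fix $i$ and run through the local types of $p_i$ --- even integral, odd integral, non-integral with $p_i\in\mathbb{G}'_e$, non-integral with $p_i\in\mathbb{G}'_o$ --- each further split by whether $x_i=0$ (so ``forward'' directions range over branches $t\in\{1,\dots,k\}$) or $x_i\neq 0$ (so $t\in\{0,+\}$) and by whether $y_i=0$ or $y_i>0$. In each case one writes down $\mathcal{F}^+_{p_i}$ and $\mathcal{I}^+_{p_i}$ explicitly, using that odd points are maximal and even points minimal in $\preceq$ and that a non-integral point lies strictly between its two odd ``upper'' corners and its two even ``lower'' corners; the definitions of $\mathbb{G}'_e$ and $\mathbb{G}'_o$ are exactly what make the diagonal moves $p_i+((1/2)_t,\pm 1/2)$ land on odd (hence $\succeq p_i$) resp. even (hence $\preceq p_i$) corners coherently. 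One then checks that the $\mathcal{F}$-movable options at $i$ --- movable options whose $(Y,Z)_i$ avoids the part of $U_i$ lying in $U_\mathcal{I}$ (all of $U_i$ if $p_i$ odd, the single node $i^0$ if $p_i\in\mathbb{G}'_e$, $i^+$ if $p_i\in\mathbb{G}'_o$, none if $p_i$ even) --- are carried by \eqref{eq:map_nonterminal} bijectively onto $\mathcal{F}^+_{p_i}$, and symmetrically for $\mathcal{I}$. Because the relevant conditions are products over $i$, these local bijections assemble into the global ones; terminals are trivial since $(Y,Z)_s=(\emptyset,\emptyset)$ and $p^{Y,Z}(s)=p_s$.

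Part (2) is then the finite check that $(\kappa_i-\beta_i)((Y,Z)_i)=-2c_i\,(y^{Y,Z}_i-y_i)$ for every local type of $p_i$ and every $\mathcal{F}$- or $\mathcal{I}$-movable $(Y,Z)_i$, reading $\kappa_i$ off $\partial,\underline c,\overline c$ and $\beta_i$ off \eqref{def:beta_c}--\eqref{def:ex_beta_c}; summing over $i$ yields \eqref{eq:bisubcut_representation} on the movable cuts with $\kappa(Y,Z)>-\infty$. The remaining case $\kappa(Y,Z)=-\infty$ must be matched with $h(p^{Y,Z})=+\infty$: one shows that, for $\mathcal{F}$- or $\mathcal{I}$-movable $(Y,Z)$, $p^{Y,Z}$ fails \eqref{eq:dual_mnmf_c} on an edge $ij$ precisely when $ij\in E_=$ and the associated edge $e$ of $\tilde E$ has $\langle\partial e,\chi_{Y,Z}\rangle<0$; here a parity argument --- on $ij\notin E_=$ one has $\pi(p_i,p_j)\le 2d_{ij}-1$, with slack at least $2$ when $p_i,p_j$ have equal ``parity'' and an odd coordinate unable to move at all in $\mathcal{F}_p$ --- shows \eqref{eq:dual_mnmf_c} is preserved on the non-tight edges. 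For (3), I would use that $U$ is the disjoint union of the blocks $U_i$, each of which lies wholly in $U_\mathcal{F}$, wholly in $U_\mathcal{I}$, or has one node in each; hence $(Y_\mathcal{F},Z_\mathcal{F})_i$ and $(Y_\mathcal{I},Z_\mathcal{I})_i$ partition $(Y,Z)_i$. A per-type inspection (the only nontrivial case being $p_i\in\mathbb{G}'_e\cup\mathbb{G}'_o$) shows each of the two parts is a valid $\mathcal{F}$-, resp.\ $\mathcal{I}$-, movable option at $i$ and that the $y$-displacements add, $y^{Y,Z}_i-y_i=(y^{Y_\mathcal{F},Z_\mathcal{F}}_i-y_i)+(y^{Y_\mathcal{I},Z_\mathcal{I}}_i-y_i)$; combining with (2) gives the additivity of $\kappa-\beta$. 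The converse statement is immediate from the local description, since on each block at most one of $(Y',Z')$, $(Y'',Z'')$ is nontrivial.

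The step I expect to be the main obstacle is (1): getting the poset combinatorics of the twisted grid $\mathbb{G}$ right in every case, in particular at the origin of $\mathbb{T}$ (where one ``forward'' step splits into $k$ branch-directions) and at non-integral points (where the $\mathbb{G}'_e$/$\mathbb{G}'_o$ dichotomy determines whether moving toward or away from the origin goes up or down in $\preceq$), and reconciling this with the $U_\mathcal{F}/U_\mathcal{I}$-splitting. A secondary difficulty is the $\pm\infty$ bookkeeping in (2): showing that \eqref{eq:dual_mnmf_c} is preserved on non-tight edges needs the parity observation and is easy to slip up on.
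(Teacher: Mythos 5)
Your plan for parts (1) and (2) matches the paper's proof: a node-by-node case analysis for the bijection, the finite identity checked block-by-block against the $y$-displacements, and the matching of $h=+\infty$ with $\kappa=-\infty$ via the tight/non-tight edge dichotomy and a parity argument on non-tight edges. The gap is in part (3). Your opening claim that ``whenever $\kappa(Y,Z)>-\infty$, $\kappa-\beta$ is separable into block terms $\sum_i(\kappa_i-\beta_i)((Y,Z)_i)$ with $\kappa_i$ involving only the $E_-$ edge'' is only valid for evaluating the \emph{given} cut: the $E_=$ edges (lower capacity $0$, upper capacity $\infty$) join two \emph{different} blocks, so their contribution ($0$ or $-\infty$) depends jointly on what happens at both endpoints and does not decompose. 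Consequently, ``combining with (2)'' does not give the additivity $\kappa(Y,Z)-\beta(Y,Z)=(\kappa-\beta)(Y_\mathcal{F},Z_\mathcal{F})+(\kappa-\beta)(Y_\mathcal{I},Z_\mathcal{I})$: an $E_=$ edge $i^+j^0$ with, say, $i^+\in Z$ and $j^0\in Y$ contributes $0$ to $\kappa(Y,Z)$, but if $i^+$ and $j^0$ were to lie on opposite sides of the partition $\{U_\mathcal{F},U_\mathcal{I}\}$, it would contribute $-\infty$ to one of the restricted cuts and the claimed identity would fail (finite left-hand side, $-\infty$ on the right). The missing ingredient, which is the bulk of the paper's argument for (3), is a parity argument showing this split of endpoints cannot happen: for every $E_=$ edge one has $\pi(p_i,p_j)=2d_{ij}\geq 2$ even, and from the equal parity of $p_i$ and $p_j$ (or of their integral shifts $p_i+((1/2)_0,1/2)$, etc., in the non-integral cases) one deduces that the two relevant end-nodes always lie in the same part $U_\mathcal{F}$ or $U_\mathcal{I}$, so that $\langle\partial e,\chi_{Y,Z}\rangle<0$ iff the same holds for one of $\chi_{Y_\mathcal{F},Z_\mathcal{F}}$, $\chi_{Y_\mathcal{I},Z_\mathcal{I}}$. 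Your proposal never establishes this, so part (3) is not proved as written.

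A secondary, smaller slip: your justification of the converse direction in (3), ``on each block at most one of $(Y',Z')$, $(Y'',Z'')$ is nontrivial,'' is false when $p_i$ is non-integral, since then $U_i=\{i^0,i^+\}$ has one node in $U_\mathcal{F}$ and one in $U_\mathcal{I}$ and both cuts can be nontrivial on $U_i$ (e.g.\ $(\{i^0\},\emptyset)$ and $(\emptyset,\{i^+\})$, whose union is $(\{i^t\},U_i\setminus\{i^t\})$). The conclusion still holds because $y_i>0$ at non-integral $p_i$, so every such combination appears in the movable list \eqref{list:movable_1}--\eqref{list:movable_2}, but the argument needs to say this rather than rely on disjoint supports.
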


We first prove Lemma~\ref{lem:sdd_algo} by using Lemma~\ref{lem:bisubcut_representation}.

\begin{proof}[Proof of Lemma~\ref{lem:sdd_algo}]
For a potential $p$, we construct the bidirected network $((U,\tilde{E};\partial),\underline{c},\overline{c})$ as in Section \ref{subsec:ext_bidir},
and define a reducible bisubmodular function $\beta$ by \eqref{def:beta}.
By the assumption, there is no feasible bisubmodular flow of this instance.
Then by Theorem~\ref{thm:bisubmod_flow} and \eqref{eq:nodeflowing_ineq1}--\eqref{eq:nodeflowing_ineq6},
we can obtain a maximum violating cut $(Y^*,Z^*)\in 3^U$ in $O(\mathrm{SF}(kn,m,k))$ time.
Applying the procedure of Lemma~\ref{lem:proc}, we obtain a movable maximum violating cut $(\tilde{Y},\tilde{Z})$.
Then by Lemma \ref{lem:bisubcut_representation},
a potential $p^{\tilde{Y}_\mathcal{F},\tilde{Z}_\mathcal{F}}$ is a minimizer of $h$ over $\mathcal{F}_{p}$,
and a potential $p^{\tilde{Y}_\mathcal{I},\tilde{Z}_\mathcal{I}}$ is a minimizer of $h$ over $\mathcal{I}_{p}$.
Thus we can obtain a steepest descent direction at $p$
by comparing $h(p^{\tilde{Y}_\mathcal{F},\tilde{Z}_\mathcal{F}})$ and $h(p^{\tilde{Y}_\mathcal{I},\tilde{Z}_\mathcal{I}})$.
\end{proof}

\begin{proof}[Proof of Lemma~\ref{lem:bisubcut_representation}]
(1) We claim that \eqref{eq:map_nonterminal} is a bijection from
$\mathcal{M}_{\mathcal{F},i}:=\{(Y,Z)_i\mid (Y,Z): \text{$\mathcal{F}$-movable}\}$
to $\mathcal{F}^+_{i}:=\{q\in \mathbb{G}\mid q\succeq p_i,\ q:\text{nonnegative}\}$.
Suppose that $p_i$ is odd, which implies $U_i\subseteq U_\mathcal{I}$.
Then it holds that $\mathcal{M}_{\mathcal{F},i}=\{(\emptyset,\emptyset)\}$ and $\mathcal{F}^+_{i}=\{p_i\}$.
Thus the claim follows.
Suppose that $p_i$ is even, which implies $U_i\subseteq U_\mathcal{F}$.
Then we can check the claim from Figure~\ref{fig:neighbor} for the case $y_i>0$.
For the case $y_i=0$, we can see the correspondence similarly.
Suppose that $p_i$ is non-integral. Note that in this case we have $y_i>0$.
If $p_i\in \mathbb{G}'_e$, we have $\mathcal{M}_{\mathcal{F},i}=\{(\emptyset,\emptyset),(\{i^+\},\emptyset),(\emptyset,\{i^+\})\}$,
and $\mathcal{F}^+_i=\{p_i,p_i+((1/2)_+,1/2),p_i+((1/2)_0,-1/2)\}$.
If $p_i\in \mathbb{G}'_o$, we have $\mathcal{M}_{\mathcal{F},i}=\{(\emptyset,\emptyset),(\{i^0\},\emptyset),(\emptyset,\{i^0\})\}$,
and $\mathcal{F}^+_i=\{p_i,p_i+((1/2)_0,1/2),p_i+((1/2)_+,-1/2)\}$.
Thus the claim follows.
We can similarly show that \eqref{eq:map_nonterminal} is a bijection between
$\mathcal{M}_{\mathcal{I},i}:=\{(Y,Z)_i\mid (Y,Z):\text{$\mathcal{I}$-movable}\}$
and $\mathcal{I}^+_i:=\{q\in\mathbb{G}\mid q\preceq p_i,\ q:\text{nonnegative}\}$.

\vspace{5pt}
\noindent (2) Let $p':=(x',y'):=p^{Y,Z}$.
First we show that $h(p')<\infty$ (i.e., $p'$ is a potential) if and only if $\kappa(Y,Z)>-\infty$.

Suppose that $h(p')<\infty$.
Consider an edge $e\in E_=$.
Then by $d_e>0$, there exists two nodes $i$ and $j$ with $x_j\neq 0$ such that $e=i^t j^0$
($t\in \{1,\dotsc,k\}$ if $x_i=0$, and $t\in \{0,+\}$ if $x_i\neq 0$).
We show that if one of $i^t$ and $j^0$ belongs to $Z$, then the other belongs to $Y$.
Suppose that $i^t\in Z$.
Recall that $\pi(p_i,p_j):=\mathrm{dist}(x_i,x_j)-y_i-y_j$.
By the definition of $E_=$,
we have $\pi(p_i,p_j) =2d_{ij}$.
By $i^t\in Z$ and \eqref{eq:map_nonterminal}, $x'_i$ goes away from $x_j$ or $y'_i$ decreases,
and hence $\pi(p'_i,p_j)=\pi(p_i,p_j)+1$.
Since $p'$ is also a potential, we must have $\pi(p'_i,p'_j)=\pi(p'_i,p_j)-1$.
Thus $p'_j$ must be $p_j+((1/2)_0,1/2)$, $p_j+(1_0,0)$, or $p_j+(0,1)$.
In all cases, $j^0$ belongs to $Y$.
In the case $j^0\in Z$, we can also prove $i^t\in Y$ similarly.
Now it follows that $\kappa(Y,Z)>-\infty$ from the above and $Y\cap U_s=Z\cap U_s=\emptyset$ for each $s\in S$.

Conversely, suppose that $\kappa(Y,Z)>-\infty$.
Since $p'$ is nonnegative, it suffices to show that $\pi(p'_i,p'_j)\leq 2d_{ij}$ for each edge $ij\in E$.
We assume that $ij\in E$ is replaced to $i^+j^0\in \tilde{E}$.
Since both replacements $p_i\rightarrow p'_i$ and $p_j\rightarrow p'_j$ change the value of $\pi(p_i,p_j)$ at most one,
we may consider the case of $\pi(p_i,p_j)=2d_{ij}$ or $2d_{ij}-1$ by the integrality (Lemma \ref{lem:property_h} (\ref{p_h_int})).
For the case $\pi(p_i,p_j)=2d_{ij}$, the argument is similar to the above.
Indeed, if $\pi(p'_i,p_j)-\pi(p_i,p_j)=1$, then $i^+\in Z$ and hence $j^0\in Y$.
So we have $\pi(p'_i,p'_j)-\pi(p'_i,p_j)=-1$.

For the case $\pi(p_i,p_j)=2d_{ij}-1$, it is sufficient to show that $\pi(p'_i,p'_j)\neq 2d_{ij}+1$.
Suppose not.
Then $p_i\neq p'_i$, $p_j\neq p'_j$.
Suppose that both $p_i$ and $p_j$ are integral.
By definitions, $\pi(p_i,p_j)$ and $\lVert p_i-p_j\rVert$ have the same parity (odd or even),
and thus $\lVert p_i-p_j\rVert$ is odd.
Then exactly one of $p_i$ and $p_j$ is odd and the other is even.
Hence at least one of $p_i=p'_i$ and $p_j=p'_j$ must hold; a contradiction.
Suppose that $p_i$ is non-integral and $p_j$ is integral.
Then it holds $p'_i=p_i+((1/2)_0,-1/2)$.
Since $\pi(p'_i,p_j)$ is even, by the same argument as above, $p'_i$ has the same parity as $p_j$.
This implies $p'_j=p_j$; a contradiction.
In the case where $p_i$ is integral and $p_j$ is non-integral, a contradiction can be derived similarly.
Suppose that both $p_i$ and $p_j$ are non-integral.
Then it holds that $p'_i=p_i+((1/2)_0,-1/2)$ and $p'_j=p_j+((1/2)_+,-1/2)$.
Also by $\pi(p'_i,p'_j)$ is odd, one of $p'_i$ and $p'_j$ is odd and the other is even; a contradiction.

For other cases, i.e., $ij\in E$ is replaced to $i^sj^0$ or $i^0j^0$, we can similarly show that $\pi(p'_i,p'_j)\leq 2d_{ij}$.
In fact, under the restriction of $\mathbb{G}$ in $((\mathbb{R}_+\times\{s'\})\cup (\mathbb{R}_+\times\{s\}))\times\mathbb{R}\subseteq \mathbb{T}\times\mathbb{R}$,
where $p_i=((x_i,s'),y_i)$ and $p_j=((x_j,s),y_j)$, we can say the above argument for these cases by a slight modification.
Thus we have $h(p')<\infty$.

Finally we show \eqref{eq:bisubcut_representation} for $(Y,Z)\in \mathcal{M}_\mathcal{F}\cup\mathcal{M}_\mathcal{I}$
with $h(p')<\infty$ and $\kappa(Y,Z)>-\infty$.
By the definition of $((U,\tilde{E};\partial),\underline{c},\overline{c})$,
we can check that twice the value of the right hand side of \eqref{eq:bisubcut_representation} is equal to
\begin{equation}
\sum_{i:x_i=0,y_i=0} \beta_{c_i}((Y,Z)_i)+\sum_{i:x_i=0,y_i>0} \overline{\beta}_{c_i}((Y,Z)_i)
+\sum_{i:x_i\neq 0} c_i(\lvert Y\cap U_i\rvert - \lvert Z\cap U_i \rvert).
\end{equation}
Indeed, for $i\in V$ with $x_i\neq 0$,
the case $\lvert Z\cap U_i\rvert > \lvert Y\cap U_i\rvert$ only occurs when $y_i>0$.
Thus $\underline{c}(i^0i^+)=c_i$ by the definition.
We can also check that each term is equal to $2c_i(y'_i-y_i)$ by the definitions.
Thus we have \eqref{eq:bisubcut_representation}.

\vspace{5pt}
\noindent (3) Let $(Y,Z)\in 3^U$ be a movable cut.
For a nonterminal node $i\in V\setminus S$ with $p_i$ is integral, $U_i$ is completely contained in $U_\mathcal{F}$ or $U_\mathcal{I}$.
We consider $i\in V\setminus S$ such that $p_i$ is non-integral.
Then it must hold $y_i>0$.
Thus any subset of $U_i$ satisfies the condition of movability.
Hence both $(Y_\mathcal{F},Z_\mathcal{F})$ and $(Y_\mathcal{I},Z_\mathcal{I})$ are movable.
The converse is obvious.

Finally, we show the equality.
It is clear that $\beta(Y,Z)=\beta(Y_\mathcal{F},Z_\mathcal{F})+\beta(Y_\mathcal{I},Z_\mathcal{I})$ by definition.
The contribution of edges in $E_-$ to the value of $\kappa(Y,Z)$ is
$-\sum_{i:x_i\neq 0}c_i(\lvert Y\cap U_i\rvert-\lvert Z\cap U_i\rvert)$; thus the equality is clear.
So it suffices to consider edges in $E_=$.
Fix an edge $i^+ j^0\in E_=$.
(For other cases, the similar argument holds.)
Since $\overline{c}(i^+j^0)=\infty$, we check that this capacity contributes to $\kappa(Y,Z)$
if and only if it contributes to $\kappa(Y_\mathcal{F},Z_\mathcal{F})$ or $\kappa(Y_\mathcal{I},Z_\mathcal{I})$.
We have $\mathrm{dist}(x_i,x_j)\geq \pi(p_i,p_j)=2d_{ij}\geq 2$ and $\pi(p_i,p_j)$ is even.
If $p_i$ and $p_j$ are both integral, then $p_i$ has the same parity (odd or even) as $p_j$
since the parities of $\pi(p_i,p_j)$ and $\lVert p_i-p_j\rVert$ are the same.
Thus $U_i\cup U_j$ is contained in $U_\mathcal{I}$ or $U_\mathcal{F}$.
If $p_i$ is non-integral and $p_j$ is integral,
then $p_i+((1/2)_0,1/2)$ has the same parity as $p_j$
since $\pi(p_i+((1/2)_0,1/2),p_j)=\pi(p_i,p_j)$ is even.
Thus $\{i^+\}\cup U_j$ is contained in $U_\mathcal{I}$ or $U_\mathcal{F}$.
If $p_i$ is integral and $p_j$ is non-integral,
we can similarly show that $U_i\cup \{j^0\}$ is contained in $U_\mathcal{I}$ or $U_\mathcal{F}$.
If $p_i$ and $p_j$ are both non-integral,
then both $p_i+((1/2)_0,1/2)$ and $p_j+((1/2)_+,1/2)$ have the same parity.
Thus $\{i^+,j^0\}$ is contained in $U_\mathcal{I}$ or $U_\mathcal{F}$.
In any cases, $\langle \partial(i^+j^0),\chi_{Y,Z}\rangle <0$ if and only if
$\langle \partial(i^+j^0),\chi_{Y_\mathcal{F},Z_\mathcal{F}}\rangle <0$ or
$\langle \partial(i^+j^0),\chi_{Y_\mathcal{I},Z_\mathcal{I}}\rangle <0$.
\end{proof}

\section*{Acknowledgements}
This work was partially supported by JSPS KAKENHI Grant Number JP17K00029.


\bibliography{mnmf.bib}

\begin{thebibliography}{10}
\providecommand{\url}[1]{\texttt{#1}}
\providecommand{\urlprefix}{URL }
\providecommand{\doi}[1]{https://doi.org/#1}

\bibitem{Ando1997Balanced}
Ando, K., Fujishige, S., Naitoh, T.: Balanced bisubmodular systems and
  bidirected flows. Journal of the Operations Research Society of Japan
  \textbf{40}(3),  437--447 (1997)

\bibitem{Babenko2008Scaling}
Babenko, M.A., Karzanov, A.V.: A scaling algorithm for the maximum
  node-capacitated multiflow problem. In: Proceedings of the 16th Annual
  European Symposium on Algorithms. pp. 124--135. Springer-Verlag, Berlin,
  Heidelberg (2008)

\bibitem{Babenko2012Min}
Babenko, M.A., Karzanov, A.V.: Min-cost multiflows in node-capacitated
  undirected networks. Journal of Combinatorial Optimization  \textbf{24}(3),
  202--228 (2012)

\bibitem{Cherkasski1977solution}
Cherkassky, B.V.: A solution of a problem of multicommodity flows in a network.
  Ekonomika i Matematicheskie Metody  \textbf{13},  143--151 (1977), in Russian

\bibitem{Frank1984Finding}
Frank, A.: Finding feasible vectors of {E}dmonds-{G}iles polyhedra. Journal of
  Combinatorial Theory, Series B  \textbf{36}(3),  221--239 (1984)

\bibitem{Frank2011Connections}
Frank, A.: Connections in Combinatorial Optimization. Oxford University Press,
  Oxford (2011)

\bibitem{Fujishige2005Submodular}
Fujishige, S.: Submodular Functions and Optimization. Elsevier, Amsterdam, 2nd
  edition edn. (2005)

\bibitem{Fujishige2000Algorithms}
Fujishige, S., Iwata, S.: Algorithms for submodular flows. IEICE Transactions
  on Information and Systems  \textbf{E83D}(3) (2000)

\bibitem{Fujishige1992New}
Fujishige, S., Zhang, X.: New algorithms for the intersection problem of
  submodular systems. Japan Journal of Industrial and Applied Mathematics
  \textbf{9},  369--382 (1992)

\bibitem{Garg2004Multiway}
Garg, N., Vazirani, V.V., Yannakakis, M.: Multiway cuts in node weighted
  graphs. Journal of Algorithms  \textbf{50}(1),  49--61 (2004)

\bibitem{Goldberg1997Scaling}
Goldberg, A.V., Karzanov, A.V.: Scaling methods for finding a maximum free
  multiflow of minimum cost. Mathematics of Operations Research
  \textbf{22}(1),  90--109 (1997)

\bibitem{Hirai2013Half}
Hirai, H.: Half-integrality of node-capacitated multiflows and tree-shaped
  facility locations on trees. Mathematical Programming, Series A
  \textbf{137}(1),  503--530 (2013)

\bibitem{Hirai2015L}
Hirai, H.: L-extendable functions and a proximity scaling algorithm for minimum
  cost multiflow problem. Discrete Optimization  \textbf{18},  1--37 (2015)

\bibitem{Hirai2017Discrete}
Hirai, H.: Discrete convex functions on graphs and their algorithmic
  applications. In: Fukunaga, T., Kawarabayashi, K. (eds.) Combinatorial
  Optimization and Graph Algorithms, Communications of NII Shonan Meetings, pp.
  67--100. Springer Nature, Singapore (2017)

\bibitem{Hirai2018Dual}
Hirai, H.: A dual descent algorithm for node-capacitated multiflow problems and
  its applications. ACM Transactions on Algorithms  \textbf{15}(1),
  15:1--15:24 (2018)

\bibitem{Hirai2018L}
Hirai, H.: L-convexity on graph structures. Journal of the Operations Research
  Society of Japan  \textbf{61}(1),  71--109 (2018)

\bibitem{Karzanov1979minimum}
Karzanov, A.V.: A minimum cost maximum multiflow problem. In: Combinatorial
  Methods for Flow Problems, pp. 138--156. Institute for System Studies, Moscow
  (1979), in Russian

\bibitem{Karzanov1991Maximization}
Karzanov, A.V.: Maximization over the intersection of two compatible
  greedy-polyhedra. Research Report 91732-OR, Institut f\"{u}r Diskrete
  Mathematik, Universit\"{a}t Bonn (1991)

\bibitem{Karzanov1994Minimum}
Karzanov, A.V.: Minimum cost multiflows in undirected networks. Mathematical
  Programming  \textbf{66}(1),  313--325 (1994)

\bibitem{Karzanov20070}
Karzanov, A.V.: On $0$, $\pm 1$ matrices, odd vectors, and bisubmodular
  polyhedra. Linear Algebra and its Applications  \textbf{422}(1),  17--21
  (2007)

\bibitem{Lovasz1976some}
Lov\'{a}sz, L.: On some connectivity properties of eulerian graphs. Acta
  Mathematica Academiae Scientiarum Hungaricae  \textbf{28}(1--2),  129--138
  (1976)

\bibitem{Murota2003Discrete}
Murota, K.: Discrete Convex Analysis. SIAM, Philadelphia (2003)

\bibitem{Nakamura1990Intersection}
Nakamura, M.: An intersection theorem for universal polymatroids: Universal
  polymatroids (ii). Scientific Papers of the College of Arts and Sciences
  University of Tokyo  \textbf{40}(2),  95--100 (1990)

\bibitem{Pap2007Some}
Pap, G.: Some new results on node-capacitated packing of {A}-paths. In:
  Proceedings of the Thirty-ninth Annual ACM Symposium on Theory of Computing.
  pp. 599--604. ACM (2007)

\bibitem{Pap2008Strongly}
Pap, G.: Strongly polynomial time solvability of integral and half-integral
  node-capacitated multiflow problems. EGRES Technical Report TR-2008-12,
  Egerv{\'a}ry Research Group (2008)

\bibitem{Schrijver2003Combinatorial}
Schrijver, A.: Combinatorial Optimization---Polyhedra and Efficiency.
  Springer-Verlag, Berlin (2003)

\bibitem{Vazirani2003Approximation}
Vazirani, V.V.: Approximation Algorithms. Springer-Verlag, Berlin, Heidelberg,
  New York (2003)

\end{thebibliography}

\end{document}